\DeclareMathAlphabet{\mathcal}{OMS}{cmsy}{m}{n}
\SetMathAlphabet{\mathcal}{bold}{OMS}{cmsy}{b}{n}
\renewcommand*{\backref}[1]{}\renewcommand*{\backrefalt}[4]{\ifcase #1 (Not cited.)\or (Cited on page~#2.)\else (Cited on pages~#2.)\fi}
\renewcommand{\a}{\alpha}
\renewcommand{\b}{\beta}
\newcommand{\g}{\gamma}
\renewcommand{\d}{\delta}
\renewcommand{\l}{\lambda}
\renewcommand{\L}{\Lambda}
\newcommand{\z}{w}			%%%%% choice of new "zeta"
\newcommand{\p}{\partial}
\newcommand{\om}{\omega}
\newcommand {\GS}{\mathfrak S}
\newcommand{\bv}{\mathbf v}
\newcommand {\bx}{\mathbf x}
\newcommand {\be}{\mathbf e}
\newcommand {\bz}{\mathbf z}
\newcommand {\by}{\mathbf y}
\newcommand {\bn}{\mathbf n}
\newcommand {\bxi}{\boldsymbol\xi}
\newcommand{\lu}{\langle}
\newcommand{\ru}{\rangle}
\newcommand{\CN}{\mathcal N}
\newcommand{\CB}{\mathcal B}
\newcommand{\CL}{\mathcal L}
\newcommand{\CT}{\mathcal T}
\newcommand{\CH}{\mathcal H}
\newcommand{\CA}{\mathcal A}
\newcommand{\CD}{\mathcal D}
\newcommand{\plainC}[1]{\textup{{\textsf{C}}}^{#1}}
\newcommand{\plainL}[1]{\textup{{\textsf{L}}}^{#1}}
\DeclareMathOperator{\tr}{{tr}}
\DeclareMathOperator{\op}{{Op}}
\spnewtheorem{thm}{Theorem}[section]{\bfseries}{\itshape}
\spnewtheorem{cor}[thm]{Corollary}{\bfseries}{\itshape}
\spnewtheorem{lem}[thm]{Lemma}{\bfseries}{\itshape}
\spnewtheorem{prop}[thm]{Proposition}{\bfseries}{\itshape}
\spnewtheorem{cond}[thm]{Condition}{\bfseries}{\itshape}
\spnewtheorem{rem}[thm]{Remark}{\bfseries}{\rmfamily}
\newcommand{\mfr}[2]{{\textstyle\frac{#1}{#2}}}
\newcommand{\N}{\mathbbm N}
\newcommand{\R}{ \mathbbm R}
\renewcommand{\i}{{\mathrm i}}
\newcommand{\ceq}{\coloneqq}
\begin{document}

\title*{R\'enyi entropies of the free Fermi gas\\ in multi-dimensional space at high temperature }
\author{Hajo~Leschke, Alexander V.~Sobolev, Wolfgang~Spitzer}
\authorrunning{H.~Leschke, A.\,V.~Sobolev, W.~Spitzer}
\institute{
	Hajo~Leschke \at Institut f\"ur Theoretische Physik, Universit\"at Erlangen--N\"urnberg, Staudtstra\ss e 7, 91058 Erlangen, Germany, \email{hajo.leschke@physik.uni-erlangen.de}
\and
	Alexander V.~Sobolev \at Department of Mathematics, University College London, Gower Street, London WC1E 6BT, United Kingdom, \email{a.sobolev@ucl.ac.uk}
\and
	Wolfgang~Spitzer \at Fakult\"at f\"ur Mathematik und Informatik, FernUniversit\"at in Hagen, Universit\"atsstra\ss e 1, 58097 Hagen, Germany, \email{wolfgang.spitzer@fernuni-hagen.de}
}
\numberwithin{equation}{section}
\maketitle
\centerline{\fbox{\emph{In memory of Harold Widom (1932--2021)}}}
\bigskip\bigskip
%
%%%%%%%%%%%%%%%%%%%%%%%%%%%%%%%%%%%%%%%%%%%%%
% Abstract
%%%%%%%%%%%%%%%%%%%%%%%%%%%%%%%%%%%%%%%%%%%%%
%
\abstract{
We study the local and (bipartite) entanglement R\'enyi entropies of the free Fermi gas in multi-dimensional Euclidean space $\R^d$ in thermal equilibrium. We prove positivity of the entanglement entropies with R\'enyi index $\g \leq1$ for all temperatures $T>0$. Furthermore, for general $\g > 0$  we establish the asymptotics of the entropies for large $T$ and large scaling parameter $\a >0$ for two different regimes -- for fixed chemical potential $\mu\in\R$ and also for fixed particle density $\rho>0$. In particular, we thereby provide the last remaining building block for complete proofs of 
our low- and high-temperature results presented (for $\g = 1$) in J.~Phys. A: Math.~Theor. \textbf{49}, 30LT04 (2016) [Corrigendum: \textbf{50}, 129501 (2017)], 
%\cite{LeSpSo2}
but being supported there only by the basic proof ideas.
} 

\bigskip
\noindent\textbf{Key words:}
{Non-smooth functions of Wiener--Hopf operators, asymptotic trace formulas, R\'enyi (entanglement) entropy of fermionic  equilibrium states}

\medskip
\noindent\textbf{Mathematics Subject Classification (2020):}
{Primary 47G30, 35S05; Secondary 47B10, 47B35}

\bigskip\noindent {\footnotesize Date of this version: 29 April 2023,\quad Preprint: \,\href{https://arxiv.org/abs/2201.11087}{arXiv: 2201.11087} [math-ph]}

\noindent {\footnotesize Print (in slightly different form): in the book series\\ \emph{Operator Theory: Advances and Applications} vol. 289, pp. 477-508, 2022}
\newpage

\section*{Contents}
\setcounter{minitocdepth}{2}
\dominitoc

%%%%%%%%%%%%%%%%%%%%%%%%%%%%%%%%%%%%%%%%%%%%
\section{Introduction}
%%%%%%%%%%%%%%%%%%%%%%%%%%%%%%%%%%%%%%%%%%%%

This section briefly describes the physical background, introduces the most important mathematical definitions, and provides a summary of the main results.

%%%%%%%%%%%%%%%%%%%%%%%%%%%%%%%%%%%%%%%%%%%%
\subsection{Physical background}
%%%%%%%%%%%%%%%%%%%%%%%%%%%%%%%%%%%%%%%%%%%%

Over the last decades the so-called entanglement entropy (EE) has become a useful single-number quantifier of non-classical correlations between subsystems of a composite quantum-mechanical system \cite{Amico08,HHHH09}. For example, one may imagine a macroscopically large system consisting of a huge number of particles in the state of thermal equilibrium at some (absolute) temperature $T\geq 0$. All the particles inside some bounded spatial region $\L$ may then be considered to constitute one subsystem and the particles outside of $\L$ another one. The corresponding EE, more precisely the spatially bipartite thermal EE, now quantifies, to some extent, how strongly these two subsystems are correlated ``across the interface'' between $\L$ and its complement. 

In the simplified situation where the particles do not dynamically interact with each other, such as in the ideal gas or, slightly more general, in the free gas, all possible correlations are entirely due to either the Bose--Einstein or the \mbox{(Pauli--)}Fermi--Dirac statistics by the assumed indistinguishability of the (point-like and spinless) particles.The present study is devoted to the latter case. Accordingly, we consider the free Fermi gas \cite{Fermi26,Balian92,Bratteli97} infinitely extended in the Euclidean space $\R^d$ of an arbitrary dimension $d\geq1$. Although the fermions neither interact with each other nor with any externally applied field, their EE remains a complicated function of the region $\L\subset\R^d$ which is difficult to study by analytical methods. In general one can only hope for estimates and asymptotic results for its (physically interesting) growth when $\L$ is replaced with $\a\L$ where the scaling parameter $\a>0$ becomes large. A decisive progress towards the understanding of the growth of the EE at $T=0$, in other words of the ground-state EE, is due to Gioev and Klich \cite{G06,GK06}. They observed, remarkably enough, that this growth is related to a conjecture of Harold Widom \cite{W82,W90} about the quasi-classical Szeg\H{o}-type asymptotics for traces of (smooth) functions of multi-dimensional versions of truncated Wiener--Hopf operators with discontinuous symbols. After Widom's conjecture had been proved by one of us \cite{Sobolev2013,S15} the gate stood open to confirm \cite{LSS14} the precise (leading) large-scale growth conjectured in \cite{GK06} and, in addition, to establish its extension from the von Neumann EE to the whole one-parameter family of (quantum) R\'enyi EE's.

In the present study we only consider the case of a true thermal state characterized by a \emph{strictly} positive temperature $T>0$ (and a chemical potential $\mu \in \R$ or, equivalently, a spatial particle-number density $\rho > 0$). On the one hand, the case $T>0$ is simpler, because the \emph{Fermi function} $E\mapsto 1/\big(1+\exp(E/T)\big)$ on the real line $\R$ is smooth in contrast to its zero-temperature limit, the Heaviside unit-step function. On the other hand, a reasonable definition of the EE is more complicated, because the thermal state is not a pure state, see \eqref{def:EE} below. Nevertheless, due to the presence of the ``smoothing parameter'' $T>0$ the leading asymptotic growth of the EE as $\a\to\infty$, is determined by an asymptotic coefficient again going back to Widom \cite{W60,Widom1985}, see also \cite{Rocca84,BuBu91}. We introduce it in \eqref{cbd:eq} below and denote it by $\CB$.

From a physical point of view it is interesting to study the scaling asymptotics $\a\to\infty$ as the temperature $T$ varies. The emerging double asymptotics of the EE and of the coefficient $\CB$ are not simple to analyze and hard to guess by heuristic arguments.
For low temperatures, that is, small $T>0$, this analysis has been performed in \cite{LeSpSo3} for $d=1$ and in \cite{S17} for $d\geq 2$ yielding a result consistent with that for $T=0$ in \cite{GK06,LSS14}. 

At high temperatures quantum effects become weaker and the free Fermi gas should exhibit properties of the corresponding classical free gas without correlations (for fixed particle density). In particular, the ideal Fermi gas \cite{Fermi26,Balian92} should behave like the Maxwell--Boltzmann gas, the time-honored ``germ cell'' of modern statistical mechanics. Hence the main purpose of our study is to determine the precise two-parameter asymptotics of the EE as $\a\to\infty$ and $T\to\infty$.   

%%%%%%%%%%%%%%%%%%%%%%%%%%%%%%%%%%%%%%%%%%%%
\subsection{Pseudo-differential operators and entropies}
%%%%%%%%%%%%%%%%%%%%%%%%%%%%%%%%%%%%%%%%%%%%

At first we introduce the translation invariant pseudo-differential operator\footnote{In \cite{LeSpSo3,Sobolev2019} the right-hand side of \eqref{eq:opa} is mistakenly multiplied by $(2\pi)^{d/2}$\,.}
\begin{equation}\label{eq:opa}
	\bigl(\op_\a(a) u\bigr)(\bx)
	\ceq\frac{\a^{d}}{(2\pi)^d}
	\iint_{\R^d\times \R^{d}} \mathrm{e}^{\i\a\bxi\boldsymbol{\cdot}(\bx-\by)} a(\bxi) 
	u(\by) \,\mathrm{d}\bxi \mathrm{d}\by\,,\quad \bx\in\R^d\,.
\end{equation} 
Here the smooth real-valued function $a$ is its underlying \emph{symbol}, $u$ is an arbitrary complex-valued Schwartz function, $\i$ denotes the imaginary unit, and $\a>0$ is the scaling parameter. Informally, one may think of $\op_\a(a)$ as the function $a(-(\i/\a)\boldsymbol{\nabla})$ of the gradient operator $\boldsymbol{\nabla}\ceq (\p_{x_1},\p_{x_2},\dots,\p_{x_d})$, that is, the vector of partial derivatives with respect to ${\bx}=(x_1, x_2,\dots, x_d)$. 

The main role will be played by the \emph{truncated Wiener--Hopf operator}  
\begin{equation*}%\label{WH:eq}
	W_\a(a, \L)\ceq\chi_\L \op_\a(a) \chi_\L\,,
\end{equation*}
where $\chi_\L$ is the (multiplication operator corresponding to the) indicator function of the ``truncating'' open set $\L\subset\R^d$. 
Clearly, for a bounded symbol $a$ the operators $\op_\a(a)$ and $W_\a(a; \L)$ are bounded on the Hilbert space $\plainL2(\R^d)$. Given a \emph{test function} $f: \R\to\R$, we are interested in the operator $f\big(W_\a(a, \L)\big) $ and in the operator difference 
\begin{equation}\label{Dalpha:eq}
	D_\a(a, \L; f) \ceq \chi_\L f\big(W_\a(a, \L)\big) \chi_\L - W_\a(f\circ a,\L)\,,
\end{equation} 
where the symbol $f\circ a$ is the composition of $f$ and $a$ defined by $(f\circ a)(\bxi)\ceq f\big(a(\bxi)\big)$.

For a bounded $\L$ and suitable $a$ and $f$  both operators on the right-hand side of \eqref{Dalpha:eq} belong individually to the trace class. Remarkably, its difference does so even for a large class of unbounded $\L$, see Condition \ref{domain:cond} and Proposition \ref{fixedT:prop} below. Our analysis of the scaling behavior of the entropies will be based on the asymptotics for the trace of $D_\a(a, \L; f)$ as $\a\to\infty$. The reciprocal parameter $\a^{-1}$ can be naturally viewed as the Planck constant, and hence the limit $\a\to\infty$ can be regarded as the quasi-classical limit. By a straightforward change of variables the operator \eqref{Dalpha:eq} is seen to be unitarily equivalent to $D_1(a, \a\L; f)$, so that  $\a\to\infty$ can be interpreted also as a spatial large-scale limit. In our large-scale applications it is either $\a$ itself or a certain combination of $\a$ with the temperature $T$ that will become large.

The macroscopic thermal equilibrium state of the free Fermi gas depends, first of all, on its (classical) single-particle Hamiltonian  $h:\R^d\to \R$, sometimes also called the energy-momentum (dispersion) relation. The minimal conditions that we impose on $h$ are as follows. We assume that $h$ is smooth in the sense that $h\in\plainC\infty(\R^d)$ and that it satisfies the bounds
\begin{align}\label{eq:hin}
	\big|\p_{\bxi}^n h(\bxi)\big|\lesssim |\bxi|^{2m}\quad \text{for all} \,\,  n \in \N_0^d\,,
	\textup{ and }\, h(\bxi)\gtrsim |\bxi|^{2m}\,\, \textup{for}\,\, |\bxi|\gtrsim1\,,
\end{align}
for some constant $m>0$. Here $\N_0\ceq \N\cup \{0\}$ and the notations $\p_{\bxi}^n$, $\lesssim$\,,\, $\gtrsim$\, are defined at the end of the Introduction.

For given $h$, temperature $T>0$, and chemical potential $\mu\in\R$  we introduce the \emph{Fermi symbol} as the composition of the Fermi function and the ``shifted'' Hamiltonian $h-\mu$ by
\begin{equation}\label{positiveT:eq} 
	a_{T, \mu}(\bxi) \ceq \frac{1}{1+ \exp{\big(\frac{h(\bxi) - \mu}{T}\big)}}\,,\quad \bxi \in\R^d\,.
\end{equation}

Next, we introduce the (bounded and continuous)  \emph{entropy function} $\eta_\g:\R\to \interval{0}{\ln(2)}$ for each \emph{R\'enyi index} $\g>0$. If $t\not\in {]0, 1[}$, we set $\eta_\g(t)\ceq0$ for any $\g>0$. If $t\in {]0, 1[)}$ and $\g\not=1$ we define
\begin{equation}\label{eta1:eq}
\eta_\g(t) \ceq \frac{1}{1-\g} \ln\big[t^\g + (1-t)^\g\big]\,.
\end{equation}
The von Neumann case, $\g=1$, is then given by the point-wise limit
\begin{equation}\label{eq:etaln}
\eta_1(t)\ceq \lim_{\g\to1} \eta_\g(t) =-t \ln(t) -(1-t)\ln(1-t)\,.	
\end{equation}

%it is defined by
%
%\begin{equation}\label{eta1:eq} \eta_\gamma(t) \ceq \left\{\begin{array}{ll} \frac{1}{1-\gamma} 
	%\ln\big[t^\gamma + (1-t)^\gamma\big]& \text{ if }t\in {(0, 1)},\\[0.2cm]
	%0&\text{ if }t\not\in {(0, 1)},\end{array}
	%\right.\,
%\end{equation}
%
%and for $\gamma=1$, the von Neumann case, it is given by the point-wise limit 
%
%\begin{equation}\label{eq:etaln}
	%\eta_1(t) \ceq \lim_{\gamma\to1} \eta_\gamma(t) 
	%= \left\{\begin{array}{ll} -t \ln(t) -(1-t)\ln(1-t)& \text{ if } t\in {(0, 1)},\\[0.2cm]
	%0& \text{ if }t\not\in {(0, 1)}.\end{array}\right.\,
%\end{equation} 

%
Finally, we define the (R\'enyi) \emph{local entropy} associated with a \emph{bounded} region $\L$ as the trace 
\begin{align}\label{def:local E}
	\mathrm{S}_{\g}(T, \mu; \L) \ceq \tr \eta_\g\big(W_1(a_{T, \mu}, \L)\big)\geq 0\,,
\end{align}
and the (R\'enyi) \emph{entanglement entropy} (EE) for the bipartition $\R^d = \L \cup (\R^d\setminus\L)$   as
\begin{align}\label{def:EE}
	\mathrm{H}_{\g}(T, \mu; \L) \ceq \tr D_1(a_{T,\mu}, \L; \eta_\g) + \tr D_1(a_{T,\mu},\R^d\setminus\L;\eta_\g)\,.
\end{align}
This definition is motivated by the notion of mutual information, see e.g. \cite[Eq.\,(9)]{Amico08}. The conditions \eqref{eq:hin} guarantee that these entropies are well-defined, see the paragraph after Proposition \ref{fixedT:prop}. In formula \eqref{def:EE} either $\L$ or its complement $\R^d\setminus\L$ is assumed to be bounded, see Section \ref{sect:prelim} for details. It is useful to observe, as in \cite[Eq.\,(26)]{LeSpSo2}, that the local entropy \eqref{def:local E} can be expressed as
\begin{align}\label{eq:local_ED}
	\mathrm{S}_\g(T, \mu; \L) = s_\g(T, \mu)\,|\L| + \tr D_1(a_{T, \mu}, \L; \eta_\g)\,.
\end{align}
Here $|\L|$ is the volume (Lebesgue measure) of the bounded region $\L$ and 
\begin{align}\label{density}
	s_\g(T, \mu) \ceq 
	\frac{\tr W_1(\eta_\g\circ a_{T, \mu}, \L)}{|\L|}
	= \frac{1}{(2\pi)^{d}} \int_{\R^d} \eta_\g\big(a_{T,\mu}(\bxi)\big)\, \mathrm{d}\bxi
\end{align} 
is the spatial \emph{entropy density}. 
For $\g=1$ it is the usual thermal entropy density \cite{Balian92,Bratteli97}.
%

%%%%%%%%%%%%%%%%%%%%%%%%%%%%%%%%%%%%%%%%%%%%% 
\subsection{Summary of the main results}
%%%%%%%%%%%%%%%%%%%%%%%%%%%%%%%%%%%%%%%%%%%%%

Our first result is of non-asymptotic nature. In Section \ref{sect:positive} we explore concavity properties of the function $\eta_\g$.  First we notice that $\eta_\g$ is concave on the unit interval ${[0, 1]}$ for all $\g\in{]0, 2]}$, so that for a bounded $\L$ one can use a Jensen-type trace inequality to establish a lower bound for the local entropy \eqref{def:local E} in terms of the entropy density \eqref{density}, see Theorem~\ref{cor:local}. For the EE \eqref{def:EE} this argument is not applicable, but we observe that $\eta_\g$ for $\g\in{]0,1]}$ is even \emph{operator concave} so that the Davis operator inequality \eqref{eq:davis} implies the positivity of the EE for $\g\in{]0,1]}$ (including the most important case $\g=1$ corresponding to the von Neumann EE), see Theorem \ref{thm:EE positive}. We do not know whether the EE is positive for $\g\in{]1,2]}$. Later on however, we will see that the EE for $\g\in{]1, 2]}$ is positive at least for large $T$, see Remarks~\ref{item:negative} and \ref{item:afixed} in Subsection~\ref{subsect:hta}.
 
The other main objective of the present paper is to study the asymptotics of ${\mathrm H}_{\g}(T, \mu; \a\L)$ as $\a\to\infty$ and $T\to\infty$. For this, we have to impose conditions on the Hamiltonian $h$ stronger than those in \eqref{eq:hin}, see Subsection \ref{subsect:h}. In particular, $h$ should be asymptotically homogeneous as $|\bxi|\to\infty$. We distinguish between two high-temperature cases: the case of a constant chemical potential $\mu\in\R$ and the case where the mean particle density
\begin{align}\label{eq:density}
	\varrho(T,\mu) \ceq \frac{1}{(2\pi)^d}\int_{\R^d} a_{T, \mu}(\bxi)\, \mathrm{d}\bxi\,, 
\end{align}
being finite by \eqref{eq:hin}, is (asymptotically) fixed to a prescribed constant  $\rho >0$, as $T$ becomes large. The latter case implies that the chemical potential effectively becomes a function of $T$ (and $\rho$) in the sense that 
\begin{align}\label{rhofixed}
 \varrho(T, \mu_\rho(T)) \to \rho \quad \textup{as}\quad T\to\infty\,. 
\end{align}
This corresponds to the quasi-classical limit of the free Fermi gas at fixed particle density. Indeed, since the so-called integrated density of states 
\begin{align}\label{eq:dos}
	\mathcal N(T)\ceq\frac{1}{(2\pi)^d}\int_{h(\bxi)< T} \mathrm{d}\bxi
\end{align}
of the single-particle Hamiltonian $h$, evaluated at $T$, tends to infinity as $T\to\infty$, one has $\rho/\CN(T)\to 0$. In physical terms, the spatial density of the number of particles is much smaller than that of the number of occupiable single-particle states with (eigen)energies below $T$, when $T$ is sufficiently large. Therefore the restrictions by the Pauli exclusion principle are negligible.

Our results on the high-temperature scaling asymptotics are presented in Theorem \ref{thm:fixedmu} (for constant $\mu$) and in Theorem \ref{thm:fixedrho} (for constant $\rho$). We postpone the discussion of these theorems until Section \ref{sect:hta}. Here we only mention two important facts: a) in both regimes the asymptotics still hold if $T\to\infty$ and the scaling parameter $\a$ is fixed, e.g. $\a = 1$, b) the EE with R\'enyi index $\g>2$ becomes \emph{negative} for fixed particle density at high temperature; the same happens for fixed chemical potential at large $\g$. This suggests that values $\g>2$ are only of limited physical interest. 
  
The main asymptotic orders as $\a\to\infty$ and $T\to\infty$ have been presented (without precise pre-factors and proofs) in \cite{LeSpSo2}, in the cases of fixed $\mu$ and fixed $\rho$, for $\g=1$ and all $d\ge 1$. Here we provide complete proofs for all $\g>0$, but concentrate on the multi-dimensional case $d\ge 2$. The case $d=1$ is not considered for lack of space. 

\medskip

The paper is organized as follows. In Section \ref{sect:prelim} we present the basic information such as our conditions on the truncating region and the definition of the asymptotic coefficient $\CB$. Section 2 also contains the results, borrowed mostly from \cite{Sobolev2019}, that are used throughout the paper. In Section \ref{sect:positive} we study the concavity of the function $\eta_\g$ and investigate the positivity of the corresponding entanglement entropy. In Section \ref{sect:qc} we derive elementary trace-class bounds for some abstract bounded (self-adjoint) operators. These bounds are based on estimates for \emph{quasi-commutators} of the form $f(A) J - J f(B)$ with bounded $J$, bounded self-adjoint $A$, $B$, and suitable functions $f$. By using such bounds in Section \ref{sect:model} we obtain the large $\a$ and $T$ asymptotics for the trace of the operator $D_\a(p_T, \L; f)$ with a symbol $p_T$ modeling the Fermi symbol \eqref{positiveT:eq} in the fixed $\mu$ or fixed $\rho$ regimes. In Section \ref{sect:hta} we collect our results on the high-temperature asymptotics for the EE \eqref{def:EE}, 
see Theorems \ref{thm:fixedmu} and \ref{thm:fixedrho}. Their proofs are directly based on the formulas obtained in Section \ref{sect:model}. In Subsection \ref{subsect:hta} we also comment on the asymptotics of the local entropy \eqref{def:local E}. The Appendix contains a short calculation clarifying the structure of the Fermi symbol when the mean particle density is fixed as $T\to\infty$.

Throughout the paper we adopt the following standard notations. For two positive numbers (or functions) $X$ and $Y$, possibly depending on parameters, we write $X\lesssim Y$ (or $Y\gtrsim X$) if $X\le C Y$ with some constant $C\geq 0$ independent of those parameters. If $X\lesssim Y$ and $X\gtrsim Y$, then we write $X\asymp Y$. To avoid confusion we often make explicit comments on the nature of the (implicit) constants in the bounds. For multiple partial derivatives we use the notation $\p_{\bxi}^n \ceq \p_{\xi_1}^{n_1}\p_{\xi_2}^{n_2}\cdots\p_{\xi_d}^{n_d}$ for a vector $\bxi\in\R^d$ and a multi-index $n = (n_1, n_2, \dots, n_d)\in\N_0^d$ of order $|n|\ceq n_1+n_2+ \cdots +n_d$. By $B(\bz, R)$ we mean the open ball in $\R^d$ with center $\bz\in\R^d$ and radius $R>0$. We also use the weight function $\lu \bv\ru \ceq \sqrt{1+|\bv|^2}$ for any vector $\bv\in\R^d$.

The notation $\GS_p$, $p\in {]0,\infty[},$ is used for the Schatten--von Neumann classes of compact operators on a complex separable Hilbert space $\CH$, see e.g.~\cite[Chapter 11]{BS}. By definition, the operator $A$ belongs to $\GS_p$ if $\|A\|_p \ceq (\tr(A^*A)^{p/2})^{1/p}<\infty$. The functional $\|\cdot\|_p$ on $\GS_p$ is a norm if $p\ge 1$ and a quasi-norm if $p<1$. Apart from Section \ref{sect:qc}, where the space $\CH$ is arbitrary, we assume that $\CH = \plainL2(\R^d)$.

\bigskip
 
We dedicate this paper to the memory of Harold Widom (1932--2021). His ground-breaking results on the asymptotic expansions for traces of pseudo-differential operators have been highly influential to many researchers including us. Without his results the present contribution and our previous ones to the study of fermionic entanglement entropy would have been unthinkable. We are deeply indebted to Widom's ingenious insights. All three of us had the honor and pleasure of meeting him at a memorable workshop in March 2017 hosted by the American Institute of Mathematics (AIM) in San Jose, California, USA.

\begin{figure}[h!]
\begin{center}
\rotatebox{90}{{\scriptsize Photographed by H. Leschke}}
		\includegraphics[scale=0.3]{./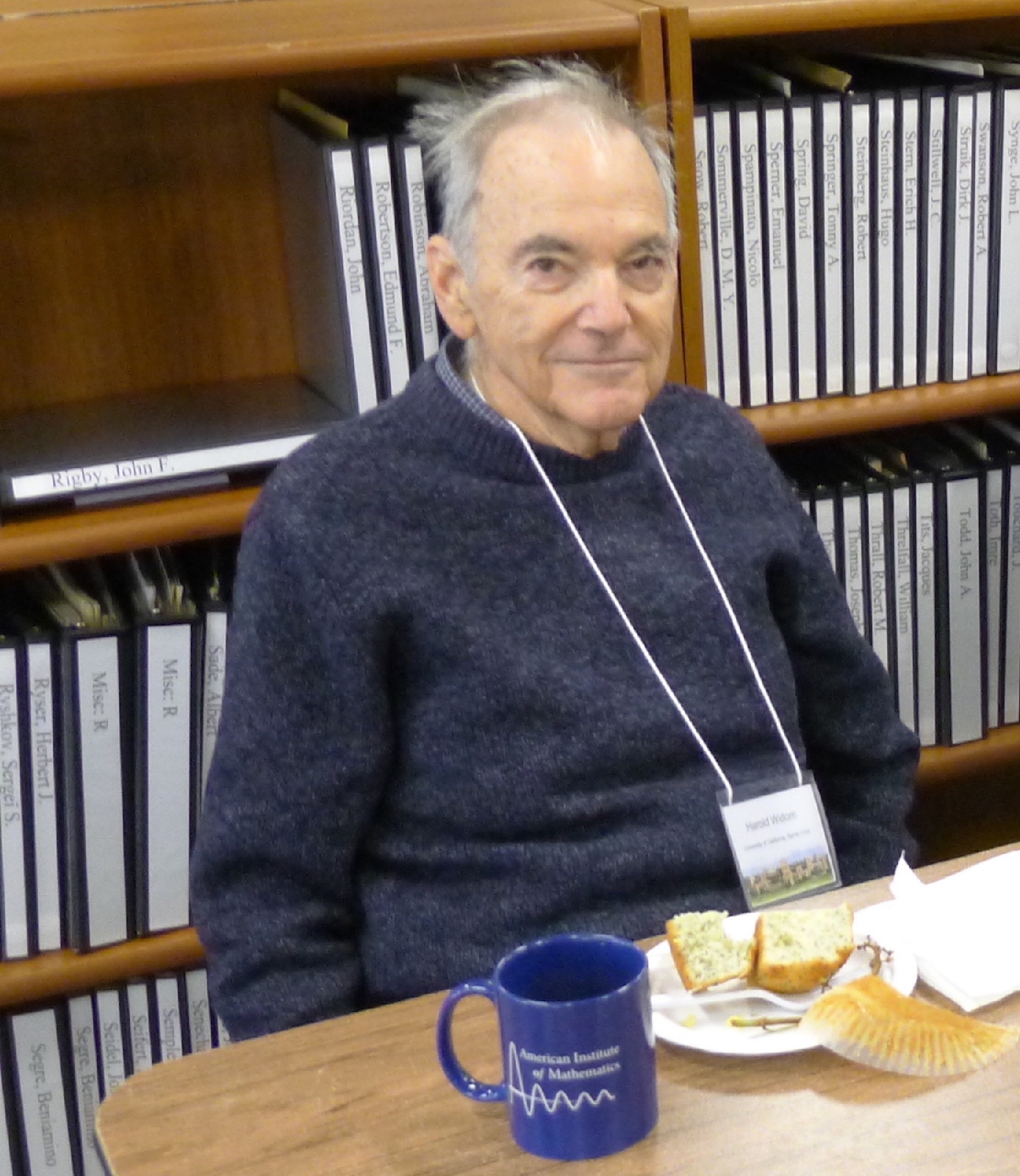}
\end{center}
\captionsetup{justification=centering}
\caption{Harold Widom on 28 March 2017 at the AIM in San Jose, CA}
\end{figure}

%%%%%%%%%%%%%%%%%%%%%%%%%%%%%%%%%%%%%%%%%%%%%
\section{Basic definitions and basic facts}\label{sect:prelim}
%%%%%%%%%%%%%%%%%%%%%%%%%%%%%%%%%%%%%%%%%%%%%

In this section we collect some definitions and facts from \cite{Sobolev2019} concerning the trace of \eqref{Dalpha:eq} and its asymptotic evaluation. They are instrumental in the proof of our main asymptotic results corresponding to $a=a_{T,\mu}$ and $f=\eta_\g$. Throughout the rest of the paper we always assume $d\ge 2$ for the spatial dimension.

%%%%%%%%%%%%%%%%%%%%%%%%%%%%%%%%%%%%%%%%%%%%%  
\subsection{Conditions on the truncating region $\L$} %and the symbol $a$}
%%%%%%%%%%%%%%%%%%%%%%%%%%%%%%%%%%%%%%%%%%%%%

We call an open and connected set $\L\subset \R^d$ a Lipschitz domain, if it can be described locally as the epigraph of a Lipschitz function, see \cite{Sobolev2014,S15} for details. We call $\L$ a \emph{Lipschitz region}, if it is a union of finitely many Lipschitz domains such that their closures are pair-wise disjoint. From now on we always assume that $\L$ satisfies the following condition. Nevertheless, for convenience we will often mention it.
\begin{cond}\label{domain:cond}
\begin{enumerate}
	\item The set $\L\subset \R^d$ is a Lipschitz region, and  either $\L$ or $			   \R^d\setminus\L$ is bounded. 
	\item The boundary (surface) $\p\L$ is piece-wise $\plainC1$-smooth.
\end{enumerate}
\end{cond}
\noindent We note that $\L$ and $\R^d\setminus\overline{\L}$ satisfy Condition \ref{domain:cond} simultaneously.

%%%%%%%%%%%%%%%%%%%%%%%%%%%%%%%%%%%%%%%%%%%%%
\subsection{The asymptotic coefficient  $\CB$  and its basic properties} 
%%%%%%%%%%%%%%%%%%%%%%%%%%%%%%%%%%%%%%%%%%%%%

We assume the real-valued symbol to be smooth in the sense that $a\in\plainC\infty(\R^d)$ and satisfies the decay condition
\begin{align}\label{abounds:eq}
	\big|\p_{\bxi}^n a(\bxi)\big|  \lesssim \lu\bxi\ru^{-\b}\quad \text{ with some constant } \quad \b > d\,, 
\end{align}
for all $\bxi\in\R^d$ and all $n \in \N_0^d$ with some implicit constants that may depend on $n$. 

Before stating the leading asymptotic formula for $\tr D_\a(a, \L; f)$ as $\a\to\infty$, we need to introduce the corresponding asymptotic coefficient. For a function $f:\R\to \R$ and any $u, v\in\R$ we consider the integral 
\begin{equation}\label{U:eq}
	U(u, v; f)\, 
	\ceq\int_0^1 \frac{f\bigl(tu + (1-t)v\bigr) 
		- \big[tf(u) + (1-t)f(v)\big]}{t(1-t)} \,\mathrm{d} t\,. 
\end{equation} 
It is well-defined for any H\"older continuous $f$. And it is positive if $f$ is also concave. 

For every unit vector $\be\in\R^d$  we define a functional of the symbol $a$ by the principal-value integral:
\begin{align}\label{ad:eq}
	\CA(a, \be; f) \ceq\frac{1}{8\pi^2}\lim\limits_{\varepsilon\downarrow 0}
	\int_{\R^d} \int_{|t|>\varepsilon} 
	\frac{U\bigl(a(\bxi), a(\bxi + t\be); f\bigr)}{t^2}\,\mathrm{d}t\,\mathrm{d}\bxi\,.
\end{align}
Finally we define the main asymptotic coefficient by
\begin{align}\label{cbd:eq}
	\CB(a, \p\L;  f) \ceq \frac{1}{(2\pi)^{d-1}}
	\int_{\p\L} \CA(a, \bn_\bx; f) \, \sigma(\mathrm{d}\bx)\,,
\end{align} 
where $\bn_\bx$ is the (unit outward) normal vector at the point $\bx\in\p\L$ and $\sigma$ is the canonical $(d-1)$-dimensional area measure on the boundary surface $\p\L$.

For a function $f\in\plainC2(\R)$ with bounded second derivative and for a symbol $a$ obeying the condition \eqref{abounds:eq} the integral \eqref{ad:eq} exists in the usual sense and is bounded uniformly in $\be$. Hence \eqref{cbd:eq} is also finite, see \cite[Section 3]{Sobolev2019}. However, in order to accommodate the entropy function $\eta_\g$ we allow for test functions being non-smooth in the sense of the following condition.

\begin{cond}\label{cond:f}
The function $f$ is in $\plainC2(\R\setminus \CT)\cap\plainC{}(\R)$, where $\CT\ceq\{ t_1, t_2, \dots, t_N\}$ is a finite set of its singular points. Moreover, for some $\d >0$ and all $R>0$ the function $f\eqqcolon f^{(0)}$ and its first two derivatives satisfy the bounds
\begin{equation}\label{eq:fbound}
	\big|f^{(k)}(t)\big| \lesssim \sum_{u\in\CT}|t-u|^{\d-k}\,,\quad k\in \{ 0, 1, 2\}\,,
\end{equation} 
for all $t\in[-R, R]\setminus \CT$ with an implicit constant that may depend on $R$.
\end{cond}

\noindent Under this condition $\CB(a, \p\L; f)$ is finite: 

\begin{prop}\label{bddef:prop} \cite[Corollary 3.4]{Sobolev2019}
Let the set $\L$ satisfy Condition \ref{domain:cond} and let the function $f$ satisfy Condition \ref{cond:f}  with $\d>0$. Finally, let the symbol $a$ satisfy \eqref{abounds:eq}, but this time with some $\b > d\,\max\{1, 1/\d \}$. Then the coefficient $\CB(a, \p\L; f)$ is finite.  
\end{prop}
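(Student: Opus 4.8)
\section*{Proof plan for Proposition~\ref{bddef:prop}}

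The plan is to prove the finiteness of $\CB$ by reducing it to a uniform bound on the one-dimensional functional $\CA$. Since $\L$ satisfies Condition~\ref{domain:cond}, one of $\L$, $\R^d\setminus\L$ is bounded, so the boundary $\p\L$ is compact; being piece-wise $\plainC1$-smooth it has finite surface area $\s(\p\L)<\infty$. In view of \eqref{cbd:eq} it therefore suffices to show that for each unit vector $\be$ the principal-value limit in \eqref{ad:eq} exists and that $\sup_{|\be|=1}|\CA(a,\be;f)|<\infty$. The whole problem thus becomes the uniform (in $\be$) estimation of the double integral
\[
\int_{\R^d}\int_{|t|>\vare}\frac{U\big(a(\bxi),a(\bxi+t\be);f\big)}{t^2}\,\mathrm dt\,\mathrm d\bxi ,
\]
as $\vare\downarrow0$, with $U$ defined in \eqref{U:eq}.

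The analytic input is a set of pointwise bounds for $U(u,v;f)$ obtained from Condition~\ref{cond:f}. Writing $\phi(t)\ceq f(tu+(1-t)v)$ and expressing the gap between $\phi$ and its chord through the Dirichlet Green's function of $-\mathrm d^2/\mathrm dt^2$ on $[0,1]$ yields the identity
\[
U(u,v;f)=-(u-v)^2\int_0^1 f''\big(su+(1-s)v\big)\,\eta_1(s)\,\mathrm ds ,
\]
valid whenever the open segment between $u$ and $v$ avoids $\CT$; here $\eta_1$ is the function \eqref{eq:etaln} and $\int_0^1\eta_1=\tfrac12$. (This representation also makes the positivity of $U$ for concave $f$, noted after \eqref{U:eq}, transparent.) Combined with \eqref{eq:fbound} it gives the two complementary estimates, for $u,v$ in a fixed bounded interval,
\[
|U(u,v;f)|\lesssim|u-v|^\d \quad\text{and}\quad |U(u,v;f)|\lesssim|u-v|^2\,\big(\mathrm{dist}([u,v],\CT)\big)^{\d-2},
\]
the first from the H\"older continuity of $f$ with exponent $\d$ implied by \eqref{eq:fbound}, the second from the displayed identity; for a genuinely $\plainC2$ function one recovers $|U|\lesssim|u-v|^2$. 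I would then split $f=f_{\mathrm{sm}}+\sum_{j=1}^N f_j$, with $f_{\mathrm{sm}}\in\plainC2(\R)$ and each $f_j$ supported near one singular point $t_j$ and carrying its singular behaviour $\sim|t-t_j|^\d$.

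For the smooth part $f_{\mathrm{sm}}$ the estimate $|U|\lesssim|u-v|^2$ together with the decay \eqref{abounds:eq} (with $\b>d$) reduces everything to the already established $\plainC2$ theory of \cite[Section~3]{Sobolev2019}, which gives a uniform bound on the corresponding $\CA$. For the singular parts I would localise the $\bxi$-integration according to the value of $a(\bxi)$. Whenever $t_j\neq0$, the set $\{\bxi:a(\bxi)\approx t_j\}$ is bounded, because $a$ is continuous and tends to $0$ at infinity by \eqref{abounds:eq}, so this region contributes a finite amount. The decisive part is $t_j=0$: here $a(\bxi)\to0$ as $|\bxi|\to\infty$, so $a(\bxi)$ hovers near the singular value on the whole unbounded region $\{|\bxi|\gtrsim1\}$. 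Inserting $|a(\bxi)|,|\nabla a(\bxi)|\lesssim\lu\bxi\ru^{-\b}$ into the bounds for $U$ and carrying out the $t$-integration leads to an integrand of size $\lu\bxi\ru^{-\b\d}$, and
\[
\int_{\R^d}\lu\bxi\ru^{-\b\d}\,\mathrm d\bxi<\infty\quad\Longleftrightarrow\quad \b\d>d .
\]
This is exactly where the hypothesis $\b>d\max\{1,1/\d\}$ enters: it packages the two independent requirements $\b>d$ (plain integrability, used for $f_{\mathrm{sm}}$ and the regular regions) and $\b\d>d$ (integrability at spatial infinity near the singular value $0$).

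The main obstacle, and the step demanding real care, is this last estimate near $t_j=0$: one must control $U\big(a(\bxi),a(\bxi+t\be);f\big)$ uniformly as $t$ and $\bxi$ both vary, tracking on which side of $0$ and how close to it the two endpoints $a(\bxi)$ and $a(\bxi+t\be)$ lie and interpolating between the two complementary bounds for $U$ accordingly. In the borderline zone near the level set $\{a=0\}$ the crude H\"older bound $|U|\lesssim|u-v|^\d$ is too lossy, and one needs the finer cancellation encoded in the integral \eqref{U:eq} (the representation above) to recover the decay $\lu\bxi\ru^{-\b\d}$; this refined bookkeeping, performed uniformly in the unit vector $\be$ so that the surface integral \eqref{cbd:eq} converges, is the technical heart of the argument.
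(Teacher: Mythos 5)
First, note that the paper does not prove Proposition~\ref{bddef:prop} at all: it is imported verbatim from \cite[Corollary 3.4]{Sobolev2019}, so the only comparison available is against that external argument. Your overall architecture is the right one and matches it in outline: reduce to a bound on $\CA(a,\be;f)$ uniform in the unit vector $\be$ (using that $\p\L$ is compact with finite area), derive the Green's-function identity $U(u,v;f)=-(u-v)^2\int_0^1 f''\big(su+(1-s)v\big)\eta_1(s)\,\mathrm ds$ (which is correct, as are the two complementary estimates you draw from it), and observe that $\b>d$ handles the $\plainC2$ part while $\b\d>d$ handles the integrability at spatial infinity near the singular value $0$. This correctly explains why the hypothesis reads $\b>d\max\{1,1/\d\}$.

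The gap is that the step you yourself call ``the technical heart'' is not carried out, and the route you propose for it --- interpolating between $|U|\lesssim|u-v|^\d$ and $|U|\lesssim|u-v|^2\,\mathrm{dist}([u,v],\CT)^{\d-2}$ using only the global decay \eqref{abounds:eq} --- does not close. Concretely: for $|t|\le1$ the only gradient information \eqref{abounds:eq} gives is $|a(\bxi)-a(\bxi+t\be)|\lesssim|t|\lu\bxi\ru^{-\b}$, and combining this with your two bounds yields for the inner integral $\int_{|t|\le1}|U|t^{-2}\,\mathrm dt$ a majorant of order $\lu\bxi\ru^{-\b}\,d_\bxi^{\,\d-1}$ whenever $d_\bxi\ceq\mathrm{dist}(a(\bxi),\CT)$ is smaller than $\lu\bxi\ru^{-\b}$. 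Since $\d-1<0$, this majorant is not locally integrable in $\bxi$ for perfectly admissible symbols: take $\CT=\{0\}$ and a symbol vanishing identically on an open set $\Om$ with $a(\bxi)\sim\exp\big(-1/\mathrm{dist}(\bxi,\Om)\big)$ nearby; then $d_\bxi^{\,\d-1}$ blows up non-integrably near $\p\Om$ although $\CA$ itself is finite there. What rescues the argument is a cancellation you name but do not supply: wherever $d_\bxi$ is small, the actual increment $|a(\bxi)-a(\bxi+t\be)|$ is governed by the \emph{local} variation of $a$, which is itself small, and the estimate must track this correlation rather than use the global bound $|t|\lu\bxi\ru^{-\b}$. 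A secondary inaccuracy: this small-$|t|$ difficulty occurs at \emph{every} singular point $t_j$, not only at $t_j=0$; the boundedness of $\{\bxi:a(\bxi)\approx t_j\}$ for $t_j\ne0$ controls the spatial integration but does nothing for the pointwise convergence of the $t$-integral, so it cannot be dismissed as you do. As it stands the proposal is a correct road map with the decisive estimate missing and the indicated way of obtaining it insufficient.
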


\noindent We point out a few simple properties of this coefficient. 

\begin{rem}\label{rem:signb}
\begin{enumerate}
	\item Since $\CA(a, \be; f) = \CA(a, -\be; f)$, the coefficients $\CB$ for the regions $\L$ and $\R^d\setminus\overline{\L}$ coincide. 
	\item By the definition \eqref{U:eq}, the coefficient \eqref{cbd:eq} is positive for concave functions $f$ and negative for convex ones. For example, the function $\eta_\g$ with $\g\in {]0, 2]}$ is concave on the interval ${[0, 1]}$ (see Lemma \ref{lem:ren}) so that $\CB(a, \p\L; \eta_\g)\ge 0$ for symbols $a$ taking values only in ${[0, 1]}$, like the Fermi symbol $a_{T,\mu}$. 
	\item If the symbol $a$ is spherically symmetric (for example, by spherical symmetry of the Hamiltonian $h$ in $a_{T,\mu}$), then the surface area $|\partial\L|$ factors out of $\CB(a, \p\L; f)$. Nevertheless, the remaining integral is still hard to compute for general $a$ and $f$. See, however, Remark \ref{item:negative} in Subsection \ref{subsect:hta} for Gaussian $a$ and quadratic $f$.
\end{enumerate}
\end{rem}

A less obvious property of the coefficient $\CB(a, \p\L; f)$ is its continuity in the symbol $a$. Since it is important for our purposes, we state a corresponding result. In the next and subsequent assertions we consider a one-parameter family of symbols $\{a_0, a_\l\}$, $\l>0$, all of them satisfying \eqref{abounds:eq} with some $\b > d\,\max\{1,\d^{-1}\}$, uniformly in $\l$, and point-wise convergence $a_\l\to a_0$ as $\l\downarrow 0$. 

\begin{prop}\label{prop:contb} \cite[Corollary 3.5]{Sobolev2019}
Let the set $\L$ and the function $f$ be as in Proposition \ref{bddef:prop}. Then 
\begin{align}\label{cbconvf:eq}
	\CB(a_\l, \p\L; f)\to \CB(a_0, \p\L; f)\,\quad \textup{as}\,\,\, \l\downarrow 0\,.
\end{align}
\end{prop}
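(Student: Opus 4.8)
The plan is to prove \eqref{cbconvf:eq} by two successive applications of the dominated convergence theorem: first to the double integral over $\R^d\times\R$ defining the density $\CA(a_\l,\be;f)$ in \eqref{ad:eq}, and then to the surface integral over $\p\L$ in \eqref{cbd:eq}. For either step the two ingredients needed are (i) pointwise convergence of the relevant integrand as $\l\downarrow 0$, and (ii) a \emph{$\l$-independent} integrable majorant. The whole point is that ingredient (ii) is already supplied by the finiteness argument behind Proposition \ref{bddef:prop}, since the bounds there enter the symbol only through the constants in \eqref{abounds:eq}, which by hypothesis are uniform in $\l$.

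First I would record an explicit handle on $U$. Writing $g(s)\ceq f\big((1-s)v+su\big)$ and using the Green function $G(t,s)=\min(t,s)-ts$ of $-\mathrm d^2/\mathrm ds^2$ on $[0,1]$ together with the elementary identity $\int_0^1 G(t,s)\,[t(1-t)]^{-1}\,\mathrm dt=\eta_1(s)$, one obtains, whenever the segment between $u$ and $v$ avoids $\CT$,
\begin{equation*}
	U(u,v;f)=-(u-v)^2\int_0^1 \eta_1(s)\, f''\big((1-s)v+su\big)\,\mathrm ds\,.
\end{equation*}
This representation is what tames the $t^{-2}$ singularity in \eqref{ad:eq}: the first-order bound in \eqref{abounds:eq} gives $|a_\l(\bxi)-a_\l(\bxi+t\be)|\lesssim |t|\,\lu\bxi\ru^{-\b}$ uniformly in $\l$, so on $|t|<1$ the integrand $U/t^2$ is dominated by $\lu\bxi\ru^{-2\b}\sup|f''|$, which is integrable in $\bxi$ since $2\b>d$. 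The decay \eqref{abounds:eq} likewise controls the contributions of large $|\bxi|$ and large $|t|$, exactly as in the finiteness argument.

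The main obstacle is the singular set $\CT$, where $f''$ blows up according to \eqref{eq:fbound}. Here one cannot bound $\sup|f''|$ crudely; instead I would partition the $\bxi$-integration according to the distance of $a_\l(\bxi)$ and of $a_\l(\bxi+t\be)$ to $\CT$ and use \eqref{eq:fbound} with the exponent $\d$, the restriction $\b>d\max\{1,\d^{-1}\}$ being precisely what renders each piece integrable. The crucial observation for continuity — as opposed to mere finiteness — is that every constant produced this way depends on $a_\l$ only through the uniform constants in \eqref{abounds:eq}, and is isotropic in the direction, and hence furnishes a single majorant valid for all $\l$ and all unit vectors $\be$; in particular $\CA(a_\l,\be;f)$ is bounded uniformly in both $\l$ and $\be$.

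Finally, pointwise convergence. The map $(u,v)\mapsto U(u,v;f)$ is continuous for $f$ obeying Condition \ref{cond:f} (it is well-defined already for H\"older continuous $f$), so from $a_\l\to a_0$ pointwise we get convergence of the integrand $U\big(a_\l(\bxi),a_\l(\bxi+t\be);f\big)/t^2$ for almost every $(\bxi,t)$. Dominated convergence then yields $\CA(a_\l,\be;f)\to\CA(a_0,\be;f)$ for each $\be$, with the $\l$- and $\be$-uniform bound just described. Because $\L$ or its complement is bounded and $\p\L$ is piece-wise $\plainC1$, the area $\sigma(\p\L)$ is finite; a second application of dominated convergence to $\bx\mapsto\CA(a_\l,\bn_\bx;f)$ over $\p\L$ then gives \eqref{cbconvf:eq}.
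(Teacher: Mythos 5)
Your overall strategy --- exploit the fact that all estimates behind Proposition \ref{bddef:prop} depend on $a_\l$ only through the uniform constants in \eqref{abounds:eq} and \eqref{eq:fbound}, and pass to the limit by dominated convergence first in \eqref{ad:eq} and then in \eqref{cbd:eq} --- is the right one, and your Green's-function representation of $U$ is correct and is indeed the standard tool here (the paper itself gives no proof but defers to \cite[Corollary 3.5]{Sobolev2019}, whose argument runs along these general lines). However, there is a genuine gap at the decisive step: the claim that partitioning the $\bxi$-integration according to the distance of $a_\l(\bxi)$ to $\CT$ ``furnishes a single majorant valid for all $\l$''. What that partition actually yields is a bound on the \emph{integral} $\CA(a_\l,\be;f)$ that is uniform in $\l$ and $\be$; it does not produce a $\l$-independent \emph{pointwise} dominating function, because the partition itself moves with $\l$. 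In general no such dominating function exists. Concretely, take $\CT=\{0\}$ and $f(t)=|t|^{\d}$ with $\d\in(0,1)$: at a point $\bxi$ where $a_\l(\bxi)=0$ and $\p_{\be}a_\l(\bxi)\neq 0$ one computes $U\big(a_\l(\bxi),a_\l(\bxi+t\be);f\big)/t^2\asymp|t|^{\d-2}$ as $t\to0$, which is not integrable near $t=0$. The sharpest $\l$-free pointwise bound near $\CT$ has the form $U/t^2\lesssim w(\bxi)^2\big(|t|\,w(\bxi)+\operatorname{dist}(a_\l(\bxi),\CT)\big)^{\d-2}$ with $w=\lu\cdot\ru^{-\b}$, so if the zero set of $a_\l$ sweeps out a set of positive measure as $\l$ varies, the pointwise supremum over $\l$ of your integrand is non-integrable on that set. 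Uniform boundedness of $\CA(a_\l,\be;f)$ together with pointwise convergence of the integrand gives only a Fatou-type inequality, not \eqref{cbconvf:eq}.

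The missing ingredient is a \emph{uniform smallness} statement near the singular set, i.e.\ uniform integrability rather than domination. The standard repair in this circle of results is to split $f=f\zeta_\e+f(1-\zeta_\e)$, where $\zeta_\e$ is a cutoff supported in the $\e$-neighbourhood of $\CT$. For $f(1-\zeta_\e)$, which is $\plainC2$ with bounded derivatives, your representation gives $|U(u,v;\cdot)|\lesssim_\e (u-v)^2$, hence the genuine $\l$-independent integrable majorant $\min(1,t^2)\lu\bxi\ru^{-2\b}t^{-2}$ (supplemented by the tail bound for $|t|\ge1$ that you already have), and dominated convergence applies. For $f\zeta_\e$ one shows, by exactly the partition argument from the finiteness proof, that $|\CA(a_\l,\be;f\zeta_\e)|\lesssim\e^{\k}$ for some $\k>0$ \emph{uniformly} in $\l$ and $\be$; letting first $\l\downarrow0$ and then $\e\downarrow0$ gives $\CA(a_\l,\be;f)\to\CA(a_0,\be;f)$ uniformly in $\be$. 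Your final dominated-convergence step over $\p\L$ (finite surface measure, integrand bounded uniformly in $\l$ and $\be$) is then fine as written.
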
 

%%%%%%%%%%%%%%%%%%%%%%%%%%%%%%%%%%%%%%%%%%%%
\subsection{The asymptotics for $\tr D_\a(a, \L; f)$ as $\a\to\infty$}\label{subsect:asym}
%
% in terms of $\CB$}
%%%%%%%%%%%%%%%%%%%%%%%%%%%%%%%%%%%%%%%%%%%%

Now we are in a position to state the required asymptotic facts.  

\begin{prop}\label{fixedT:prop}\cite[Theorem 2.3]{Sobolev2019}
Let the set $\L$, the function $f$, and the symbol $a$ be as in Proposition \ref{bddef:prop}. Then the operator $D_\a(a, \L; f)$ is of trace class and 
\begin{align*} 
	\lim\limits_{\a\to\infty} \a^{1-d}\tr D_\a(a, \L; f) = \CB(a, \p\L; f)\,.
\end{align*}
This limit is uniform in the class of symbols $a$ that satisfy \eqref{abounds:eq} with the same implicit constants.
\end{prop}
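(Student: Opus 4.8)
\emph{Strategy.} The plan is to establish the asymptotics first for smooth test functions and then transfer it to the singular class of Condition \ref{cond:f} by an approximation argument controlled in the trace norm. For the smooth case, say $f\in\plainC2(\R)$ with bounded second derivative, the identity is essentially the quasi-classical Szeg\H{o}-type theorem of Widom, and I would prove it by the standard localization scheme. Using a partition of unity subordinate to a finite cover of $\p\L$ by coordinate patches, I reduce (via the Lipschitz epigraph description granted by Condition \ref{domain:cond}) to a half-space model in each patch. Since the symbol $a$ varies on scale $O(1)$ while $\op_\a(a)$ oscillates on scale $\a^{-1}$, one may freeze the symbol at each boundary point; the frozen half-space problem is effectively one-dimensional along the normal direction $\bn_\bx$, and its trace is computed explicitly, producing exactly the one-dimensional Wiener--Hopf coefficient $\CA(a,\bn_\bx;f)$ of \eqref{ad:eq}. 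Summing the patch contributions and passing to the surface integral yields $\CB(a,\p\L;f)$ as in \eqref{cbd:eq}.

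\emph{Trace-class property and uniformity.} The interior contributions cancel between the two terms of $D_\a$ up to trace-norm errors of order $o(\a^{d-1})$; this cancellation is also what secures the trace-class property, crucially even when only $\R^d\setminus\L$ is bounded, since $D_\a(a,\L;f)$ is then localized to the finite boundary $\p\L$. An equivalent and more hands-on route to the smooth case, which I would actually carry out to make the localization estimates transparent, is to first treat the monomials $f(t)=t^n$: for $n=1$ one has $D_\a\equiv0$ because $\chi_\L^2=\chi_\L$, while for $n\ge2$ one expands $\bigl(\chi_\L\op_\a(a)\chi_\L\bigr)^n$ and commutes the cutoffs through, reducing the trace to products of quasi-commutators $\chi_\L\op_\a(a)(1-\chi_\L)$, which are trace class with norms of order $\a^{d-1}$ and carry the boundary coefficient. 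Weierstrass approximation on the compact spectrum of $W_\a(a,\L)$, together with uniform trace-norm control of the error, then extends the asymptotics to all $f\in\plainC2$. In each of these steps the constants depend on the symbol $a$ only through the implicit constants in \eqref{abounds:eq}, which gives the asserted uniformity.

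\emph{Singular case.} For $f$ obeying Condition \ref{cond:f} I approximate $f$ uniformly on compacta by $f_\vare\in\plainC2$ that agree with $f$ outside $\vare$-neighbourhoods of the singular set $\CT$ and share its behaviour elsewhere. Two facts are then needed. First, $\CB(a,\p\L;f_\vare)\to\CB(a,\p\L;f)$ as $\vare\downarrow0$: this follows from dominated convergence applied to the integral definitions \eqref{U:eq}, \eqref{ad:eq}, \eqref{cbd:eq}, with the bounds \eqref{eq:fbound} supplying a dominating function that makes $\CA$ absolutely convergent precisely when $\b>d\max\{1,\d^{-1}\}$ (this is the content of Proposition \ref{bddef:prop}). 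Second, writing $g\ceq f-f_\vare$, I must bound
\[
	\a^{1-d}\,\bigl|\tr D_\a(a,\L; g)\bigr|
\]
uniformly in $\a$ by a quantity tending to $0$ as $\vare\downarrow0$. Letting $\a\to\infty$ using the smooth result for $f_\vare$ and then sending $\vare\downarrow0$ yields the claim.

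\emph{The main obstacle.} The decisive difficulty is this last error estimate. The function $g$ is supported near $\CT$ and is only $\d$-H\"older regular there, so the smooth-case machinery does not apply; what is required is \emph{trace-norm} (not merely operator-norm) control of the singular quasi-commutators $g(A)J-Jg(B)$, uniform in the scaling parameter $\a$. These are exactly the abstract $\GS_p$-bounds for merely H\"older-continuous $g$ developed in Section \ref{sect:qc}. Combining them with the off-diagonal decay of $\op_\a(a)$ and the geometry of $\p\L$, the error is dominated by a constant (independent of $\a$ and uniform over the class \eqref{abounds:eq}) times a positive power of $\vare$, where the measure of the set of $\bxi$ on which $a(\bxi)$ approaches a singular value is controlled through the decay rate $\b$. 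The constraint $\b>d/\d$ is precisely what makes the resulting exponent of $\vare$ positive; everything else is either classical or a routine dominated-convergence argument.
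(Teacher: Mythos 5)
The paper does not actually prove this proposition: it is imported verbatim from \cite[Theorem 2.3]{Sobolev2019}, so there is no in-paper argument to compare yours against. That said, your outline reconstructs the architecture of the proof given in that reference (and in \cite{S17}): first the smooth case by Widom-type boundary localization and half-space model computation producing $\CA(a,\bn_\bx;f)$, then closure to test functions satisfying Condition \ref{cond:f} by modifying $f$ near the singular set and controlling the error in trace norm, uniformly in $\a$, via the Schatten-class quasi-commutator bounds of \cite{Sobolev2016} --- precisely the estimate this paper restates as Proposition \ref{prop:perturb}. You also correctly locate the role of $\b>d\max\{1,1/\d\}$ and the fact that the trace-class property (even for unbounded $\L$ with bounded complement) comes from the quasi-commutator structure rather than from compactness of $\L$.

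Two steps are glossed in a way that would need repair in a full write-up. First, ``Weierstrass approximation on the compact spectrum \dots together with uniform trace-norm control'' is circular as stated: sup-norm approximation of $f$ by polynomials yields operator-norm, not trace-norm, control of $D_\a(a,\L;f-p_n)$ after division by $\a^{d-1}$. The approximation must be carried out in the norms that the right-hand side of \eqref{eq:fest} actually sees, namely the weighted $\plainC2$/H\"older bounds \eqref{eq:fbound}; for $g=f-f_\vare$ one shows these hold with a slightly smaller exponent $\d'<\d$ and an implicit constant $\lesssim\vare^{\d-\d'}$, which is what produces the positive power of $\vare$ you assert. Second, the reduction of $D_\a(a,\L;g)$ to a quasi-commutator of the form treated in Section \ref{sect:qc} hinges on the Fourier-multiplier identity $g\bigl(\op_\a(a)\bigr)=\op_\a(g\circ a)$, which rewrites \eqref{Dalpha:eq} as $\chi_\L\bigl(g(W_\a(a,\L))-g(\op_\a(a))\bigr)\chi_\L$; you use this implicitly, but it is the hinge of the whole argument and should be made explicit. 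With those two points supplied, your strategy is the one the cited proof follows.
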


Proposition \ref{fixedT:prop} ensures the existence of the entropies \eqref{def:local E} and \eqref{def:EE}. In fact, assume that $\L$ satisfies Condition \ref{domain:cond} and that the Hamiltonian $h$ in the Fermi symbol $a_{T,\mu}$ of \eqref{positiveT:eq} is as specified in \eqref{eq:hin}. Then $a_{T,\mu}$  satisfies \eqref{abounds:eq} for all $T>0$ and $\mu\in\R$. Moreover, the function $\eta_\g$ satisfies Condition \ref{cond:f} for all $\g >0$ with arbitrary $\d < \min\{1, \g\}$ and the set $\CT = \{0, 1\}$. Thus, due to Proposition \ref{fixedT:prop}, the operators $D_1(a_{T, \mu}, \L; \eta_\g)$ and $D_1(a_{T, \mu}, \R^d\setminus\L; \eta_\g)$ are of trace class, so that the entanglement entropy \eqref{def:EE} is finite. If we additionally assume that $\L$ is bounded, then by \eqref{eq:local_ED} also the local entropy \eqref{def:local E} is finite. 

Proposition \ref{fixedT:prop} was also used in \cite{Sobolev2019} to determine the scaling asymptotics for the entanglement entropy at fixed temperature. To study the high-temperature regime, we need the continuity of this result in the symbol $a$:

\begin{cor}\label{prop:uni}
Let the set $\L$ and the function $f$ be as in Proposition \ref{bddef:prop}. Then 
\begin{align}\label{eq:uni}
	\lim\,\a^{1-d} \tr D_\a(a_\l, \L; f) = \CB(a_0, \p\L; f)\,,
\end{align} 
where the limits $\a\to\infty$ and $\l\to 0$ are taken simultaneously.
\end{cor}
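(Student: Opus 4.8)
The plan is to obtain the simultaneous limit \eqref{eq:uni} by a standard $\vare/2$ argument that splices together two facts already in hand: the uniform-in-symbol version of Proposition \ref{fixedT:prop} and the continuity of the coefficient $\CB$ in Proposition \ref{prop:contb}. For each $\l>0$ and each $\a>0$ I would start from the decomposition
\begin{align*}
	&\a^{1-d} \tr D_\a(a_\l, \L; f) - \CB(a_0, \p\L; f) \\
	&\quad = \big[\a^{1-d} \tr D_\a(a_\l, \L; f) - \CB(a_\l, \p\L; f)\big]
	+ \big[\CB(a_\l, \p\L; f) - \CB(a_0, \p\L; f)\big]\,,
\end{align*}
and then control the two brackets by genuinely different mechanisms.

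For the first bracket I would invoke the uniformity clause of Proposition \ref{fixedT:prop}. The standing hypothesis on the family $\{a_0, a_\l\}$ is precisely that all its members obey \eqref{abounds:eq} with one common value $\b > d\,\max\{1, \d^{-1}\}$ and with the same implicit constants, uniformly in $\l$. Thus the whole family lies in a single admissible symbol class, and Proposition \ref{fixedT:prop} furnishes a rate $\om(\a)$, independent of $\l$, with $\om(\a)\to 0$ as $\a\to\infty$ and
\begin{align*}
	\sup_{\l > 0} \big|\a^{1-d} \tr D_\a(a_\l, \L; f) - \CB(a_\l, \p\L; f)\big| \le \om(\a)\,.
\end{align*}
The second bracket tends to $0$ as $\l\downarrow 0$ directly by Proposition \ref{prop:contb}, whose hypotheses are exactly those stated for the family $\{a_0, a_\l\}$. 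Combining the two estimates: given $\vare>0$, pick $A$ with $\om(\a) < \vare/2$ for $\a > A$, and $\l_0$ with $|\CB(a_\l, \p\L; f) - \CB(a_0, \p\L; f)| < \vare/2$ for $0<\l<\l_0$; then for all pairs $(\a,\l)$ with $\a > A$ and $0<\l<\l_0$ the left-hand side of the displayed decomposition is bounded by $\vare$, which is the asserted simultaneous limit.

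The only genuinely delicate point — and hence the step I would flag as the crux — is that the limit must hold \emph{simultaneously} rather than iteratedly in $\a$ and $\l$. This is what forces reliance on the uniformity in the symbol asserted by Proposition \ref{fixedT:prop}: with merely pointwise convergence $a_\l\to a_0$ and the bare (non-uniform) asymptotics, one could only justify the two iterated limits separately. The role of the common decay bounds \eqref{abounds:eq} is exactly to certify that $\om(\a)$ can be chosen independently of $\l$, so that the $\a$- and $\l$-thresholds decouple. Verifying that uniform-decay hypothesis for the concrete Fermi-type symbols of the later sections is where the substantive work will reside, but for the present statement it is granted.
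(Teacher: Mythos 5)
Your argument is correct and is essentially identical to the paper's own proof: both combine the uniformity-in-the-symbol clause of Proposition \ref{fixedT:prop} with the continuity statement \eqref{cbconvf:eq} of Proposition \ref{prop:contb} via the standard $\vare/2$ splitting. The paper merely states this in two lines, whereas you spell out the decomposition explicitly; there is no substantive difference.
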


\begin{proof}
According to Proposition \ref{fixedT:prop}, 
\begin{align*}
	\lim\limits_{\a\to\infty} \,\a^{1-d} \tr D_\a(a_\l, \L; f) = \CB(a_\l, \p\L; f)\,,
\end{align*}
uniformly in $\l$. Together with \eqref{cbconvf:eq} this leads to \eqref{eq:uni}. 
\end{proof}

\noindent The next two propositions describe the asymptotics for ``small'' symbols. 

\begin{prop}\label{prop:homo}\cite[Theorem 2.5]{Sobolev2019}  
Let the set $\L$ satisfy Condition \ref{domain:cond} and let $f_0$ be the function defined  by $f_0(t) := M |t|^\d$ with real constants $M$ and $\d >0$. Finally, suppose that the function $f\in \plainC2(\R\setminus\{0\})$ satisfies the conditions
\begin{align}\label{almosthom:eq}
	\lim\limits_{t\to 0} |t|^{k-\d}\frac{\mathrm{d}^k}{\mathrm{d}t^k} \big(f(t) - f_0(t)\big) = 0\,, \quad k\in \{0, 1, 2\}\,.
\end{align} 
Then 
\begin{align*}
	\lim\limits_{\a\to\infty\, \l\to 0} \bigl(\a^{1-d} \l^{-\d}\tr D_\a(\l a_\l, \L; f)\bigr) = \CB(a_0, \p\L; f_0)\,.
\end{align*}
\end{prop}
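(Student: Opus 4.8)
The plan is to exploit the linearity of the map $f\mapsto D_\a(a,\L;f)$ together with the exact homogeneity of $f_0$ under dilations, reducing the homogeneous part to the already-available Corollary \ref{prop:uni} and isolating a genuinely small non-homogeneous remainder. Write $g\ceq f-f_0$; by \eqref{almosthom:eq} the remainder satisfies $|t|^{k-\d}g^{(k)}(t)\to 0$ as $t\to 0$ for $k=0,1,2$, so that both $f_0$ and $g$ obey Condition \ref{cond:f} with $\CT=\{0\}$ and exponent $\d$. Using $W_\a(\l a_\l,\L)=\l\,W_\a(a_\l,\L)$ together with the linearity of $D_\a(a,\L;\cdot)$ in the test function (the first term of \eqref{Dalpha:eq} is linear in $f$ via the functional calculus, and $f\circ a$ is linear in $f$), I would split
\begin{equation*}
	D_\a(\l a_\l,\L;f)=D_\a(\l a_\l,\L;f_0)+D_\a(\l a_\l,\L;g)\,.
\end{equation*}

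For the homogeneous part I would use $f_0(\l s)=\l^\d f_0(s)$ for $\l>0$, which by the functional calculus gives $f_0\big(\l W_\a(a_\l,\L)\big)=\l^\d f_0\big(W_\a(a_\l,\L)\big)$ and, pointwise, $f_0\circ(\l a_\l)=\l^\d(f_0\circ a_\l)$. Hence $D_\a(\l a_\l,\L;f_0)=\l^\d D_\a(a_\l,\L;f_0)$ \emph{exactly}, so that
\begin{equation*}
	\a^{1-d}\l^{-\d}\tr D_\a(\l a_\l,\L;f_0)=\a^{1-d}\tr D_\a(a_\l,\L;f_0)\,.
\end{equation*}
Since $f_0$ satisfies Condition \ref{cond:f} and the symbols $a_\l\to a_0$ all obey \eqref{abounds:eq} uniformly with $\b>d\max\{1,\d^{-1}\}$, Corollary \ref{prop:uni} applies and yields $\a^{1-d}\tr D_\a(a_\l,\L;f_0)\to\CB(a_0,\p\L;f_0)$ in the simultaneous limit, producing the claimed right-hand side.

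It then remains to show $\a^{1-d}\l^{-\d}\tr D_\a(\l a_\l,\L;g)\to 0$. Here I would rescale the test function: set $g_\l(t)\ceq\l^{-\d}g(\l t)$, so that exactly as above $D_\a(\l a_\l,\L;g)=\l^\d D_\a(a_\l,\L;g_\l)$, whence $\a^{1-d}\l^{-\d}\tr D_\a(\l a_\l,\L;g)=\a^{1-d}\tr D_\a(a_\l,\L;g_\l)$. The gain is that $g_\l$ becomes \emph{small}: since $W_\a(a_\l,\L)$ is self-adjoint with spectrum in $[-R_0,R_0]$, where $R_0\ceq\sup_\l\|a_\l\|_\infty<\infty$ by \eqref{abounds:eq}, only the values of $g_\l$ on the fixed interval $[-R_0,R_0]$ are relevant, and from $g_\l^{(k)}(t)=\l^{k-\d}g^{(k)}(\l t)$ one gets
\begin{equation*}
	\sup_{0<|t|\le R_0}|t|^{k-\d}\big|g_\l^{(k)}(t)\big|=\sup_{0<|s|\le\l R_0}|s|^{k-\d}\big|g^{(k)}(s)\big|\longrightarrow 0\quad(\l\downarrow 0),\qquad k=0,1,2\,,
\end{equation*}
by \eqref{almosthom:eq}. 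Thus the Condition \ref{cond:f} seminorm of $g_\l$ on $[-R_0,R_0]$ tends to $0$.

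To finish I would invoke the quantitative form of the estimates underlying Proposition \ref{fixedT:prop} (Section 3 of \cite{Sobolev2019}), namely that $\a^{1-d}|\tr D_\a(a,\L;f)|$ is bounded, uniformly in $\a\ge 1$ and in the symbol class \eqref{abounds:eq}, by the Condition \ref{cond:f} seminorm of $f$ on the spectral interval; applied to $f=g_\l$ this gives $\a^{1-d}|\tr D_\a(a_\l,\L;g_\l)|\to 0$. The main obstacle is precisely this last step. Proposition \ref{fixedT:prop} and Corollary \ref{prop:uni} as stated are uniform only in the symbol $a$, whereas here the test function $g_\l$ itself varies with $\l$, so a black-box application is not available; one genuinely needs a version of the trace asymptotics that is controlled by the $f$-seminorm. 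Establishing (or extracting from \cite{Sobolev2019}) that uniformity is what makes the remainder harmless and is the heart of the argument.
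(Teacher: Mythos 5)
The paper does not actually prove Proposition~\ref{prop:homo}: it is imported verbatim as \cite[Theorem 2.5]{Sobolev2019}, so there is no in-paper argument to compare against. Judged against the strategy of that source, your outline is the right one: the splitting $f=f_0+g$, the exact dilation identity $D_\a(\l a_\l,\L;f_0)=\l^\d D_\a(a_\l,\L;f_0)$ (legitimate because $f_0(\l t)=\l^\d f_0(t)$ for $\l>0$ passes through the functional calculus and through $f_0\circ(\l a_\l)$), the reduction of the homogeneous part to Corollary~\ref{prop:uni}, and the rescaling $g_\l(t)=\l^{-\d}g(\l t)$ whose Condition~\ref{cond:f} seminorm on the fixed spectral interval tends to zero are all correct. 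Indeed the authors themselves use exactly this rescaling device in their proof of Theorem~\ref{thm:sm}, where $\tilde f_T(t)=\l_T^{-\d}f(\l_T t)$ is observed to satisfy \eqref{eq:sing} ``with the same constants''.

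The gap you flag at the end is, however, a genuine one and it is the only substantive step of the proof: to conclude that $\a^{1-d}\tr D_\a(a_\l,\L;g_\l)\to 0$ you need a trace-norm bound of the form $\|D_\a(a,\L;g)\|_1\lesssim \a^{d-1}\,C_g$, where $C_g$ is the best implicit constant in \eqref{eq:fbound} for $g$ on the relevant interval, uniformly over the symbol class \eqref{abounds:eq}. Neither Proposition~\ref{fixedT:prop} nor Corollary~\ref{prop:uni}, as quoted in this paper, delivers this: they are uniform in the symbol but say nothing quantitative about the dependence on the test function, so a black-box application cannot kill the remainder. Such a bound does exist in \cite{Sobolev2019} (its flavor is visible in Proposition~\ref{prop:perturb} here, whose constant is explicitly uniform over all $f$ sharing the same constants in \eqref{eq:fbound}, and hence scales linearly with those constants), but since you neither state nor derive it, your argument for the vanishing of the remainder is incomplete as written. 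With that single quantitative estimate supplied, your proof closes and coincides with the intended one.
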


In the next proposition we consider instead of the homogeneous function $f_0$ the function $\eta$ defined by
\begin{align}\label{eta:eq}
	\eta(t) \ceq -t\ln(|t|)\,, \quad  t\in\R\,,
\end{align}
which still leads to an asymptotically homogeneous behavior.  
 
\begin{prop}\label{prop:homolog} \cite[Theorem 2.6]{Sobolev2019}
Let the set $\L$ satisfy Condition \ref{domain:cond} and suppose that the function $f\in \plainC2(\R\setminus\{0\})$ 
satisfies the conditions 
\begin{align}\label{homoeta:eq}
	\lim\limits_{t\to 0} |t|^{k-1}\frac{\mathrm{d}^k}{\mathrm{d}t^k} \big(f(t) - \eta(t)\big) = 0\,,\quad k\in\{ 0, 1, 2\}\,.
\end{align}
Then 
\begin{align*}
	\lim\limits_{\a\to\infty\, \l\to 0} \bigl(\a^{1-d} \l^{-1}\tr D_\a(\l a_\l, \L; f)\bigr) = \CB(a_0, \p\L; \eta)\,.
\end{align*}
\end{prop}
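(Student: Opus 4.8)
The plan is to split the test function into an ``$\eta$-part'' that rescales \emph{exactly} and a negligible remainder, and to reduce each piece to a result already available. I write $f=\eta+g$ with $g\ceq f-\eta$. Since $f,\eta\in\plainC2(\R\setminus\{0\})$, also $g\in\plainC2(\R\setminus\{0\})$, and the hypotheses \eqref{homoeta:eq} say precisely that $g^{(k)}(t)=o(|t|^{1-k})$ as $t\to0$ for $k=0,1,2$. Because $f\mapsto D_\a(a,\L;f)$ is linear in its last slot, $\tr D_\a(\l a_\l,\L;f)=\tr D_\a(\l a_\l,\L;\eta)+\tr D_\a(\l a_\l,\L;g)$, and I treat the two summands separately.

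For the $\eta$-part the key observation is that $D_\a(\,\cdot\,,\L;\cdot)$ annihilates affine functions in its last argument: for an affine $\ell(s)=cs+c_0$ one computes, using $\chi_\L W_\a(b,\L)\chi_\L=W_\a(b,\L)$, $\chi_\L^2=\chi_\L$ and $\op_\a(1)=I$, that $\chi_\L\ell\big(W_\a(b,\L)\big)\chi_\L=cW_\a(b,\L)+c_0\chi_\L=W_\a(\ell\circ b,\L)$. For $\l>0$ the identity $\eta(\l s)=\l\eta(s)-(\ln\l)\l s$ shows that $\eta(\l\,\cdot\,)$ differs from $\l\eta$ by the linear function $s\mapsto-(\ln\l)\l s$; inserting this into the functional calculus and into the symbol (recall $W_\a(\l a_\l,\L)=\l W_\a(a_\l,\L)$), the logarithmic terms cancel and one obtains the exact scaling
\begin{equation*}
	D_\a(\l a_\l,\L;\eta)=\l\,D_\a(a_\l,\L;\eta)\,.
\end{equation*}
Hence $\a^{1-d}\l^{-1}\tr D_\a(\l a_\l,\L;\eta)=\a^{1-d}\tr D_\a(a_\l,\L;\eta)$. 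Since $\eta$ satisfies Condition~\ref{cond:f} with $\CT=\{0\}$ and any $\d<1$, Corollary~\ref{prop:uni} applies to the convergent family $a_\l\to a_0$ and gives $\a^{1-d}\tr D_\a(a_\l,\L;\eta)\to\CB(a_0,\p\L;\eta)$ in the simultaneous limit.

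For the remainder I note that $g$ fits the hypotheses of Proposition~\ref{prop:homo} with the \emph{degenerate} choice $f_0(t)=M|t|^\d$, $\d=1$ and $M=0$ (so $f_0\equiv0$): condition \eqref{almosthom:eq} then reads $\lim_{t\to0}|t|^{k-1}g^{(k)}(t)=0$, which is exactly \eqref{homoeta:eq}. Proposition~\ref{prop:homo} therefore yields
\begin{equation*}
	\lim_{\a\to\infty\,\l\to0}\a^{1-d}\l^{-1}\tr D_\a(\l a_\l,\L;g)=\CB(a_0,\p\L;0)=0\,,
\end{equation*}
the coefficient vanishing because $\CB$ is linear in the test function through $U$ and $U(\,\cdot\,,\,\cdot\,;0)\equiv0$. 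Adding the two contributions gives the assertion.

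The only genuinely new input is the exact degree-one scaling of $D_\a(\,\cdot\,,\L;\eta)$, and I expect the verification of the cancellation of the logarithmic terms to be the point that must be gotten right. It is precisely this cancellation that makes $\eta$ behave like a symbol homogeneous of degree one \emph{despite} the logarithm, so that $\l^{-1}$ is the correct normalisation and one may invoke Corollary~\ref{prop:uni} with the \emph{fixed} test function $\eta$; without it one would face a delicate double limit in which the $\l^{-1}$ factor amplifies the $o(1)$ error in Proposition~\ref{fixedT:prop} and the test function $g$ varies with $\l$. Everything else is bookkeeping: checking that $\eta$ meets Condition~\ref{cond:f} and that the splitting $f=\eta+g$ respects the hypotheses of Proposition~\ref{prop:homo}.
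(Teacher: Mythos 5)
Your argument is correct, but it is not the paper's route: the paper offers no proof of Proposition~\ref{prop:homolog} at all, importing it wholesale from \cite[Theorem 2.6]{Sobolev2019}, whereas you derive it from the other imported ingredients. Your two key moves both check out. The identity $\eta(\l s)=\l\eta(s)-(\ln\l)\l s$ together with $W_\a(\l a,\L)=\l W_\a(a,\L)$, $\chi_\L W_\a(a,\L)\chi_\L=W_\a(a,\L)$ and the vanishing of $D_\a(\cdot,\L;\ell)$ for linear $\ell$ does give the exact scaling $D_\a(\l a_\l,\L;\eta)=\l\,D_\a(a_\l,\L;\eta)$, after which Corollary~\ref{prop:uni} applies to the \emph{fixed} test function $\eta$ (which satisfies Condition~\ref{cond:f} with $\CT=\{0\}$ and any $\d<1$, as the paper itself notes); and the remainder $g=f-\eta$ satisfies \eqref{almosthom:eq} with $\d=1$ and $f_0\equiv0$, so Proposition~\ref{prop:homo} — whose statement allows arbitrary real $M$, hence $M=0$ — kills it, since $U(\cdot,\cdot;0)\equiv0$ forces $\CB(a_0,\p\L;0)=0$. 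Notably, the authors use exactly your rescaling identity ($\tilde\eta_T(t)=\l_T^{-1}\eta(\l_T t)=\eta(t)-t\ln\l_T$, plus the vanishing of $D_\a$ on linear functions) later in the proof of Theorem~\ref{thm:sm}, so your trick is entirely in the spirit of the paper. What your approach buys is a reduction of the ``logarithmic'' Proposition~\ref{prop:homolog} to the ``homogeneous'' Proposition~\ref{prop:homo} and the continuity statement, so that only one external asymptotic theorem needs to be believed rather than two; the only caveat is that your proof is exactly as strong as the degenerate case $M=0$ of Proposition~\ref{prop:homo}, which is covered by its statement but should be confirmed against the proof in \cite{Sobolev2019} if one wants genuine independence from Theorem 2.6 there.
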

We note that under any of the assumptions \eqref{almosthom:eq} and \eqref{homoeta:eq} the function $f$ satisfies Condition \ref{cond:f} with $\mathcal T = \{0\}$. For assumption \eqref{almosthom:eq} (resp. \eqref{homoeta:eq} ) the condition \eqref{eq:fbound} holds with the constant $\d$ from \eqref{almosthom:eq} (resp. arbitrary $\d <1$). 

The asymptotic results listed above are useful, but, as they stand, not directly applicable for our purposes. This is because our symbol of main interest, the Fermi symbol  \eqref{positiveT:eq}, depends on the two parameters $T$ and $\mu$, and in the course of our analysis in Section \ref{sect:model} we naturally come across certain ``effective'' symbols that do not satisfy conditions like \eqref{abounds:eq} uniformly in these parameters. However, we overcome this problem by considering a wider class of symbols, called \emph{multi-scale symbols} in \cite[Section 3]{LeSpSo3}.

%%%%%%%%%%%%%%%%%%%%%%%%%%%%%%%%%%%%%%%%%%%%%
\subsection{Multi-scale symbols}  
%%%%%%%%%%%%%%%%%%%%%%%%%%%%%%%%%%%%%%%%%%%%%

We consider symbols $a\in\plainC\infty(\R^d)$ for which there exist two continuous functions $\tau$ and $v$ on $\R^d$ with $\tau>0$, $v>0$, $v$ bounded, and such that
\begin{equation}\label{scales:eq}
	\big|\p_{\bxi}^k a(\bxi)\big|\lesssim  \tau(\bxi)^{-|k|} v(\bxi)\,,\ k \in \N_0^d,\quad \bxi\in\R^d\,,
\end{equation}
with implicit constants independent of $\bxi$. It is natural to call $\tau$ the \emph{scale (function)} and $v$ the \emph{amplitude (function)}. The scale $\tau$ is assumed to be globally Lipschitz continuous with Lipschitz constant $\CL<1$, that is,
\begin{equation}\label{Lip:eq}
	|\tau(\bxi) - \tau(\bxi')| \le \CL |\bxi-\bxi'|\,,\ \ \text{ for all }\,\bxi,\bxi'\in\R^d\,.
\end{equation}
Under this assumption the amplitude $v$ is assumed to satisfy the bounds
\begin{equation}\label{w:eq}
	v(\bxi')\asymp v(\bxi)\,,\ \text{ for all }\,\bxi'\in B\bigl(\bxi, \tau(\bxi)\bigr)\,,
\end{equation}
with implicit constants independent of $\bxi$ and $\bxi'$. It is useful to think of $\tau$ and $v$ as (functional) parameters. They, in turn, may depend on other parameters, e.g. numerical ones like $\a$ and $T$. For example, the results in the previous subsections are based on the assumption that $a$ satisfies \eqref{abounds:eq}, which translates into \eqref{scales:eq} with $\tau(\bxi) = 1$ and $v(\bxi) = \lu\bxi\ru^{-\b}$. On the other hand, in Section \ref{sect:model} we encounter amplitudes and scales depending on the temperature $T$. 

\medskip

\noindent Actually, we will only need the following result involving multi-scale symbols. As mentioned in the Introduction, $\|\,\cdot\,\|_p$ denotes the (quasi-)norm in the Schatten--von Neumann class $\GS_p$ of compact operators. Below the underlying Hilbert space is $\CH = \plainL2(\R^d)$.
  
\begin{prop}\label{prop:cross_smooth} \cite[Lemma 3.4]{LeSpSo3}
Let the set $\L$ satisfy Condition \ref{domain:cond} and let the functions $\tau$ and $v$ be as described above. Suppose that the symbol $a$ satisfies \eqref{scales:eq} and that the conditions
\begin{align}\label{tau_low:eq}
	\a \tau_{\textup{\tiny inf}}\gtrsim 1\,, \quad \tau_{\textup{\tiny inf}} \ceq \inf_{\bxi\in\R^d}\tau(\bxi)>0\,,
\end{align}
hold. Then for any $q\in {]0, 1]}$ we have 
\begin{equation}\label{eq:a_multi}
	\| [\op_\a(a), \chi_\L]\|_q^q\lesssim \a^{d-1} \int_{\R^d} \frac{v(\bxi)^q}{\tau(\bxi)}\,\mathrm{d}\bxi\,.
\end{equation}
This bound is uniform in the symbols $a$ satisfying \eqref{scales:eq} with the same implicit constants. 
\end{prop}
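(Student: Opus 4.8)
The plan is to reduce the multi-scale bound to a single-scale commutator estimate by means of a frequency-space partition of unity adapted to the scale function $\tau$, and then to recombine the pieces using the quasi-triangle inequality $\|\sum_j A_j\|_q^q\le\sum_j\|A_j\|_q^q$, valid for $q\le 1$. First I would build a locally finite covering of $\R^d$ (frequency space) by balls $B_j=B(\bxi_j,\tau_j)$ with $\tau_j\ceq\tau(\bxi_j)$. The Lipschitz bound \eqref{Lip:eq} with constant $\CL<1$ is exactly what guarantees that such a Whitney-type covering exists with bounded overlap and that $\tau(\bxi)\asymp\tau_j$ on $B_j$, while \eqref{w:eq} gives $v(\bxi)\asymp v_j\ceq v(\bxi_j)$ there. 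Choosing a subordinate partition of unity $\{\phi_j\}$ with $|\p_{\bxi}^k\phi_j|\lesssim\tau_j^{-|k|}$ and setting $a_j\ceq a\phi_j$, the bound \eqref{scales:eq} together with \eqref{w:eq} yields $|\p_{\bxi}^k a_j|\lesssim\tau_j^{-|k|}v_j$ with $a_j$ supported in $B_j$. Since $a=\sum_j a_j$ and $q\le1$,
\[
\|[\op_\a(a),\chi_\L]\|_q^q\le\sum_j\|[\op_\a(a_j),\chi_\L]\|_q^q.
\]

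Each single-scale piece I would normalize by rescaling. Writing $\tilde a_j(\bz)\ceq v_j^{-1}a_j(\bxi_j+\tau_j\bz)$, the function $\tilde a_j$ is supported in the unit ball $B(0,1)$ and satisfies $|\p_{\bz}^k\tilde a_j|\lesssim1$ uniformly in $j$. A change of variables $\bxi=\bxi_j+\tau_j\bz$ in \eqref{eq:opa} shows $\op_\a(a_j)=v_j\,M_j\op_{\a\tau_j}(\tilde a_j)M_j^{-1}$, where $M_j$ is the unitary modulation by $e^{\i\a\bxi_j\cdot\bx}$. As $M_j$ is a multiplication operator it commutes with $\chi_\L$, so the modulations cancel in the commutator and
\[
\|[\op_\a(a_j),\chi_\L]\|_q^q=v_j^q\,\|[\op_{\a\tau_j}(\tilde a_j),\chi_\L]\|_q^q.
\]
By \eqref{tau_low:eq} the effective parameter satisfies $\a\tau_j\ge\a\tau_{\textup{\tiny inf}}\gtrsim1$, so everything reduces to the following single-scale estimate, uniform over symbols $b$ supported in $B(0,1)$ with $|\p_{\bz}^k b|\lesssim1$ and over parameters $\b\ge c_0>0$:
\[
\|[\op_\b(b),\chi_\L]\|_q^q\lesssim\b^{d-1},
\]
with an implicit constant depending only on $\L$, $q$, $d$. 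Granting this, I would conclude $\|[\op_\a(a_j),\chi_\L]\|_q^q\lesssim v_j^q(\a\tau_j)^{d-1}=\a^{d-1}v_j^q\tau_j^{d-1}$, and since $|B_j|\asymp\tau_j^d$ with bounded overlap while $v(\bxi)^q\asymp v_j^q$, $\tau(\bxi)\asymp\tau_j$ on $B_j$,
\[
\a^{d-1}\sum_j v_j^q\tau_j^{d-1}\asymp\a^{d-1}\sum_j\int_{B_j}\frac{v(\bxi)^q}{\tau(\bxi)}\,\mathrm{d}\bxi\asymp\a^{d-1}\int_{\R^d}\frac{v(\bxi)^q}{\tau(\bxi)}\,\mathrm{d}\bxi,
\]
which is the claimed bound, with uniformity in $a$ inherited from the uniform constants above.

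The single-scale estimate is the heart of the matter and where I expect the real work to lie. The commutator $[\op_\b(b),\chi_\L]$ has integral kernel $K_b(\bx-\by)\bigl(\chi_\L(\by)-\chi_\L(\bx)\bigr)$, and since $b$ is supported in the unit ball with bounded derivatives, repeated integration by parts gives $|K_b(\mathbf r)|\lesssim\b^d\lu\b\mathbf r\ru^{-N}$ for every $N$. Hence the kernel is concentrated on $|\bx-\by|\lesssim\b^{-1}$ and, because $\chi_\L(\by)-\chi_\L(\bx)$ forces $\bx,\by$ onto opposite sides of $\p\L$, the operator is effectively supported in the boundary layer of width $\sim\b^{-1}$ around $\p\L$. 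Since $\p\L$ is a compact, piecewise $\plainC1$ Lipschitz surface, this layer can be covered by $O(\b^{d-1})$ cubes of side $\sim\b^{-1}$ (its measure is comparable to $|\p\L|\,\b^{-(d-1)}$). On a cube of side $\sim\b^{-1}$ the operator $\op_\b(b)$, whose frequency content lives at physical scale $\lesssim\b$, carries only $O(1)$ degrees of freedom, while its norm is $\lesssim\|b\|_\infty\lesssim1$, so each cube contributes $O(1)$ to the $q$-th power of the quasi-norm; summing with the $q$-quasi-triangle inequality, and controlling the off-diagonal interaction of nearby cubes by the rapid decay of $K_b$, produces $\b^{d-1}$.

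The main obstacle is making this degrees-of-freedom count rigorous for a merely Lipschitz (not smooth) boundary and in the quasi-norm regime $q<1$, where $\|\cdot\|_q$ behaves poorly under multiplication. The cleanest route I foresee is to flatten $\p\L$ locally via its Lipschitz graph representation, reduce to the half-space model operator $[\op_\b(b),\chi_{\{x_d>\psi(x')\}}]$, and there exploit explicit kernel bounds together with an almost-orthogonality decomposition in the tangential variables to turn the heuristic ``one mode per boundary cube'' into an honest $\GS_q$ bound.
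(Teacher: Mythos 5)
Your reduction scheme is exactly the one used in the cited proof: the paper itself gives no argument for this proposition but refers to \cite[Lemma 3.4]{LeSpSo3}, and that proof proceeds precisely by a Whitney-type covering of frequency space by balls $B(\bxi_j,\tau(\bxi_j))$ (the Lipschitz condition \eqref{Lip:eq} with $\CL<1$ giving bounded overlap and $\tau\asymp\tau_j$, $v\asymp v_j$ on each ball), a subordinate partition of unity, the quasi-triangle inequality \eqref{eq:ptriangl}, the modulation--dilation conjugation that turns $\op_\a(a_j)$ into $v_j M_j\op_{\a\tau_j}(\tilde a_j)M_j^*$, and the recombination $\sum_j v_j^q\tau_j^{d-1}\lesssim\int v^q\tau^{-1}$. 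All of these steps in your write-up are correct, including the observation that $M_j$ commutes with $\chi_\L$ so that Schatten quasi-norms are preserved, and the hypothesis \eqref{tau_low:eq} is used exactly where you use it, to ensure the effective semiclassical parameter $\a\tau_j$ stays bounded below.

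The genuine gap is the single-scale estimate $\|[\op_\b(b),\chi_\L]\|_q^q\lesssim\b^{d-1}$, which you correctly identify as the heart of the matter but only argue heuristically. In the actual proof this is not re-derived: it is imported as a previously established theorem of Sobolev on Schatten--von Neumann properties of pseudo-differential operators truncated by Lipschitz regions (see \cite{Sobolev2014} and its use in \cite{S15}), valid for all $q\in(0,1]$ and for symbols with compact support and sufficiently many bounded derivatives. Your ``one mode per boundary cube'' count does not survive as stated: $\chi_Q\op_\b(b)\chi_{Q'}$ on a cube of side $\b^{-1}$ is not finite rank, and for $q<1$ the quasi-norm is not controlled by the operator norm times a rank, so the claim ``each cube contributes $O(1)$ to $\|\cdot\|_q^q$'' requires genuine singular-value decay estimates; moreover the off-diagonal recombination over $O(\b^{d-1})$ cubes in the $q$-quasi-norm is exactly where naive summation loses powers of $\b$ unless the decay is quantified carefully. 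Making this rigorous for a merely piecewise-Lipschitz boundary is the substantial content of \cite{Sobolev2014} (reduction to graph-type domains, explicit kernel estimates, and singular-value bounds for the resulting boundary-layer operators), and your proposal acknowledges but does not supply it. So: same architecture as the real proof, with the hard analytic ingredient left as an unproven black box rather than a citation.
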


We will make use of \eqref{eq:a_multi} in Section \ref{sect:model} by combining it with bounds obtained in Section \ref{sect:qc}.
 
%%%%%%%%%%%%%%%%%%%%%%%%%%%%%%%%%%%%%%%%% 
\section{The positivity of certain entanglement entropies}\label{sect:positive}
%%%%%%%%%%%%%%%%%%%%%%%%%%%%%%%%%%%%%%%%%

Given the organization of the paper, this section is a kind of interlude. It turns out that the property given by its title is present if the underlying entropy function $\eta_\g$, as defined in \eqref{eta1:eq} and \eqref{eq:etaln}, is operator concave (when restricted from the real line $\R$ to its unit interval ${[0,1]}$). Since this and related results are not of asymptotic character, we assume in this section $\a = 1$ for the scaling parameter. 

%%%%%%%%%%%%%%%%%%%%%%%%%%%%%%%%%%%%%%%%
\subsection{Concavity of the entropy function $\eta_\g$ for $\g\leq 2$}
%%%%%%%%%%%%%%%%%%%%%%%%%%%%%%%%%%%%%%%%

We prove the property given by the title of this subsection and then establish consequences for the corresponding local and entanglement entropies. The next lemma is elementary. 

\begin{lem} \label{lem:ren}
The entropy function $\eta_\g$ is concave on the interval ${[0, 1]}$ if $\g\in{]0,2]}$, and neither convave nor convex if $\g>2$.
\end{lem}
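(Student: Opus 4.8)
The plan is to reduce everything to the sign of the second derivative $\eta_\g''$ on the open interval $(0,1)$, since concavity on $[0,1]$ is equivalent to $\eta_\g''\le 0$ there, while the simultaneous failure of concavity and convexity is equivalent to $\eta_\g''$ changing sign. Writing $g(t)\ceq t^\g+(1-t)^\g>0$, so that $\eta_\g=\tfrac{1}{1-\g}\ln g$ for $\g\neq1$, a direct computation gives
\[
	\eta_\g''(t)=\frac{1}{1-\g}\,\frac{g''(t)g(t)-g'(t)^2}{g(t)^2}\,,
\]
so that the sign of $\eta_\g''$ is the sign of $\tfrac{1}{1-\g}\bigl(g''g-(g')^2\bigr)$. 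The degenerate case $\g=1$ I would dispatch separately and at once: there $\eta_1''(t)=-\tfrac1t-\tfrac1{1-t}<0$, so $\eta_1$ is concave.

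The key step is to put the numerator $N\ceq g''g-(g')^2$ into a tractable symmetric form. Setting $a\ceq t$, $b\ceq 1-t$ and expanding, all terms of $N$ carry the common factor $(ab)^{\g-1}$; factoring it out yields $N=\g\,(ab)^{\g-1}\,Q(w)$ with $w\ceq a/b=t/(1-t)>0$ and
\[
	Q(w)= -\bigl(w^{\g-1}+w^{1-\g}\bigr) + (\g-1)\bigl(w+w^{-1}\bigr) + 2\g\,.
\]
Since $\g\,(ab)^{\g-1}>0$, the sign of $\eta_\g''$ coincides with that of $\tfrac{1}{1-\g}\,Q(w)$. Substituting $w=e^\theta$ (legitimate because $w$ sweeps all of $(0,\infty)$ as $t$ runs through $(0,1)$) and abbreviating $c\ceq\g-1$, this reduces to the even function
\[
	F(\theta)\ceq\tfrac12\,Q(e^\theta)= -\cosh(c\theta)+c\cosh\theta + c + 1\,,
\]
with $F(0)=2c=2(\g-1)$ and $F'(\theta)=c\bigl(\sinh\theta-\sinh(c\theta)\bigr)$.

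With $F$ in hand the case distinctions are short monotonicity arguments on $[0,\infty)$, based on the elementary comparison $\sinh(c\theta)\le\sinh\theta$ valid for all $c\le 1$ and $\theta\ge0$, which makes the sign of $F'(\theta)$ equal to that of $c$. For $1<\g\le2$ (so $0<c\le1$) one gets $F'\ge0$, hence $F\ge F(0)=2c>0$; since $1-\g<0$ this gives $\eta_\g''\le0$, i.e.\ concavity. For $0<\g<1$ (so $-1<c<0$) one gets $F'\le0$, hence $F\le F(0)=2c<0$; since $1-\g>0$ again $\eta_\g''\le0$. For $\g>2$ (so $c>1$) I would combine $F(0)=2c>0$ with $F(\theta)\to-\infty$ as $\theta\to\infty$ (the term $-\cosh(c\theta)$ dominating $c\cosh\theta$ because $c>1$): by continuity $F$, and therefore $\eta_\g''$, changes sign on $(0,1)$, so $\eta_\g$ is neither concave nor convex.

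The only genuinely delicate point is the algebraic reduction $N=\g\,(ab)^{\g-1}Q(w)$ and the resulting symmetric form of $Q$; everything after the substitution $w=e^\theta$ is routine, since the evenness of $F$ and the single comparison of $\sinh(c\theta)$ with $\sinh\theta$ render its sign transparent in each regime. As consistency checks I would verify the factorization at $\g=2$, where $Q\equiv 4$ forces $\eta_2''<0$, and at $\g=3$, where $Q(w)=-(w^2+w^{-2})+2(w+w^{-1})+6$ manifestly changes sign.
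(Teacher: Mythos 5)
Your proof is correct --- I checked the factorization $g''g-(g')^2=\g\,(ab)^{\g-1}Q(w)$ (it is consistent with the paper's identity for $\eta_\g''\,[t^\g+(1-t)^\g]^2$ after using $a+b=1$), and the subsequent sign analysis of $F(\theta)=-\cosh(c\theta)+c\cosh\theta+c+1$ is sound in every regime. You follow the same overall strategy as the paper (reduce to the sign of $\eta_\g''$ on $(0,1)$, treat $\g=1$ separately, and for $\g>2$ exhibit opposite signs at $t=1/2$ and near the endpoints), but the treatment of the genuinely hard case $\g\in(1,2)$ is different. The paper rearranges the second-derivative identity into the polynomial inequality $[t(1-t)]^{1-p}[t^{2p}+(1-t)^{2p}]\le 2t(1-t)+p$ with $p=\g-1$ and proves it with Young's inequality, splitting off $\g<1$ and $\g=2$ as separate easy subcases. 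You instead substitute $w=t/(1-t)=e^\theta$ to turn the numerator into the even function $F$, whose monotonicity on $[0,\infty)$ follows from the single comparison $\sinh(c\theta)\le\sinh\theta$ for $c\le 1$; this handles all of $\g\in(0,1)\cup(1,2]$ in one stroke and makes the failure for $\g>2$ visible from $F(0)=2c>0$ together with $F(\theta)\to-\infty$. The price is the somewhat delicate algebraic reduction to $Q(w)$; the payoff is a unified, inequality-free case analysis. One small point worth making explicit in a final write-up: passing from $\eta_\g''\le 0$ on $(0,1)$ to concavity on the closed interval $[0,1]$ uses the continuity of $\eta_\g$ at the endpoints, which the paper states up front.
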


\begin{proof} 
By the continuity of $\eta_\g$ on ${[0, 1]}$ its enough to check the sign of the second derivative of $\eta_\g$ on the open interval ${]0, 1[}$. For $\g = 1$ we simply have $\eta_1''(t) = - t^{-1}(1-t)^{-1}<0$ so that $\eta_1$ is concave. For $\g\not = 1$ we use the formula
\begin{align}\label{2nd der}
	\eta_\g''(t)[t^\g+(1-t)^\g\big]^2 = -\g[t(1-t)]^{\g-2} - \frac{\g}{1-\g} [t^{\g-1} - (1-t)^{\g-1}]^2\,.
\end{align}
For $\g < 1$ the right-hand side is obviously negative for all $t\in {]0, 1[}$. For $\g =2$ it simply equals $-8t(1-t)<0$. If $\g >2$, then \eqref{2nd der} implies $\eta_\g''(0) = \g/(\g-1)> 0$ and $\eta_\g''(1/2)  = -4\g <0$. Hence $\eta_\g$ is neither concave nor convex. 

It remains to consider the case $\g\in{]1, 2[}$. We rewrite \eqref{2nd der} as   
\begin{align*}
	\eta_\g''(t)[t^\g+(1-t)^\g]^2 = &\  -\frac{\g}{\g-1} g_{\g-1}(t)\,,\\ 
	g_p(t)\ceq &\ p[t(1-t)]^{p-1} - [t^p - (1-t)^p]^2\,,
\end{align*}
for any $p\ceq \g-1\in {]0, 1[}$. Our claim $g_p(t) \ge 0$ is equivalent to 
\begin{align}\label{ineq3}
	[t(1-t)]^{1-p}[t^{2p} + (1-t)^{2p}] \le 2t(1-t) + p\,.
\end{align}
Using the abbreviation
\vspace{-0.6em}
\begin{align*}
	M\ceq 2^{p-1}\max_{t\in[0, 1]} [t^{2p} + (1-t)^{2p}] = 
	\begin{cases}
		2^{-p}& \quad\textup{if}\quad 0<p<1/2\\ 
		2^{p-1}&\quad\textup{if}\quad 1/2\le p<1
	\end{cases}\,,
\end{align*} 
\vspace{-0.6em}
the (elementary example of the) Young inequality
\begin{align*}
	ab\le \frac{a^u}{u} + \frac{b^v}{v}\,,\quad a, b \ge 0\,,\quad u, v> 1\,,\, \frac{1}{u}+\frac{1}{v} = 1
\end{align*}
for $a = [t(1-t)]^{1-p}\,, u = (1-p)^{-1}$ and $b = 1, v = p^{-1}$\, yields   
\begin{align*}
	[(t(1-t)]^{1-p} [t^{2p} + (1-t)^{2p}] \le &\ M [2t(1-t)]^{1-p}\\
	\le &\  M(1-p)[2t(1-t)] + M p\le 2t(1-t) + p\,.
\end{align*}
Since this coincides with \eqref{ineq3}, the proof is complete.  
\end{proof}

The just established concavity is useful to find a lower bound on the local entropy \eqref{def:local E} with $\g\leq 2$, which is larger than the obvious bound $0$. To this end, we recall a formulation \cite[Theorem A.1]{LapSaf1996} of an abstract Jensen-type trace inequality dating back to Berezin \cite{Ber72}. 
\begin{prop} \label{prop:ber}
Let $\CH$ be a complex separable Hilbert space, $P$ an orthogonal projection on $\CH$, $A$ a self-adjoint operator on $\CH$ with its spectrum contained in the interval $I\subset\R$, and $f: I\to\R$ a concave function. Finally, let $\Delta\ceq Pf(PAP)P - Pf(A)P$ be of trace class and PAP compact. Then $\tr\Delta\ge 0$. If $\Delta$ and $PAP$ are of trace class, than also the following trace inequality is valid:
\begin{align*}
	\tr(Pf(PAP)P)\ge\tr(Pf(A)P)\,.
\end{align*}
(If $0\notin I$, then the operator $f(PAP)$ is understood to act on the subspace $P\CH$). 
\end{prop}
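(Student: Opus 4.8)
The plan is to reduce the operator statement to the ordinary (scalar) Jensen inequality by diagonalising $PAP$. First I would note that $PAP$ is self-adjoint, because $A=A^\ast$ and $P^\ast=P^2=P$, that it annihilates $\ker P=(P\CH)^\perp$, and that it maps $P\CH$ into itself. Since $PAP$ is compact by hypothesis, its restriction to $P\CH$ has pure point spectrum, so there is an orthonormal basis $\{e_n\}$ of $P\CH$ with $PAPe_n=a_ne_n$, $a_n\in\R$. Adjoining an orthonormal basis of $\ker P$ produces an orthonormal basis of $\CH$ in which any operator of the form $P(\,\cdot\,)P$ is supported only on the $e_n$. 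For these vectors $Pe_n=e_n$, whence $a_n=\langle e_n,PAPe_n\rangle=\langle e_n,Ae_n\rangle$; as the spectrum of $A$ lies in the interval $I$, the barycentre $a_n$ lies in $I$ as well.

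The heart of the argument is a term-by-term use of Jensen's inequality. For fixed $n$ the map $\mu_n(\,\cdot\,)\ceq\langle e_n,E_A(\,\cdot\,)e_n\rangle$, with $E_A$ the spectral measure of $A$, is a Borel probability measure supported in $I$, and
\[
	\langle e_n,f(A)e_n\rangle=\int_I f(\lambda)\,\mathrm d\mu_n(\lambda),\qquad
	a_n=\int_I\lambda\,\mathrm d\mu_n(\lambda)\,.
\]
Concavity of $f$ and Jensen's inequality give $\langle e_n,f(A)e_n\rangle\le f(a_n)$, while the functional calculus on $P\CH$ yields $f(PAP)e_n=f(a_n)e_n$, hence $\langle e_n,f(PAP)e_n\rangle=f(a_n)$. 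Subtracting, every diagonal entry of $\Delta$ is nonnegative:
\[
	\langle e_n,\Delta e_n\rangle=f(a_n)-\langle e_n,f(A)e_n\rangle\ge 0\,.
\]
Because $\Delta$ is of trace class, its trace equals $\sum_n\langle e_n,\Delta e_n\rangle$ in this basis (the $\ker P$ vectors contributing nothing), and as a convergent series of nonnegative terms it is $\ge 0$. This proves $\tr\Delta\ge 0$.

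For the sharper trace inequality I would add the hypothesis that $PAP$ is of trace class, i.e.\ $\sum_n|a_n|<\infty$, so that $a_n\to 0$. Using that a concave $f$ is Lipschitz near this accumulation point together with the normalisation fixed by the parenthetical convention, one checks that the operator $Pf(PAP)P$, whose eigenvalues are the $f(a_n)$, is itself of trace class; then $Pf(A)P=Pf(PAP)P-\Delta$ is of trace class as a difference, and linearity of the trace turns $\tr\Delta\ge 0$ into $\tr\big(Pf(PAP)P\big)\ge\tr\big(Pf(A)P\big)$. I expect the main obstacle to lie precisely in this bookkeeping rather than in the Jensen step: one must treat the functional calculus on $\ker(PAP)$ consistently with the distinction $0\in I$ versus $0\notin I$, and secure the individual trace-class membership of the two operators so that the trace splits additively. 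Note that the compactness of $PAP$, which underlies the diagonalisation, is genuinely needed and cannot be dispensed with in this approach.
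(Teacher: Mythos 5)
Your proof is correct and follows the standard route: the paper itself does not prove Proposition~\ref{prop:ber} but quotes it from Laptev--Safarov [Theorem A.1] (an inequality going back to Berezin), and the argument there is exactly your diagonalisation of the compact self-adjoint operator $PAP$ on $P\CH$ followed by a term-by-term application of Jensen's inequality to the spectral measures $\mu_n$, summed over the eigenbasis using that $\Delta$ is trace class. The one point worth making explicit in your final paragraph is that the individual trace-class membership of $Pf(PAP)P$ (whose eigenvalues $f(a_n)$ accumulate at $f(0)$ when $P\CH$ is infinite-dimensional) additionally requires $f(0)=0$, not just a Lipschitz bound near $0$ --- a condition satisfied by the entropy functions $\eta_\gamma$ to which the proposition is applied in the paper.
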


The following result is a corollary to Proposition\ref{prop:ber}.
\begin{thm}\label{cor:local} 
Let $\L\subset \R^d$ be bounded and satisfy Condition \ref{domain:cond}. Assume that the Hamiltonian $h$ satisfies \eqref{eq:hin} and that $\g\in {]0, 2]}$. Then the local entropy \eqref{def:local E} obeys the inequality 
\begin{align}\label{eq:local_E}
	\mathrm{S}_{\g}(T,\mu; \L) \ge  s_\g(T, \mu) |\L|\,,
\end{align}
where $s_\g(T, \mu)$ is the entropy density \eqref{density}.
\end{thm}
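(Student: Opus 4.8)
The plan is to apply the Berezin--Jensen trace inequality of Proposition~\ref{prop:ber} with the Hilbert space $\CH = \plainL2(\R^d)$, the orthogonal projection $P \ceq \chi_\L$, the self-adjoint operator $A \ceq \op_1(a_{T,\mu})$, and the concave function $f \ceq \eta_\g$. Since the real-valued Fermi symbol $a_{T,\mu}$ takes values in the open interval ${(0,1)}$, the Fourier multiplier $A = \op_1(a_{T,\mu})$ is unitarily equivalent (via the Fourier transform) to multiplication by $a_{T,\mu}$, so its spectrum is the closure of the range of the symbol and is therefore contained in $I \ceq {[0,1]}$. On this interval $\eta_\g$ is concave for every $\g\in{(0,2]}$ by Lemma~\ref{lem:ren}, so the hypothesis on $f$ is met.

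The next step is to match the three relevant operators to objects already at hand. Because $A$ is a Fourier multiplier, $f(A) = \eta_\g\bigl(\op_1(a_{T,\mu})\bigr) = \op_1(\eta_\g\circ a_{T,\mu})$, whence $PAP = W_1(a_{T,\mu},\L)$ and $Pf(A)P = W_1(\eta_\g\circ a_{T,\mu},\L)$; taking traces and recalling \eqref{density} gives $\tr Pf(A)P = s_\g(T,\mu)\,|\L|$. For the other term I use that $\eta_\g(0)=0$, so that $\eta_\g\bigl(W_1(a_{T,\mu},\L)\bigr)$ annihilates the complement $(1-\chi_\L)\plainL2(\R^d)$ of the range of $P$; hence the outer projections are redundant, $Pf(PAP)P = \eta_\g\bigl(W_1(a_{T,\mu},\L)\bigr)$, and $\tr Pf(PAP)P = \mathrm{S}_\g(T,\mu;\L)$ by \eqref{def:local E}. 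With these identifications, the operator $\Delta = Pf(PAP)P - Pf(A)P$ is exactly $D_1(a_{T,\mu},\L;\eta_\g)$, which is of trace class by Proposition~\ref{fixedT:prop} (as noted in the paragraph following it).

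It remains to verify the two trace-class requirements of Proposition~\ref{prop:ber}, namely that $\Delta$ and $PAP$ lie in $\GS_1$. The first has just been settled. For $PAP = W_1(a_{T,\mu},\L)$, I would factor it as $B^*B$ with $B \ceq \op_1\bigl(a_{T,\mu}^{1/2}\bigr)\chi_\L$, the pointwise square root being legitimate since $a_{T,\mu}\ge 0$ and Fourier multipliers compose by multiplying symbols. The operator $B$ has integral kernel $\check{b}(\bx-\by)\chi_\L(\by)$, where $\check b$ is the inverse Fourier transform of $a_{T,\mu}^{1/2}$, so by Plancherel its Hilbert--Schmidt norm satisfies $\|B\|_2^2 \asymp |\L|\int_{\R^d} a_{T,\mu}(\bxi)\,\mathrm{d}\bxi$, which is finite because $\L$ is bounded and $a_{T,\mu}\in\plainL1(\R^d)$ by the decay in \eqref{eq:hin}. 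Hence $B\in\GS_2$ and $PAP = B^*B\in\GS_1$.

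Finally, the inequality branch of Proposition~\ref{prop:ber} yields $\tr Pf(PAP)P \ge \tr Pf(A)P$, which is precisely \eqref{eq:local_E}. The only genuinely non-routine point is the trace-class verification of $PAP$; everything else is bookkeeping that matches the abstract statement to the concrete Wiener--Hopf objects, with the concavity supplied by Lemma~\ref{lem:ren} and the trace-class property of $\Delta$ by Proposition~\ref{fixedT:prop}.
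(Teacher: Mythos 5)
Your proposal is correct and follows essentially the same route as the paper: applying Proposition~\ref{prop:ber} with $P=\chi_\L$, $A=\op_1(a_{T,\mu})$, $f=\eta_\g$, using the concavity from Lemma~\ref{lem:ren} and the trace-class property of $\Delta = D_1(a_{T,\mu},\L;\eta_\g)$ from Proposition~\ref{fixedT:prop}. Your Hilbert--Schmidt factorization of $PAP=B^*B$ is just a more explicit rendering of the paper's observation that $PAP\ge 0$ has the finite trace $\varrho(T,\mu)\,|\L|$, so the two arguments coincide in substance.
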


\begin{proof} 
We use Proposition \ref{prop:ber} with $A = \op_1(a_{T, \mu})$, $P = \chi_{\L}$, and the concave function $f = \eta_\g$. Since  $0\le A\le\mathbbm 1$ and $PAP$ has $\varrho(T,\mu)|\L|$, see \eqref{eq:density}, as its finite trace, Proposition \ref{prop:ber} is indeed applicable and yields $\tr D_1(a, \L; \eta_\g)\ge 0$. By \eqref{eq:local_ED} this entails \eqref{eq:local_E}. 
\end{proof}

We stress that Proposition \ref{prop:ber} cannot be applied if the set  $\L$ is unbounded, since in this case the operator 
$\chi_\L\op_\a(a_{T, \mu})\chi_\L$ is not necessarily compact. Thus Theorem \ref{cor:local} cannot be used to determine the sign of the entanglement entropy \eqref{def:EE}. But, fortunately, we can use the rather strong property as given by the title of the following subsection. 

%%%%%%%%%%%%%%%%%%%%%%%%%%%%%%%%%%%%%%%%%%%%%
\subsection{Operator concavity of the entropy function $\eta_\g$ for $\g\leq 1$}
%%%%%%%%%%%%%%%%%%%%%%%%%%%%%%%%%%%%%%%%%%%%%

For the general background of this genre we recommend Simon's comprehensive book \cite{Simon2019}. Let $\mathcal H$ be a complex separable Hilbert space of infinite dimension and $\{A, B\}$ an arbitrary pair of bounded self-adjoint operators on $\mathcal H$ with spectra in an  interval $I\subset\R$. A continuous function $f:I \to \R$ is called (decreasing) \emph{operator monotone} if the (operator) inequality $f(A) \ge f(B)$ holds whenever $A\ge B$. Likewise it is called \emph{operator concave} if  $f\big(q A+(1-q)B\big) \ge q f(A) + (1-q)f(B)$ holds for all $q\in {[0,1]}$. It is called \emph{operator convex} if $-f$ is operator concave. Of course, every operator monotone (operator concave) 
function is monotone (concave). We are going to use the following standard examples, see \cite{And}: 
\begin{enumerate}
	\item The function $t\mapsto t^p$, $t\in {[0, \infty[},$ is operator monotone and operator concave if $p\in {]0,1]}$. 
	\item The function $t\mapsto\ln(t)$, $t\in {]0, \infty[}$, is operator monotone and operator concave.  
	\item The function $t\mapsto-t\ln(t)$, $t\in {[0, \infty[}$, is operator concave.  
\end{enumerate} 
Any operator concave function $f$ satisfies \cite{Dav,HP03} the following \emph{Davis operator inequality} for all bounded self-adjoint operators $A$ with spectrum in $I$ and all orthogonal projections $P$ on $\CH$:{}
\begin{align}\label{eq:davis}
	P f\big(PAP\big)P\ge Pf(A)P\,.
\end{align}  
If $0\notin I$, then the operator $f(PAP)$ is understood to act on the subspace $P\CH$. If $0\in I$ and $f(0) = 0$, then \eqref{eq:davis} may be shortened to $f\big(PAP\big)\ge Pf(A)P$.  

\begin{lem} 
If $\g\in {]0,1]} $, then $\eta_\g$ is operator concave on the interval ${[0, 1]}$.
\end{lem}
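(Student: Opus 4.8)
The goal is to show that for $\g\in(0,1]$ the entropy function
$\eta_\g$ is operator concave on $[0,1]$. The plan is to reduce the claim
to the three standard operator-concave building blocks listed just before
the lemma, namely $t\mapsto t^p$ for $p\in(0,1]$, $t\mapsto\ln(t)$, and
$t\mapsto -t\ln(t)$, together with the elementary fact that operator
concavity is preserved under affine reparametrizations of the argument,
under addition, and under multiplication by nonnegative scalars. The two
cases $\g=1$ and $\g\in(0,1)$ must be treated separately because of the
different closed forms \eqref{eq:etaln} and \eqref{eta1:eq}.

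First I would dispose of the von Neumann case $\g=1$. Here
$\eta_1(t)=-t\ln(t)-(1-t)\ln(1-t)$ for $t\in[0,1]$. The map
$t\mapsto-t\ln(t)$ is operator concave on $[0,\infty)$ by item~(3). For the
second term, note that $t\mapsto 1-t$ sends a self-adjoint operator $A$
with spectrum in $[0,1]$ to $\mathbbm 1-A$, again with spectrum in $[0,1]$,
and this affine substitution commutes with the functional calculus;
composing the operator-concave function $s\mapsto-s\ln(s)$ with the affine
(hence operator-affine) map $s=1-t$ yields that $t\mapsto-(1-t)\ln(1-t)$ is
operator concave on $[0,1]$. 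Since the sum of two operator-concave
functions is operator concave, $\eta_1$ is operator concave.

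For $\g\in(0,1)$ I would write, from \eqref{eta1:eq},
\begin{align*}
	\eta_\g(t)=\frac{1}{1-\g}\,\ln\!\bigl[t^\g+(1-t)^\g\bigr]\,,
\end{align*}
and observe that $1/(1-\g)>0$, so it suffices to prove that
$t\mapsto\ln\!\bigl[t^\g+(1-t)^\g\bigr]$ is operator concave on $[0,1]$. The
natural route is to exhibit this as a composition $\ln\circ\,\psi$ where
$\psi(t)\ceq t^\g+(1-t)^\g$, and to invoke the composition principle: if
$\psi$ is operator concave and $\ln$ is both operator concave and
\emph{operator monotone} (increasing), then $\ln\circ\,\psi$ is operator
concave. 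Here $\psi$ is operator concave on $[0,1]$ because
$t\mapsto t^\g$ is operator concave by item~(1) (using $\g\in(0,1]$),
$t\mapsto(1-t)^\g$ is operator concave by the same affine-substitution
argument as above, and the sum of the two is operator concave; moreover
$\psi$ maps $[0,1]$ into $(0,\infty)$ (indeed $\psi\ge 2^{1-\g}$ by
convexity of $s\mapsto s^\g$ evaluated at the endpoints, or simply
$\psi(t)\ge\max\{t^\g,(1-t)^\g\}>0$), so the spectrum of $\psi(A)$ lies in
the domain $(0,\infty)$ on which $\ln$ is operator concave and operator
monotone by item~(2).

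The main obstacle is the composition step, since operator concavity is
\emph{not} in general preserved under composition: the correct statement
is that if $g$ is operator concave and $f$ is operator concave \emph{and
operator monotone increasing}, then $f\circ g$ is operator concave. I
would therefore make sure to state and use exactly this principle rather
than a naive chain rule; the fact that $\ln$ is simultaneously operator
concave \emph{and} operator monotone (item~(2)) is precisely what makes the
argument go through. A secondary point worth a sentence is the behavior at
the endpoints $t=0,1$, where $\eta_\g$ extends continuously with value $0$;
operator concavity is an inequality on the open interval that passes to the
closure by continuity of the functional calculus, so no separate endpoint
analysis is needed.
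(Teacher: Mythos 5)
Your argument is correct and essentially identical to the paper's: for $\g=1$ both decompose $\eta_1$ as $t\mapsto -t\ln(t)$ plus its reflection $t\mapsto -(1-t)\ln(1-t)$, and for $\g\in(0,1)$ both compose the operator-concave sum $\psi(t)=t^\g+(1-t)^\g$ with the logarithm, using that $\ln$ is simultaneously operator monotone and operator concave (the paper merely unwinds your composition principle into its two constituent inequalities instead of citing it as a named lemma). One cosmetic slip: the parenthetical bound $\psi\ge 2^{1-\g}$ is false, since $2^{1-\g}$ is the \emph{maximum} of $\psi$ on $[0,1]$ (attained at $t=1/2$) while the minimum is $1$ (at the endpoints); your alternative justification $\psi(t)\ge\max\{t^\g,(1-t)^\g\}\ge 2^{-\g}>0$ is all that is needed, so the proof stands.
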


\begin{proof} 
It suffices to consider self-adjoint operators $A$ and $B$ with $0\le A, B\le \mathbbm 1$. Assume first that $\g < 1$. The function $g_\g(t)\ceq t^\g + (1-t)^\g$ is operator concave on ${[0, 1]}$ (by example 1 above) and the logarithm is operator monotone on ${]0, 1]}$ (example 2). Thus for all $q\in{[0, 1]}$ we have 
\begin{align*}
	\eta_\g\big(q A + (1-q) B\big) & = \frac{1}{1-\g} \, \ln\big[g_\g\big(q A +(1-q)B\big)\big] \\&\ge \frac{1}{1-\g} \ln \big[q g_\g\big(A\big) + (1-q)g_\g\big(B\big)\big]\,. 
\end{align*}
Now, since the logarithm is also operator concave on ${]0,1]}$ (again example 2), the right-hand side is larger than or equal to
\begin{align*}
	\frac{1}{1-\g} \big[q \ln(g_\g(A)) + (1-q) \ln(g_\g(B))\big] =q \eta_\g(A) + (1-q) \eta_\g(B)\,.
\end{align*}
Hence $\eta_\g$, for $\g<1$, is operator concave on ${]0, 1]}$ and, by continuity, on ${[0, 1]}$. For $\g=1$ we proceed more directly and use that $g(t)\ceq - t\ln(t)$ is operator concave on ${[0,1]}$ (example 3).This immediately implies that $\eta_1(t)  = g(t) + g(1-t)$ is also operator concave on ${[0,1]}$.
\end{proof}

Now we can use the inequality \eqref{eq:davis} with $f = \eta_\g$ (for $\g\in {]0,1]}$), $P=\chi_{\L}$, and $A=\op_1(a_{T,\mu})$. Combining this with Proposition \ref{fixedT:prop} yields the following result.
 
\begin{thm} \label{thm:EE positive} 
Let $\L\subset \R^d$ satisfy Condition \ref{domain:cond}. Assume that the Hamiltonian $h$ is as in \eqref{eq:hin} and that $\g\in {]0,1]}$. Then both operators $D_1(a_{T, \mu}, \L; \eta_\g)$ and $D_1(a_{T, \mu}, \R^d\setminus \L; \eta_\g)$ are not only of trace class, but also positive. Hence the entanglement entropy \eqref{def:EE} is positive. 
\end{thm}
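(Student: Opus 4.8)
The plan is to obtain the positivity of each operator $D_1(a_{T,\mu}, \L; \eta_\g)$ as an \emph{operator} inequality coming straight from the Davis inequality \eqref{eq:davis}, and to borrow the trace-class property from Proposition \ref{fixedT:prop}. For the latter I would simply recall the discussion following Proposition \ref{fixedT:prop}: the Fermi symbol $a_{T,\mu}$ obeys \eqref{abounds:eq} for all $T>0$, $\mu\in\R$, and $\eta_\g$ satisfies Condition \ref{cond:f} with $\CT=\{0,1\}$ and any $\d<\min\{1,\g\}$, so that both $D_1(a_{T,\mu}, \L; \eta_\g)$ and $D_1(a_{T,\mu}, \R^d\setminus\L; \eta_\g)$ lie in $\GS_1$ and the traces in \eqref{def:EE} are well-defined.

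For the positivity I would set $A\ceq\op_1(a_{T,\mu})$ and $P\ceq\chi_\L$. Since $0\le a_{T,\mu}\le 1$, the Fourier multiplier $A$ is self-adjoint with spectrum in $I\ceq[0,1]$, and $P$ is an orthogonal projection. Then $PAP = W_1(a_{T,\mu}, \L)$, so the first term of $D_1(a_{T,\mu}, \L; \eta_\g)$ equals $P\,\eta_\g(PAP)\,P$. The one identity worth stating explicitly is that the functional calculus of the multiplier $A$ acts on symbols by composition, $\eta_\g(A)=\op_1(\eta_\g\circ a_{T,\mu})$; this turns the second term into $P\,\eta_\g(A)\,P = W_1(\eta_\g\circ a_{T,\mu}, \L)$. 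Hence $D_1(a_{T,\mu}, \L; \eta_\g) = P\,\eta_\g(PAP)\,P - P\,\eta_\g(A)\,P$ is precisely the difference appearing in \eqref{eq:davis}.

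By the preceding lemma $\eta_\g$ is operator concave on $[0,1]$ for $\g\in(0,1]$, so \eqref{eq:davis} applies verbatim and gives $D_1(a_{T,\mu}, \L; \eta_\g)\ge 0$; the identical argument applied to $\R^d\setminus\L$, which again satisfies Condition \ref{domain:cond}, yields $D_1(a_{T,\mu}, \R^d\setminus\L; \eta_\g)\ge 0$. Since a positive trace-class operator has nonnegative trace, both summands of \eqref{def:EE} are $\ge 0$ and therefore $\mathrm{H}_\g(T,\mu;\L)\ge 0$. There is no genuine analytic obstacle left once operator concavity is in hand; the point I would emphasize is conceptual rather than technical, namely the very reason this route is needed. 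Unlike the Berezin--Jensen inequality of Proposition \ref{prop:ber} used for Theorem \ref{cor:local}, the Davis inequality \eqref{eq:davis} requires \emph{no} compactness hypothesis on $PAP$, so it remains valid even when $\L$ (and hence $PAP=W_1(a_{T,\mu},\L)$) is unbounded; trace-class membership is then needed only for the difference $D_1$, and that is exactly what Proposition \ref{fixedT:prop} supplies.
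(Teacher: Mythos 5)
Your proposal is correct and follows essentially the same route as the paper: the operator concavity of $\eta_\g$ for $\g\in{(0,1]}$ from the preceding lemma feeds into the Davis inequality \eqref{eq:davis} with $A=\op_1(a_{T,\mu})$ and $P=\chi_\L$ to give $D_1(a_{T,\mu},\L;\eta_\g)\ge 0$ as an operator inequality, while trace-class membership is supplied by Proposition \ref{fixedT:prop}. Your explicit identification $P\,\eta_\g(A)\,P = W_1(\eta_\g\circ a_{T,\mu},\L)$ and the remark contrasting Davis with the Berezin--Jensen inequality (no compactness needed, hence applicability to unbounded $\L$) merely spell out what the paper leaves implicit.
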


This method cannot be used for the EE with $\g>1$ because of the following negative result. 

\begin{thm} 
If $\g>1$, then $\eta_\g$ is not operator concave on the interval $ {[0, 1]} $.
\end{thm}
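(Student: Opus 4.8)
The plan is to disprove operator concavity of $\eta_\g$ on $[0,1]$ for $\g>1$ by exhibiting a concrete finite-dimensional counterexample. Since operator concavity is a property that must hold on \emph{every} Hilbert space with operators whose spectra lie in $[0,1]$, it suffices to work with $2\times2$ self-adjoint matrices. So first I would reduce the problem: find self-adjoint $A,B$ with spectra in $[0,1]$ and a single weight $q\in[0,1]$ (taking $q=1/2$ is the natural first attempt) such that the operator inequality $\eta_\g\big(\tfrac12 A+\tfrac12 B\big)\ge \tfrac12\eta_\g(A)+\tfrac12\eta_\g(B)$ \emph{fails}, i.e. the difference has a negative eigenvalue.

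The key technical step is to choose $A$ and $B$ so that the computation of $\eta_\g(A)$, $\eta_\g(B)$ and $\eta_\g(\tfrac12(A+B))$ is tractable. The cleanest route is to pick $A$ and $B$ to be two rank-one projection-like matrices (or scalar multiples thereof) that do not commute, for instance diagonal $A=\mathrm{diag}(a_1,a_2)$ against a matrix $B$ that is a rotation of a diagonal matrix by some angle $\phi$. Non-commutativity is essential, since on commuting operators operator concavity reduces to ordinary concavity, which does hold for $\g\le 2$ by Lemma~\ref{lem:ren}. With an explicit $2\times2$ parametrization one computes the three matrix functions via the spectral decomposition and then checks the sign of the smaller eigenvalue of the $2\times2$ difference $\eta_\g(\tfrac12(A+B)) - \tfrac12\eta_\g(A) - \tfrac12\eta_\g(B)$; a negative value certifies the failure.

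A slicker alternative, which avoids ad hoc matrix juggling, is to invoke the standard characterization of operator concave functions on an interval. A function $f\in\plainC2$ is operator concave on $I$ if and only if its second divided difference matrix (equivalently, the Pick/L\"owner kernel built from $f$) is negative semi-definite, i.e. the matrix $\big[f^{[1]}(\l_i,\l_j)\big]$ arising in the Daleckii--Krein formula must have the right definiteness, and more precisely one tests the $2\times2$ Hermitian matrix $\big(f^{[2]}(\l_i,\l_j,\l_k)\big)$ associated to three points in $I$. Using this criterion I would pick three points $0<s_1<s_2<s_3<1$ and show that the relevant second-divided-difference determinant changes sign when $\g>1$, reflecting exactly the fact from Lemma~\ref{lem:ren} that $\eta_\g$ is still (scalar) concave for $1<\g\le2$ but its L\"owner kernel loses the stronger definiteness. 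This ties the negative result cleanly to the concavity analysis already carried out.

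The main obstacle will be organizing the explicit algebra so that the sign of the critical eigenvalue (or of the divided-difference determinant) is manifestly negative for \emph{all} $\g>1$ rather than just for sampled values; one wants a single transparent inequality, ideally obtained by expanding near a convenient base point (such as taking $A,B$ close to a common scalar $t_0\in(0,1)$ and reading off the leading nontrivial order in the perturbation), where the obstruction to operator concavity shows up as a term proportional to $(\g-1)$ of the wrong sign. Passing from a numerically verified counterexample to one valid uniformly in $\g>1$ is the delicate part, but a perturbative/asymptotic expansion should make the $(\g-1)$ dependence explicit and settle it.
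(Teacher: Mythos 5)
Your proposal is a plan rather than a proof: neither of the two routes is actually carried out, and the step you yourself flag as ``the delicate part'' --- turning a sampled or numerical counterexample into one valid for every $\g>1$ --- is precisely where the argument is missing. Worse, the specific device you propose for closing that gap does not work in the critical range $1<\g\le 2$. If you set $A=t_0+\e X$, $B=t_0+\e Y$ and expand, the leading nontrivial term of $\eta_\g\big(\tfrac12(A+B)\big)-\tfrac12\eta_\g(A)-\tfrac12\eta_\g(B)$ is $-\tfrac18\,\eta_\g''(t_0)(X-Y)^2\,\e^2+O(\e^3)$, which is positive semi-definite whenever $\eta_\g''(t_0)\le 0$. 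Since Lemma \ref{lem:ren} shows $\eta_\g$ \emph{is} (scalar) concave for $\g\le 2$, the perturbative expansion around a scalar base point has the favorable sign at leading order for all $1<\g\le2$, and no term ``proportional to $(\g-1)$ of the wrong sign'' appears there. The obstruction to operator concavity in this range is genuinely non-infinitesimal: you would have to exhibit explicit non-commuting matrices at finite separation, or pick three concrete points in the Kraus/L\"owner second-divided-difference test and control the sign of its determinant uniformly in $\g$ --- none of which is done. (For $\g>2$ the statement is immediate from Lemma \ref{lem:ren}, since $\eta_\g$ is not even concave; the theorem's content is the range $1<\g\le2$.)

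For comparison, the paper avoids all matrix computations by using a global analytic criterion: by Bendat--Sherman, an operator convex function on $[-1/2,1/2]$ extends analytically to the plane cut along $(-\infty,-1/2)$ and $(1/2,\infty)$. Writing $g_\g(u)=-\eta_\g(u+\tfrac12)=\tfrac{1}{\g-1}\ln\big[(\tfrac12-u)^\g+(\tfrac12+u)^\g\big]$ and evaluating on the imaginary axis $u=\i y/2$ gives $(\tfrac12-u)^\g+(\tfrac12+u)^\g=2^{1-\g}(1+y^2)^{\g/2}\cos[\g\tan^{-1}(y)]$, which vanishes at the finite point $y_0$ with $\g\tan^{-1}(y_0)=\pi/2$ as soon as $\g>1$. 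The logarithm therefore has a branch point off the cuts, the analytic continuation is impossible, and operator concavity fails --- uniformly in $\g>1$ with no case distinctions. If you want to salvage your approach, the divided-difference route is viable in principle, but you must commit to explicit points and carry the sign analysis through for all $\g\in(1,2]$; the local expansion cannot substitute for that.
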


\begin{proof} For convenience, instead of $\eta_\g$ we consider the function
\begin{align*}
	g_\g(u) \ceq -\eta_{\g}\big(u+\mfr12\big) = \frac{1}{\g-1} \,\ln\big[(\mfr12 -u)^\g + (\mfr12 +u)^\g\big]\,,\quad  u\in \big[-\mfr12, \mfr12\big]\,.
\end{align*}
Our objective is to show that $g_\g$ is not operator convex on $[-1/2, 1/2]$. If $g_\g$ were operator convex, then by \cite[Corollary 1]{BeSh}, $g_\g$ would be analytic on the complex plane with cuts along the half-lines $]-\infty, -1/2[$ and $]1/2, \infty[$. Let us prove that such an analytic continuation of $g_\g$ is impossible. To this end, let $u= \i y/2$ with $y>0$. Then 
\[ 
	\mfr12 \pm u = \mfr12 \sqrt{1+y^2} \,\exp[\pm \i \tan^{-1}(y)]
\]
so that
\begin{align}\label{eq:up}  
	(\mfr12 - u)^\g + (\mfr12 + u)^\g  = 2^{1-\g} (1+y^2)^{\g/2} \, \cos[\g\tan^{-1}(y)]\,.
\end{align}
Since $\g >1$, there exists a finite $y_0>0$  such that $\g\tan^{-1}(y_0) = \pi/2$, so that the right-hand side  of \eqref{eq:up} changes sign at $y = y_0$. This implies that the function $g_\g$ has a branching point at $u = \mathrm{i}y_0/2$, and hence cannot be analytic in the whole upper half-plane. This proves that $g_\g$ is not operator convex, as claimed.
\end{proof}

%%%%%%%%%%%%%%%%%%%%%%%%%%%%%%%%%%%%%%%%%%%%%
\section{Quasi-commutator bounds}\label{sect:qc}
%%%%%%%%%%%%%%%%%%%%%%%%%%%%%%%%%%%%%%%%%%%%%

In this section we collect some bounds for the Schatten--von Neumann classes $\GS_p$, $p\in {]0,\infty[}$, of compact operators on a complex separable Hilbert space $\CH$, see e.g.~\cite[Chapter 11]{BS}. As mentioned at the end of the Introduction, the functional $\|A\|_p \ceq (\tr(A^*A)^{p/2})^{1/p},\ A\in\GS_p,$  defines a norm for $p\ge 1$ and a quasi-norm for $p<1$. It satisfies the following ``triangle inequality'':
\begin{align}\label{eq:ptriangl}
	\|A+B\|_p^p\le \|A\|_p^p + \|B\|_p^p\,, \quad 0<p\le 1.
\end{align} 
This inequality is used systematically in what follows. The main part is played by estimates for \emph{quasi-commutators} $f(A) J - J f(B)$ with bounded $J$ and bounded self-adjoint $A, B$. The following fact is adapted from \cite[Theorem 2.4]{Sobolev2016}.
 
\begin{prop}\label{prop:perturb}
Suppose that the function $f$ satisfies Condition \ref{cond:f} with some $\d >0$. Let $A, B$ be two bounded self-adjoint operators and let $J$ be a bounded operator. Suppose that $AJ-JB\in\GS_q$ where $q$ satisfies $0 < q < \min\{1, \d\}$. Then
\begin{equation}\label{eq:fest}
	\|f(A) J - J f(B)\|_1 \lesssim \|J\|^{1-q}\big(1+ \|A\|^{\d-q}+\|B\|^{\d-q}\big)\|AJ-JB\|_q^q\,,
\end{equation}
with a constant independent of $A, B$, and $J$. This constant may depend on the set $\CT$  in Condition \ref{cond:f}, and is uniform in the set of functions $f$ satisfying \eqref{eq:fbound} with the same implicit constants.
\end{prop}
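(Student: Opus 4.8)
The plan is to represent the quasi-commutator by a Helffer--Sjöstrand (double-operator-integral) formula, reduce the trace-norm estimate to a scalar integral by an elementary interpolation between a Schatten-$\GS_q$ bound and an operator-norm bound, and then resolve the singularities of $f$ by a dyadic decomposition in which the hypothesis $q<\delta$ is exactly what forces convergence. First I would record the resolvent identity
\[
	(A-z)^{-1}J - J(B-z)^{-1} = -(A-z)^{-1}(AJ-JB)(B-z)^{-1} \eqc R_z\,,
\]
and, writing $G\ceq AJ-JB\in\GS_q$, note the two competing estimates $\|R_z\|\le 2\|J\|\,|\operatorname{Im} z|^{-1}$ (operator norm, read off the left-hand form) and $\|R_z\|_q\le \|G\|_q\,|\operatorname{Im} z|^{-2}$ (from $s_n(XGY)\le\|X\|\,s_n(G)\,\|Y\|$ and the resolvent bounds). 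The elementary interpolation $\|R_z\|_1\le\|R_z\|^{1-q}\|R_z\|_q^q$, immediate from $\sum_n s_n=\sum_n s_n^{1-q}s_n^q$, then gives
\[
	\|R_z\|_1 \lesssim \|J\|^{1-q}\|G\|_q^q\,|\operatorname{Im} z|^{-(1+q)}\,.
\]
This single inequality is the engine of the argument: it already produces the factor $\|J\|^{1-q}\|AJ-JB\|_q^q$ demanded by \eqref{eq:fest}.

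Next, for a $C^2$ function $g$ I would insert $R_z$ into $g(A)J-Jg(B) = -\tfrac1\pi\int_{\mathbb{C}}\bar\partial\tilde g(z)\,R_z\,dx\,dy$, with $\tilde g$ an almost-analytic extension of $g$. Taking $\GS_1$ norms yields the smooth estimate
\[
	\|g(A)J - Jg(B)\|_1 \lesssim \|J\|^{1-q}\|G\|_q^q \int_{\mathbb{C}} |\bar\partial\tilde g(z)|\,|\operatorname{Im} z|^{-(1+q)}\,dx\,dy\,.
\]
Since $\bar\partial\tilde g = O(|\operatorname{Im} z|)$ (using one derivative), the $y$-integral converges precisely because $q<1$; and for an extension adapted to the scale $\ell$ on which $g$ is concentrated one obtains the scale-sensitive bound $\lesssim\|J\|^{1-q}\|G\|_q^q\,\ell^{-q}\|g\|_\infty$.

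Finally I would split $f$, via Condition \ref{cond:f} and the ordinary triangle inequality for $\|\cdot\|_1$, into a globally $C^2$ part on the spectral window $[-R,R]$, $R\ceq\max\{\|A\|,\|B\|\}$, plus finitely many pieces localized near the singular points $t_j\in\CT$. Each local piece is decomposed dyadically into $C^2$ bumps $f_{j,k}$ supported where $|t-t_j|\asymp 2^{-k}$; the bounds \eqref{eq:fbound} give $\|f_{j,k}\|_\infty\lesssim 2^{-k\delta}$ on scale $\ell=2^{-k}$, so the smooth estimate contributes $\lesssim\|J\|^{1-q}\|G\|_q^q\,2^{-k(\delta-q)}$. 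Summing this geometric series over $k$ — convergent exactly because $q<\delta$ — and over the finitely many $j$, while collecting the largest-scale ($\ell\asymp R$) contributions into the factor $R^{\delta-q}\le\|A\|^{\delta-q}+\|B\|^{\delta-q}$, produces \eqref{eq:fest}. The main obstacle is the smooth estimate with the correct power $\ell^{-q}$ of the localization scale: a crude fixed-cutoff Helffer--Sjöstrand bound yields $\|f_{j,k}''\|_{L^1}\sim 2^{k(1-\delta)}$, which diverges, so one must use a scale-adapted almost-analytic extension (equivalently, optimize the two bounds on $R_z$ against the crossover $|\operatorname{Im} z|\asymp\ell$) to recover the summable exponent $\delta-q$. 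Tracking the $R$-dependence through \eqref{eq:fbound} so as to isolate exactly $\|A\|^{\delta-q}+\|B\|^{\delta-q}$, together with the uniformity claim in the implicit constants of \eqref{eq:fbound}, is the accompanying bookkeeping difficulty.
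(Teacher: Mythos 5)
The paper itself offers no proof of Proposition \ref{prop:perturb}: it is imported, in slightly specialized form, from \cite[Theorem 2.4]{Sobolev2016}, so the only in-paper ``proof'' is that citation. Your reconstruction follows the skeleton that underlies the cited result: (i) a single-scale building-block estimate $\|g(A)J-Jg(B)\|_1\lesssim \|J\|^{1-q}\,\|AJ-JB\|_q^q\,\ell^{-q}\max_{k\le 2}\ell^k\|g^{(k)}\|_\infty$ for a $\plainC2$ bump $g$ supported on an interval of length $\ell$, and (ii) a dyadic decomposition of $f$ around each point of $\CT$ whose summability is governed exactly by $q<\d$. Your derivation of (i) is correct and is the genuinely independent part: the resolvent identity, the two competing bounds $\|R_z\|\lesssim\|J\|\,|\operatorname{Im}z|^{-1}$ and $\|R_z\|_q\le\|AJ-JB\|_q\,|\operatorname{Im}z|^{-2}$, the elementary interpolation $\|X\|_1\le\|X\|^{1-q}\|X\|_q^q$, and a scale-adapted almost-analytic extension together give $\ell^{-q}$ with the right prefactor $\|J\|^{1-q}\|AJ-JB\|_q^q$. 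This Helffer--Sj\"ostrand route buys a short, self-contained argument, at the price of being tied to the trace norm on the left-hand side, whereas \cite[Theorem 2.4]{Sobolev2016} is proved for arbitrary quasi-normed ideals and with sharper tracking of the constants attached to $f$; for the present paper your weaker version suffices.

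One step needs repair before the geometric series closes. You claim that \eqref{eq:fbound} gives $\|f_{j,k}\|_\infty\lesssim 2^{-k\d}$ for the piece supported where $|t-t_j|\asymp 2^{-k}$. When $\CT$ contains more than one point this is false as stated: \eqref{eq:fbound} with $k=0$ only yields $|f(t)|\lesssim|t-t_j|^{\d}+\sum_{u\ne t_j}|t-u|^{\d}$, so $f(t_j)$ need not vanish, $\|f\psi_{j,k}\|_\infty$ is merely $O(1)$, and $\sum_k 2^{kq}\|f_{j,k}\|_\infty$ diverges. The standard fix is to subtract the value $f(t_j)$ locally: write $f=g+\sum_j h_j$ with $h_j\ceq\chi_j\,(f-f(t_j))$ for cut-offs $\chi_j$ equal to $1$ near $t_j$, and $g$ the remainder, which is globally $\plainC2$ on $[-R,R]$ and is handled by the smooth estimate alone (the sanity check that constants are harmless is $cJ-Jc=0$). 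Integrating the $k=1$ bound in \eqref{eq:fbound} gives $|f(t)-f(t_j)|\lesssim|t-t_j|^{\min\{1,\d\}}$, so the dyadic contributions become $\lesssim 2^{-k(\min\{1,\d\}-q)}$, summable because $q<\min\{1,\d\}$ --- which is precisely the hypothesis on $q$ in the statement. With that one-line amendment (unnecessary for the paper's application, since $\eta_\g(0)=\eta_\g(1)=0$), your argument is sound.
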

Actually, \cite[Theorem 2.4]{Sobolev2016} provides bounds of the type \eqref{eq:fest} in arbitrary (quasi-) normed operator ideals of compact operators and gives a more precise dependence on the constants related to the function $f$. For the present paper \eqref{eq:fest} is sufficient.
  
All subsequent bounds involving the function $f$ are uniform in $f$ in the sense specified in Proposition \ref{prop:perturb}. We are going to apply Proposition \ref{prop:perturb} to obtain various bounds for the operator difference $\CD(A, P; f)\ceq P f(PAP)P - Pf(A)P$ with an orthogonal projection $P$. 

\begin{cor}\label{on:cor}
Let  $f, A, B, J$, and $q$ be as in Proposition \ref{prop:perturb}. Additionally assume that $A, B, J$ satisfy
\begin{align}\label{eq:acomm}
	[A, J] =  [B, J] = 0\,,  \quad (A - B)J = 0\,.
\end{align}
Then
\begin{align}\label{eq:dj}
	\|\CD(A, P; f) J\|_1 + \|J\CD(A, P; f)\|_1 \lesssim   \|[J, P]\|_q^q +  \|[JA, P]\|_q^q\,,{}
\end{align}
and 
\begin{align}\label{eq:on}
	\| \CD(A, P; f)J - J\CD(B, P; f)\|_1\lesssim \|[J, P]\|_q^q +  \|[J, P]\|_1\,.
\end{align}
The implicit constants in these bounds depend on the norms $\|A\|, \|B\|$, and $\|J\|$, but they are uniform in the set of operators $A$, $B$, $J$ whose norms are bounded by the same constants. They are also uniform in the set of functions $f$ in the sense specified in Proposition \ref{prop:perturb}. 
\end{cor}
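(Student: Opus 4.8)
The plan is to reduce both estimates to the quasi-commutator bound of Proposition \ref{prop:perturb}, using the quasi-triangle inequality \eqref{eq:ptriangl} to split the resulting $\GS_q$-terms and elementary commutator algebra to rewrite everything through $[P,J]$ and $[JA,P]$. The one preliminary simplification I would record is the absorption identity $f(PAP) = Pf(PAP)P + f(0)(1-P)$, valid because $PAP$ is block-diagonal with respect to $\CH = P\CH\oplus(1-P)\CH$ and acts as $0$ on $(1-P)\CH$; it gives $Pf(PAP)P = f(PAP)P = Pf(PAP)$ and hence $\CD(A,P;f) = f(PAP)P - Pf(A)P$. Throughout, the prefactors produced by Proposition \ref{prop:perturb} are bounded in terms of $\|A\|,\|B\|,\|J\|$ only, which accounts for the uniformity claimed in the corollary.

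For the first bound I would write $\CD(A,P;f)J = Q_1 + P[P,f(A)]J$, where $Q_1 \ceq f(PAP)(PJ) - (PJ)f(A)$ is a genuine quasi-commutator. Proposition \ref{prop:perturb} applied with the triple $(PAP,\,A,\,PJ)$ controls $\|Q_1\|_1$ by the $\GS_q$-quasi-norm of $PAP\cdot PJ - PJ\cdot A = PAPJ - PJA$; using $AJ=JA$ together with the identities $PA(1-P)=P[P,A]$ and $[P,A]J = -[JA,P]-A[P,J]$, this collapses to $P[JA,P]+PA[P,J]$, whose $q$-quasi-norm is $\lesssim \|[JA,P]\|_q^q+\|[P,J]\|_q^q$ by \eqref{eq:ptriangl}. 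The remainder $P[P,f(A)]J$ is dominated in trace norm by the quasi-commutator $f(A)(PJ)-(PJ)f(A)$, so a second application of Proposition \ref{prop:perturb} with the triple $(A,\,A,\,PJ)$ bounds it by the same quantity (the associated $\GS_q$-term $APJ-PJA$ equals $[JA,P]+A[P,J]$). Finally $\|J\CD(A,P;f)\|_1 = \|\CD(A,P;f)J^*\|_1$ since $\CD(A,P;f)$ is self-adjoint, and $J^*$ satisfies the same hypotheses with $\|[P,J^*]\|_q=\|[P,J]\|_q$ and $\|[J^*A,P]\|_q=\|[JA,P]\|_q$, so the already-proven estimate applies verbatim.

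For the second bound I would first extract the algebraic consequence of $(A-B)J=0$: combined with $[A,J]=[B,J]=0$ it yields $A^kJ=B^kJ$ for all $k$ by induction, hence $f(A)J=f(B)J$ and $Jf(A)=Jf(B)$ through the functional calculus. I then split $\CD(A,P;f)J - J\CD(B,P;f)$ into the \emph{compressed} part $f(PAP)PJ - Jf(PBP)P$ and the part $Pf(A)PJ - JPf(B)P$. The latter equals $[P,J]f(B)P + Pf(A)[P,J]$ (move $J$ through $f(A)$ and invoke $Jf(A)=Jf(B)$), so it is $\lesssim \|[J,P]\|_1$ in trace norm because $f(A),f(B)$ are bounded. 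For the compressed part I would write it as $Q_2 + (\text{remainder})$, with $Q_2 \ceq f(PAP)(PJ)-(PJ)f(PBP)$; the remainder is again $\lesssim\|[J,P]\|_1$ after using $PJ(1-P)=P[P,J]$ and the absorption identity for $f(PBP)$. Proposition \ref{prop:perturb} with the triple $(PAP,\,PBP,\,PJ)$ reduces $\|Q_2\|_1$ to the $\GS_q$-quasi-norm of $PAPJ-PJPBP$, which — using $AJ=BJ$ and $PJ(1-P)=P[P,J]$ — collapses to $P[P,J]BP+PA[P,J]$, of $q$-quasi-norm $\lesssim\|[P,J]\|_q^q=\|[J,P]\|_q^q$. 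Summing the contributions gives the claimed estimate.

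The main obstacle I anticipate is precisely the compressed part $f(PAP)PJ - Jf(PBP)P$ of the second estimate: because $f$ acts on the compressions $PAP$ and $PBP$ rather than on $A,B$ themselves, the clean relation $f(A)J=f(B)J$ cannot be applied directly, and one is forced to route through the quasi-commutator $Q_2$ and then reduce $PAPJ-PJPBP$ to commutators with $P$. Keeping track of the $f(0)(1-P)$ contributions coming from the absorption identity and of the non-commuting projections, while verifying that the final expressions collapse to multiples of $[P,J]$ (and not to an uncontrolled $[P,A]$ or $[P,B]$), is the delicate bookkeeping at the heart of the argument.
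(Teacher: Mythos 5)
Your proposal is correct and follows essentially the same route as the paper: both split $\CD(A,P;f)J$ (resp.\ the difference in \eqref{eq:on}) into genuine quasi-commutators plus commutator remainders, apply Proposition \ref{prop:perturb} to the triples $(PAP,A,PJ)$, $(A,A,PJ)$, $(PAP,PBP,\cdot)$, and use \eqref{eq:acomm} to collapse the resulting $\GS_q$-terms to $[P,J]$ and $[JA,P]$. The only (harmless) cosmetic differences are your use of the absorption identity $f(PAP)P=Pf(PAP)P$ and the adjoint trick for $\|J\CD(A,P;f)\|_1$, where the paper simply repeats the first argument symmetrically.
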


\begin{proof} 
The proof is based mainly on the bound \eqref{eq:fest}. The assumption \eqref{eq:acomm} considerably simplifies the calculations, and we often use it without mention. 

 For the proof of \eqref{eq:dj} we carry out the estimate for the first term on its left-hand side only, as the second one can be treated in the same way. We begin by writing
\begin{align}\label{eq:fordj}
	\CD(A, P; f) J = &\ P\big( f(PAP)PJ -PJ f(A)\big) - P [f(A), PJ]\,.
\end{align}
Then we use \eqref{eq:fest} and \eqref{eq:ptriangl} to estimate the first term on the right-hand side,
\begin{align*}
	\|Pf(PAP)&PJ - \ PJ f(A)\|_1\le \|f(PAP)PJ - PJ f(A)\|_1\\[0.2cm]
	\lesssim &\   \|PAPJ - PJA\|_q^q 
	\lesssim \   \|P(AJ - JA)P\|_q^q + \|[P, J]\|_q^q  + \|[JA, P]\|_q^q\\
	= &\ \|[P, J]\|_q^q  + \|[JA, P]\|_q^q\,.
\end{align*}
For the second term on the right-hand side of \eqref{eq:fordj} we also use \eqref{eq:fest} and 
\eqref{eq:ptriangl}:
\begin{align*}
	\| P [f(A), PJ]\|_1&\lesssim \  \|APJ - PJA\|_q^q
	\\&\lesssim\  \|(AJ - JA)P\|_q^q + \|[P, J]\|_q^q  + \|[JA, P]\|_q^q\\
	&= \  \|[P, J]\|_q^q  + \|[JA, P]\|_q^q\,.
\end{align*}
Adding up these two estimates we arrive at \eqref{eq:dj}.
 
For the proof of \eqref{eq:on} we first consider the difference  
\begin{align}\label{eq:difir}
	Pf(PA P)PJ - JPf(PBP)P &= \ P\big(f(P A P)J - Jf(P B P)\big)P\notag\\[0.2cm]
	&\ + Pf(P A P)[P, J] - [J, P]f(P BP)P\,. 
\end{align}
We use \eqref{eq:fest} to estimate the first term on the right-hand side as follows
\begin{align*}
	\|P\big(f(P A P)J - Jf(P B P)\big)P\|_1
	\lesssim  \|P A PJ - JP B P\|_q^q
	\le \|[J, P]\|_q^q\,. 
\end{align*} 
To estimate the last two terms on the right-hand side of \eqref{eq:difir}, we notice that $\|f(A)\|\lesssim 1$ and $\|f(B)\|\lesssim 1$ uniformly in $A, B$, and $f$. Consequently,
\begin{align*}
	\|Pf(P A P)[P, J] - [J, P]f(P BP)P\|_1\le  2\|[J, P]\|_1\, \|f\|_{\plainL\infty}\lesssim \|[J, P]\|_1\,.
\end{align*}
%
%This ensures \eqref{eq:on} for \eqref{eq:difir}. 
To summarize, the difference \eqref{eq:difir} has an upper bound like the one in the claim  \eqref{eq:on}.
 
We are going to derive such a bound also for the difference analogous to \eqref{eq:difir}, but  with no $P$ in the argument of $f$: 
\begin{align*}
	Pf(A)PJ - JPf(B)P = &\ P\big(f(A)J - Jf(B)\big)P\\
	& + Pf(A)[P, J] - [J, P]f(B)P\,. 
\end{align*}
In view of \eqref{eq:acomm}, the first term on the right-hand side vanishes. Consequently,  
\begin{align*}
	\|Pf(A)PJ - JPf(B)P\|_1 
	\le 2 \|[P, J]\|_1\, \|f\|_{\plainL\infty}\lesssim \|[J, P]\|_1\,. 
\end{align*}
%
%Together with the bound \eqref{eq:on} for \eqref{eq:difir} this completes the proof of \eqref{eq:on}. 
By combining this with the upper bound on \eqref{eq:difir} and the triangle inequality for the trace norm we arrive at \eqref{eq:on}.
\end{proof}
 
\begin{cor} \label{cor:onfull}
Under the assumptions of Corollary \ref{on:cor} (with $\mathbbm 1$ being the identity operator) we have 
\begin{align}\label{eq:onfull}
	\| \CD(A, P; f) - \CD(B, P; f)\|_1 \lesssim\  &\|[J, P]\|_q^q + \ \|[(\mathbbm 1-J)A, P]\|_q^q\notag\\ 
	&\ + \|[(\mathbbm 1-J)B, P]\|_q^q + \|[J, P]\|_1\,.
\end{align}
This bound is uniform in $A, B, J$, and $f$ in the same sense as in Corollary \ref{on:cor}.
\end{cor}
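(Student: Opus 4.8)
The plan is to reduce everything to the two estimates \eqref{eq:dj} and \eqref{eq:on} already furnished by Corollary \ref{on:cor}, by inserting the resolution $\mathbbm 1 = J + (\mathbbm 1 - J)$ to the right of $\CD(A, P; f)$ and to the left of $\CD(B, P; f)$. First I would write down the purely algebraic identity
\begin{align*}
	\CD(A, P; f) - \CD(B, P; f) &= \big[\CD(A, P; f) J - J\CD(B, P; f)\big]\\
	&\quad + \CD(A, P; f)(\mathbbm 1 - J) - (\mathbbm 1 - J)\CD(B, P; f)\,,
\end{align*}
and then estimate the three resulting pieces separately via the triangle inequality for $\|\cdot\|_1$. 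The first bracket is exactly the left-hand side of \eqref{eq:on}, and since the full hypothesis \eqref{eq:acomm} holds for the triple $(A, B, J)$, Corollary \ref{on:cor} gives $\|\CD(A, P; f) J - J\CD(B, P; f)\|_1 \lesssim \|[J, P]\|_q^q + \|[J, P]\|_1$ right away.

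The only point requiring care is how to apply \eqref{eq:dj} to the operator $\mathbbm 1 - J$, which fails the third condition in \eqref{eq:acomm} (indeed $(A - B)(\mathbbm 1 - J) = A - B \neq 0$ in general). The clean way around this is to invoke \eqref{eq:dj} with the two self-adjoint operators taken \emph{equal}: applying Corollary \ref{on:cor} to the triple $(A, A, \mathbbm 1 - J)$ — for which \eqref{eq:acomm} reduces to $[A, \mathbbm 1 - J] = 0$ and the trivially true $(A - A)(\mathbbm 1 - J) = 0$, both holding because $[A, J] = 0$ — bounds the first summand by $\|\CD(A, P; f)(\mathbbm 1 - J)\|_1 \lesssim \|[P, \mathbbm 1 - J]\|_q^q + \|[(\mathbbm 1 - J)A, P]\|_q^q$, which equals $\|[J, P]\|_q^q + \|[(\mathbbm 1 - J)A, P]\|_q^q$ since $[P, \mathbbm 1 - J] = -[P, J]$. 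Symmetrically, applying Corollary \ref{on:cor} to $(B, B, \mathbbm 1 - J)$ and using the second term on the left-hand side of \eqref{eq:dj} yields $\|(\mathbbm 1 - J)\CD(B, P; f)\|_1 \lesssim \|[J, P]\|_q^q + \|[(\mathbbm 1 - J)B, P]\|_q^q$.

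Adding the three bounds and absorbing the repeated $\|[J, P]\|_q^q$ contributions — legitimate by the quasi-triangle inequality \eqref{eq:ptriangl} — produces exactly \eqref{eq:onfull}. The uniformity assertion then follows for free: since $\|\mathbbm 1 - J\| \le 1 + \|J\|$, every implicit constant inherited from Corollary \ref{on:cor} stays controlled by $\|A\|$, $\|B\|$, $\|J\|$ alone, and the dependence on $f$ is the same as there. The main (and essentially the only) subtlety is the one just isolated: recognizing that the $A$-only estimate \eqref{eq:dj} is available for $\mathbbm 1 - J$ by choosing $B = A$ (respectively $A = B$), rather than attempting to fit $\mathbbm 1 - J$ into the full hypothesis \eqref{eq:acomm}.
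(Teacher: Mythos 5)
Your proposal is correct and follows exactly the paper's own route: the identical decomposition $\CD(A,P;f)-\CD(B,P;f)=\big[\CD(A,P;f)J-J\CD(B,P;f)\big]+\CD(A,P;f)(\mathbbm 1-J)-(\mathbbm 1-J)\CD(B,P;f)$ followed by an application of Corollary \ref{on:cor}. Your extra observation that \eqref{eq:dj} is legitimately available for $\mathbbm 1-J$ (since it only needs $[A,\mathbbm 1-J]=0$, formalized by taking the triple $(A,A,\mathbbm 1-J)$) is a point the paper's one-line proof leaves implicit, and you resolve it correctly.
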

 
\begin{proof}
We observe 
\begin{align*}
	\CD(A, P; f) -  \CD(B, P; f)= &\ \CD(A, P; f)J -  J\CD(B, P; f)\\ 
	&\ + \CD(A, P; f)(\mathbbm 1-J) -  (\mathbbm 1-J)\CD( B, P; f)
\end{align*}
and apply Corollary \ref{on:cor}.
\end{proof}  
 
%%%%%%%%%%%%%%%%%%%%%%%%%%%%%%%%%%%%%%%%%%%%% 
\section{High-temperature analysis}\label{sect:model}
%%%%%%%%%%%%%%%%%%%%%%%%%%%%%%%%%%%%%%%%%%%%%

The purpose of this section is to obtain the large $\a$ and large $T$ asymptotics  for the trace of the operator $D_\a(p_T, \L; f)$ with the symbol $p_T$ of \eqref{eq:pt}, modeling the Fermi symbol \eqref{positiveT:eq} for large $T$. Throughout the section we assume that the function $f$ satisfies Condition \ref{cond:f} with some $\d>0$ and recall that this condition is guaranteed by assumption \eqref{almosthom:eq} as well as by assumption \eqref{homoeta:eq}. We also continue to assume that the truncating region $\L$ satisfies Condition \ref{domain:cond}. Since $\L$ is always fixed, we omit it from the notation and simply write $D_\a(p_T; f)$ and $\CB(p_T; f)$. Recall that $d\ge 2$ throughout. 

%%%%%%%%%%%%%%%%%%%%%%%%%%%%%%%%%%%%%%%%%%%%%
\subsection{Further conditions on the single-particle Hamiltonian $h$}\label{subsect:h}
%%%%%%%%%%%%%%%%%%%%%%%%%%%%%%%%%%%%%%%%%%%%%

So far we assumed that the Hamiltonian $h\in\plainC\infty(\R^d)$ obeys \eqref{eq:hin} with some $m >0$. More restrictively, from now on we assume that there exists an $m\in\N$ such that
%$h\in\plainC\infty(\R^d)$ and that for some constant $m\in\N$, the following bounds hold:
%
\begin{align}\label{eq:h}
	\big|\p_{\bxi}^n h(\bxi)\big|\lesssim \lu\bxi\ru^{2m-|n|}\,,\quad\text{for all }\,n \in\N_0^d\quad\text{and}\,\,\,\,\bxi\in\R^d\,.
\end{align}
Furthermore, we now assume that there exists a function $h_\infty: \R^d\to\R$, being homogeneous of even degree $2m$ (that is, $ h_\infty(t\bxi) = t^{2m} h_\infty(\bxi)$ for all $\bxi\in\R^d$ and all $t >0$), such that
\begin{align}\label{eq:asympt}
	|\bxi|^{-2m}\big|h(\bxi) - h_\infty(\bxi)\big|\to 0\quad \text{ as }\,\, |\bxi|\to\infty\,.
\end{align}
Finally, we reqire that $h_\infty$ is non-degenerate in the sense that
\begin{align}\label{eq:nu}
	2\nu \ceq \min_{|\bxi|=1} h_\infty(\bxi)>0\,.
\end{align} 
Homogeneity and non-degeneracy of $h_\infty$ imply that $h_\infty\ge0$. The conditions \eqref{eq:h}, \eqref{eq:asympt}, and \eqref{eq:nu} also imply that $h$ satisfies \eqref{eq:hin} with the constant $m>0$ from \eqref{eq:h}. We emphasize that from now on this constant is supposed to be integer. This guarantees that $h_\infty\in\plainC\infty(\R^d)$ which enables us to apply the results in Subsection \ref{subsect:asym} to the limiting symbols $1/(1+\mathrm{e}^{h_\infty})$ and $\mathrm{e}^{-h_\infty}$ featured in Section \ref{sect:hta}. 

%%%%%%%%%%%%%%%%%%%%%%%%%%%%%%%%%%%%%%%%%%%%%
\subsection{Modeling the Fermi symbol} 
%%%%%%%%%%%%%%%%%%%%%%%%%%%%%%%%%%%%%%%%%%%%%
  
Given two positive continuous functions $T\mapsto\phi_T\ge 0$ and $T\mapsto\omega_T>0$ on the temperature half-line ${[1,\infty[}$ with the properties
\begin{align}\label{eq:phiom}
	\phi_T\to \phi_\infty\ge 0\,\quad \text{ and }\quad \om_T\to \om_\infty >0 \quad \textup{ as }\quad T\to \infty\,,
\end{align}
we generalize the Fermi symbol $a_{T,\mu}$ of \eqref{positiveT:eq} to the symbol $p_T$ by the definition
\begin{align}\label{eq:pt}
	p_T(\bxi) \ceq \frac{1}{\phi_T + \om_T \exp\big(h(\bxi)/T)\big)}\,,\quad \bxi\in\R^d\,.
\end{align}
We also consider its ``high-temperature limit'' $p_\infty$ naturally defined by
\begin{align}\label{limit_symbol}
	p_\infty(\bxi) \ceq \frac{1}{\phi_\infty + \om_\infty \exp\big(h_\infty(\bxi)\big)}\,. 
\end{align}

\begin{thm}\label{thm:nonsm} 
Let $p_T$ be the symbol defined in \eqref{eq:pt}. Then 
\begin{align}\label{eq:nonsm}
	\lim \big(\a T^{\frac{1}{2m}}\big)^{1-d} \tr D_\a(p_T; f) =  \CB(p_\infty; f)\,,
\end{align}
as $\a T^{1/2m}\to \infty$ and $T\to\infty$.
\end{thm}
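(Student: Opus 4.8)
The plan is to strip the temperature out of the decay scale by rescaling the frequency variable, which turns the two‑parameter limit into the single‑scaling‑parameter continuity statement of Corollary~\ref{prop:uni}. The reason the results of Subsection~\ref{subsect:asym} do not apply to $p_T$ directly is that $p_T$ of \eqref{eq:pt} fails \eqref{abounds:eq} uniformly in $T$: its effective scale of decay is $T^{1/2m}$, so that $p_T$ converges pointwise, as $T\to\infty$, only to the constant $(\phi_\infty+\om_\infty)^{-1}$ and carries no boundary information in the limit. Rescaling by $T^{1/2m}$ simultaneously exposes the genuine limit $p_\infty$ of \eqref{limit_symbol} and restores uniform decay.

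Concretely, set $\tilde p_T(\boldsymbol\zeta)\ceq p_T\big(T^{1/2m}\boldsymbol\zeta\big)$. Substituting $\boldsymbol\eta=c\bxi$ in the inner integral of \eqref{eq:opa} gives the elementary identity
\begin{equation*}
  \op_\a\big(a(c\,\cdot)\big)=\op_{\a/c}(a)\,,\qquad c>0\,.
\end{equation*}
Choosing $c=T^{-1/2m}$, so that $p_T(\bxi)=\tilde p_T(c\bxi)$ and $(f\circ p_T)(\bxi)=(f\circ\tilde p_T)(c\bxi)$, and observing that $c$ acts only on the frequency variable whereas $\chi_\L$ acts in physical space, one obtains the exact operator identity — with the \emph{same} truncating region $\L$ —
\begin{equation*}
  D_\a(p_T,\L;f)=D_{\a T^{1/2m}}(\tilde p_T,\L;f)\,.
\end{equation*}
Hence it suffices to prove
\begin{equation*}
  \lim\big(\a T^{1/2m}\big)^{1-d}\,\tr D_{\a T^{1/2m}}(\tilde p_T;f)=\CB(p_\infty;f)
\end{equation*}
as $\a T^{1/2m}\to\infty$ and $T\to\infty$. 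This is exactly Corollary~\ref{prop:uni} with its scaling parameter taken to be $\a T^{1/2m}$ and its continuity parameter $\l\ceq 1/T\downarrow 0$, the two limits taken jointly.

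It remains to check the two hypotheses of Corollary~\ref{prop:uni} for the family $\{\tilde p_T\}$ with limit $p_\infty$. Pointwise convergence $\tilde p_T\to p_\infty$ follows from \eqref{eq:phiom} together with $T^{-1}h\big(T^{1/2m}\boldsymbol\zeta\big)\to h_\infty(\boldsymbol\zeta)$, which in turn uses homogeneity, $h_\infty\big(T^{1/2m}\boldsymbol\zeta\big)=T\,h_\infty(\boldsymbol\zeta)$, and \eqref{eq:asympt}, giving $\big|h\big(T^{1/2m}\boldsymbol\zeta\big)-h_\infty\big(T^{1/2m}\boldsymbol\zeta\big)\big|=o\big(T|\boldsymbol\zeta|^{2m}\big)=o(T)$ for each fixed $\boldsymbol\zeta$. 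The substantive point is the uniform decay bound \eqref{abounds:eq}. With $g_T(\boldsymbol\zeta)\ceq T^{-1}h\big(T^{1/2m}\boldsymbol\zeta\big)$ the chain rule gives $\p_{\boldsymbol\zeta}^k g_T(\boldsymbol\zeta)=T^{|k|/2m-1}\big(\p^k h\big)\big(T^{1/2m}\boldsymbol\zeta\big)$; inserting \eqref{eq:h} the powers of $T$ cancel and leave $\big|\p_{\boldsymbol\zeta}^k g_T(\boldsymbol\zeta)\big|\lesssim\lu\boldsymbol\zeta\ru^{2m-|k|}$ uniformly in $T\ge1$, while \eqref{eq:nu} and \eqref{eq:asympt} give $g_T(\boldsymbol\zeta)\gtrsim|\boldsymbol\zeta|^{2m}$ for large $|\boldsymbol\zeta|$, again uniformly in $T$. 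Writing $\tilde p_T=F_T\circ g_T$ with $F_T(s)\ceq(\phi_T+\om_T\mathrm e^{s})^{-1}$, the bounds $\om_T\gtrsim1$, $\phi_T\gtrsim0$ from \eqref{eq:phiom} make every derivative of $F_T$ bounded by a constant times $\mathrm e^{-s}$ on the range of $g_T$, uniformly in $T$. The Fa\`a di Bruno formula then produces $\big|\p_{\boldsymbol\zeta}^k\tilde p_T(\boldsymbol\zeta)\big|\lesssim\lu\boldsymbol\zeta\ru^{C_k}\mathrm e^{-c|\boldsymbol\zeta|^{2m}}$ with $c>0$ and $C_k$ independent of $T$, so that $\tilde p_T$ — and likewise $p_\infty$ — decays faster than any power and satisfies \eqref{abounds:eq} with an arbitrarily large exponent, uniformly in $T\ge T_0$. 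Both hypotheses being verified, Corollary~\ref{prop:uni} yields \eqref{eq:nonsm}.

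I expect the uniform derivative estimate for $\tilde p_T$ — the exact cancellation of the powers of $T$ in $\p_{\boldsymbol\zeta}^k g_T$ and the ensuing Fa\`a di Bruno bookkeeping — to be the only genuinely technical step; the rescaling identity and the pointwise limit are immediate.
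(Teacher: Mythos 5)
Your rescaling identity $D_\a(p_T,\L;f)=D_{\a T^{1/2m}}(\tilde p_T,\L;f)$ and the identification of the pointwise limit $p_\infty$ are both correct, and the paper performs the same change of variables. The gap sits exactly at the step you yourself single out as the only technical one: the claimed uniform bound $\big|\p_{\boldsymbol\zeta}^k g_T(\boldsymbol\zeta)\big|\lesssim\lu\boldsymbol\zeta\ru^{2m-|k|}$ is false for $|k|>2m$ near the origin. From $\p_{\boldsymbol\zeta}^k g_T(\boldsymbol\zeta)=T^{|k|/2m-1}\big(\p^k h\big)\big(T^{1/2m}\boldsymbol\zeta\big)$ and \eqref{eq:h} one only gets $\big|\p_{\boldsymbol\zeta}^k g_T(\boldsymbol\zeta)\big|\lesssim T^{|k|/2m-1}\lu T^{1/2m}\boldsymbol\zeta\ru^{2m-|k|}$; for $|k|\le 2m$ the powers of $T$ do cancel, but for $|k|>2m$ the second factor is only $O(1)$ on the ball $|\boldsymbol\zeta|\lesssim T^{-1/2m}$, leaving the divergent prefactor $T^{|k|/2m-1}$. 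Indeed $\p_{\boldsymbol\zeta}^k g_T(0)=T^{|k|/2m-1}(\p^k h)(0)$, which is unbounded in $T$ unless every derivative of $h$ of order exceeding $2m$ vanishes at the origin --- something not guaranteed by \eqref{eq:h}--\eqref{eq:nu}, since $h$ need not be a polynomial. Consequently $\tilde p_T$ does \emph{not} satisfy \eqref{abounds:eq} uniformly in $T$, and Corollary \ref{prop:uni}, whose hypothesis is precisely uniform validity of \eqref{abounds:eq} over the family of symbols, cannot be applied to $\{\tilde p_T\}$. This is the very obstruction the paper flags at the end of Subsection \ref{subsect:asym}.

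The paper's proof is organized around this point. It inserts the cut-off $\z_{rT}$, which in your rescaled variables removes the ball $|\boldsymbol\zeta|\le r$ where the uniform bounds fail; on its complement the truncated rescaled symbol $b_T=\z_r\,\tilde p_T$ does satisfy \eqref{abounds:eq} uniformly in $T$ (with constants depending on $r$), and Corollary \ref{prop:uni} then yields the coefficient $\CB(\z_r\,p_\infty;f)$. The price is the comparison Lemma \ref{lem:compare}, proved with the quasi-commutator bounds of Section \ref{sect:qc} and the multi-scale estimate of Proposition \ref{prop:cross_smooth}, which shows that the cut-off changes the normalized trace by at most $O\big(r^{(d-1)/2m}\big)$; the cut-off is finally removed by letting $r\to0$ via Corollary \ref{cor:contbn}. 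Your proposal bypasses all of this machinery, so as written it does not establish \eqref{eq:nonsm}; to repair it you would have to either restrict to Hamiltonians whose derivatives of order $>2m$ vanish at the origin, or reinstate the cut-off argument.
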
 

We also consider the operator $D_\a(\l_T\,p_T; f)$ with the symbol $\l_T\, p_T$, where $\l_T>0$  is, for the time being, an arbitrary continuous function of $T$ that tends to zero as $T\to\infty$. 
  
\begin{thm}\label{thm:sm}
Let $f_0$ be as in Proposition \ref{prop:homo} and $\eta$ be as in \eqref{eta:eq}. Assume that $\a T^{1/2m}\to \infty$ and $T\to\infty$. Then the following implications hold: 
\begin{enumerate}
	\item If $f\in \plainC2(\R\setminus\{0\})\cap \plainC{}(\R)$ satisfies \eqref{almosthom:eq}, then
	\begin{align}\label{eq:smhom}
		\lim \big(\a T^{\frac{1}{2m}}\big)^{1-d}\l_T^{-\d} \tr D_\a(\l_T p_T; f) =  \CB(p_\infty; f_0)\,.
	\end{align}
	\item If $f\in \plainC2(\R\setminus\{0\})\cap \plainC{}(\R)$ satisfies \eqref{homoeta:eq}, then
	\begin{align}\label{eq:smhomlog}
		\lim \big(\a T^{\frac{1}{2m}}\big)^{1-d}\l_T^{-1} \tr D_\a(\l_T\, p_T; f) =  \CB(p_\infty; \eta)\,.
	\end{align}
\end{enumerate}
\end{thm}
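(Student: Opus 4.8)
The plan is to reduce both assertions to Theorem \ref{thm:nonsm} by exploiting two structural features of the difference operator. First, $f\mapsto D_\a(p_T; f)$ is \emph{linear} in $f$ and \emph{annihilates linear functions}: for $\ell(t)=ct$ one has $\chi_\L\,\ell(W_\a(p_T))\,\chi_\L=c\,W_\a(p_T)=W_\a(\ell\circ p_T)$, so $D_\a(p_T;\ell)=0$. Second, since $p_T\ge0$ all the operators $W_\a(p_T)=\chi_\L\op_\a(p_T)\chi_\L$ are non-negative, so the elementary functional-calculus scaling $f(\l_T X)=\l_T^{\d}f_{\l_T}(X)$ with $f_{\l_T}(t)\ceq\l_T^{-\d}f(\l_T t)$ applies to $X=W_\a(p_T)$. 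Combining the two yields the exact identities
\[
	D_\a(\l_T p_T; f)=\l_T^{\d}\,D_\a(p_T; f_{\l_T})
	\quad\text{and}\quad
	D_\a(\l_T p_T; f)=\l_T\,D_\a(p_T; \hat f_{\l_T}),
\]
the first for part (1), and the second for part (2), where $\hat f_{\l_T}(t)\ceq\l_T^{-1}f(\l_T t)+(\ln\l_T)\,t$; here the annihilation of the divergent linear term $(\ln\l_T)\,t$ is what lets me replace $f_{\l_T}(t)=\l_T^{-1}f(\l_T t)$ by $\hat f_{\l_T}$.

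The gain is that the rescaled test functions converge to the respective homogeneous models. Writing $f_0(t)-\l_T^{-\d}f_0(\l_T t)=0$ by homogeneity and $\eta(t)=\l_T^{-1}\eta(\l_T t)+(\ln\l_T)t$ via the identity $\eta(\l_T t)=\l_T\eta(t)-(\l_T\ln\l_T)t$, the assumptions \eqref{almosthom:eq} and \eqref{homoeta:eq} give $f_{\l_T}\to f_0$ and $\hat f_{\l_T}\to\eta$ as $\l_T\to0$ (in the logarithmic case the two divergent linear contributions cancel, which is exactly the purpose of the correction $(\ln\l_T)t$). A direct differentiation moreover shows that $f_{\l_T}$ and $\hat f_{\l_T}$ obey Condition \ref{cond:f} on the bounded spectral range $[0,\sup p_T]$ \emph{with constants bounded uniformly in} $T$: the factors $\l_T^{k-\d}$ (resp.\ $\l_T^{k-1}$) produced by differentiation compensate precisely the singular growth of $f^{(k)}(\l_T t)$ near the origin.

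Both statements thus reduce to the single claim that, for a family $F_T\to F_\infty$ of test functions obeying Condition \ref{cond:f} with uniformly bounded constants, one has $\big(\a T^{1/2m}\big)^{1-d}\tr D_\a(p_T; F_T)\to\CB(p_\infty; F_\infty)$, applied with $(F_T,F_\infty)=(f_{\l_T},f_0)$ and exponent $\d$ for part (1) and $(F_T,F_\infty)=(\hat f_{\l_T},\eta)$ for part (2); dividing the identities above by $\l_T^{\d}$ (resp.\ $\l_T$) then produces \eqref{eq:smhom} and \eqref{eq:smhomlog}. This claim I would obtain by pairing a \emph{uniform-in-$f$} version of Theorem \ref{thm:nonsm} — the convergence $\big(\a T^{1/2m}\big)^{1-d}\tr D_\a(p_T; f)\to\CB(p_\infty; f)$ holding uniformly over $f$ with fixed Condition \ref{cond:f} constants — with the continuity of $\CB(p_\infty;\cdot)$ in its function slot, namely a linear bound $|\CB(p_\infty; g)|\lesssim\|g\|$ in the Condition \ref{cond:f} constants $\|g\|$ (in the spirit of Proposition \ref{bddef:prop}), applied to $g=F_T-F_\infty$, whose constants tend to $0$.

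The main obstacle is the uniformity demanded in this last step. Everything hinges on knowing that Theorem \ref{thm:nonsm} converges uniformly over test functions sharing the same Condition \ref{cond:f} constants, so that the $T$-dependent functions $F_T$ may legitimately be substituted. This is precisely what the quasi-commutator estimates of Section \ref{sect:qc}, which are uniform in $f$ already at the level of Proposition \ref{prop:perturb}, together with the multi-scale commutator bound of Proposition \ref{prop:cross_smooth}, are built to supply; they are what carries the proof of Theorem \ref{thm:nonsm} through uniformly. The one genuinely $T$-dependent analytic ingredient — treating $p_T$ as a multi-scale symbol whose effective scale is $\asymp T^{1/2m}$, so that the largeness hypothesis becomes $\a\,T^{1/2m}\gtrsim1$ — is already internal to Theorem \ref{thm:nonsm}, and nothing beyond it is needed here.
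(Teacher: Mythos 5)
Your first half coincides with the paper's: the rescaling $D_\a(\l_T p_T;f)=\l_T^{\d}D_\a(p_T;\tilde f_T)$ with $\tilde f_T(t)=\l_T^{-\d}f(\l_T t)$, the use of the annihilation of linear test functions to absorb the divergent $(\ln\l_T)\,t$ term in the logarithmic case, and the observation that the rescaled functions satisfy \eqref{eq:fbound} with $T$-independent constants are exactly the paper's opening moves. The gap is in the final step. You reduce everything to a \emph{uniform-in-$f$} version of Theorem \ref{thm:nonsm} and assert that this uniformity is supplied by the quasi-commutator estimates of Section \ref{sect:qc}. But those estimates only make the \emph{comparison} step (Lemma \ref{lem:compare}) and the continuity in $r$ (Corollary \ref{cor:contbn}) uniform in $f$. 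The proof of Theorem \ref{thm:nonsm} has a third ingredient, the asymptotics \eqref{eq:cutoff} for the cut-off symbol, which rests on Corollary \ref{prop:uni} and hence on Proposition \ref{fixedT:prop} (\cite[Theorem 2.3]{Sobolev2019}); that result is stated as uniform in the \emph{symbol} only, not in the test function, and no uniformity in $f$ over Condition \ref{cond:f} classes is claimed or established anywhere. This is precisely the difficulty the paper avoids: in the analogue of \eqref{eq:cutoff} it keeps the symbol rescaled as $\l_T\z_{rT}p_T$ and invokes the dedicated two-parameter results, Propositions \ref{prop:homo} and \ref{prop:homolog} (\cite[Theorems 2.5, 2.6]{Sobolev2019}), whose conclusions are already stated with the homogeneous model $f_0$ (resp.\ $\eta$) in the limit coefficient. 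Your proposal never uses these propositions, so the load-bearing input is replaced by an unproven uniformity assertion.

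The architecture is nevertheless salvageable without the uniform version, and you have all the pieces: by linearity write $\tr D_\a(p_T;F_T)=\tr D_\a(p_T;F_\infty)+\tr D_\a(p_T;F_T-F_\infty)$. The first term is handled by Theorem \ref{thm:nonsm} applied to the \emph{fixed} function $F_\infty\in\{f_0,\eta\}$ (which satisfies Condition \ref{cond:f} with $\CT=\{0\}$), so no uniformity in $f$ is needed there. For the second term, your own computation shows $F_T-F_\infty=\l_T^{-\d}(f-f_0)(\l_T\,\cdot)$ satisfies \eqref{eq:fbound} on the relevant spectral interval with constants $\e_T\to0$; combining \eqref{eq:dj} with $J=\mathbbm 1$ and Proposition \ref{prop:cross_smooth} applied to $p_T$ as a multi-scale symbol gives $\|D_\a(p_T;F_T-F_\infty)\|_1\lesssim \e_T\,(\a T^{1/2m})^{d-1}$, which is exactly the $o$-estimate you need. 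Either this repair or the paper's route through Propositions \ref{prop:homo} and \ref{prop:homolog} closes the argument; as written, the proposal does not.
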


To prove these two theorems we compare the operator $D_\a$ for two different symbols defined as follows. Firstly, we pick an arbitrary real-valued ``cut-off'' function $\z\in\plainC\infty(\R^d)$ with $\z(\bxi)=0$ if $|\bxi|\le 1$ and $\z(\bxi)=1$ if $|\bxi|\ge 2$. Moreover, we define two scaled versions of $\z$ by
\begin{align}\label{zeta function}
	\z_T(\bxi) \ceq \z\big(\bxi T^{-\frac{1}{2m}}\big)\,,\quad \widetilde\z_T(\bxi)\ceq\z_T(\bxi/2)\,,\quad  \bxi\in\R^d\,,
\end{align}
so that $\z_T \widetilde\z_T = \widetilde\z_T$. For a fixed number $r \in {]0, 1]}$ we now consider the operators
\begin{align*}
	A=\op_\a(p_T)\,, \quad B=\op_\a(\z_{rT}\, p_T)\,,\quad P=\chi_\L\,, \quad 
	J=\op_\a(\widetilde\z_{rT})\,. 
\end{align*}
They fulfill \eqref{eq:acomm} and their (uniform) norms are uniformly bounded in $T$. Thus we can use Corollary \ref{cor:onfull} for the proof of the following ``comparison lemma'':

\begin{lem}\label{lem:compare}
Assume that $T\gtrsim 1$ and $\a (r T)^{\frac{1}{2m}}\gtrsim 1$ for a fixed $r \in {]0, 1]}$. Then, using \eqref{zeta function}, we have the trace-norm estimate
\begin{align}\label{eq:compare}
	\|D_\a(p_T; f) - D_\a(\z_{rT}\, p_T; f)\|_1\lesssim \a ^{d-1} (rT)^{\frac{d-1}{2m}}\,,
\end{align}
with an implicit constant independent of $\a, T$, and $r$.
\end{lem}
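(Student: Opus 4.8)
The goal is to control the trace-norm difference between $D_\a(p_T; f)$ and $D_\a(\z_{rT}\,p_T; f)$, where the two symbols differ only in the region $|\bxi|\lesssim (rT)^{1/2m}$ because $\z_{rT}$ equals $1$ for $|\bxi|\ge 2(rT)^{1/2m}$. The strategy is to recognize that $D_\a(a; f) = \CD(\op_\a(a), \chi_\L; f)$ is exactly the operator difference $P f(PAP)P - P f(A) P$ studied in Section~\ref{sect:qc}, and then to apply the comparison bound \eqref{eq:onfull} of Corollary~\ref{cor:onfull} with the specific operators $A=\op_\a(p_T)$, $B=\op_\a(\z_{rT}\,p_T)$, $P=\chi_\L$, and $J=\op_\a(\widetilde\z_{rT})$ indicated just before the statement. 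First I would verify the algebraic hypotheses \eqref{eq:acomm}: the translation-invariance of $\op_\a$ gives $[A,J]=[B,J]=0$, and the relation $\z_{rT}\widetilde\z_{rT}=\widetilde\z_{rT}$ forces $(A-B)J=\op_\a\big((1-\z_{rT})p_T\,\widetilde\z_{rT}\big)=0$ since $(1-\z_{rT})\widetilde\z_{rT}=0$. With these in hand, Corollary~\ref{cor:onfull} yields

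\begin{align*}
	\|D_\a(p_T; f) - D_\a(\z_{rT}\, p_T; f)\|_1\lesssim\  &\|[J, P]\|_q^q + \|[(\mathbbm 1-J)A, P]\|_q^q\\
	&\ + \|[(\mathbbm 1-J)B, P]\|_q^q + \|[J, P]\|_1\,,
\end{align*}

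for any $q\in(0,\min\{1,\d\})$, so the whole problem reduces to estimating four commutator (quasi-)norms.

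The central tool for the three Schatten-quasi-norm terms is Proposition~\ref{prop:cross_smooth}, which bounds $\|[\op_\a(a),\chi_\L]\|_q^q$ for a multi-scale symbol by $\a^{d-1}\int v(\bxi)^q/\tau(\bxi)\,\mathrm d\bxi$. The key step is therefore to cast each of the relevant symbols---namely $\widetilde\z_{rT}$, $(\mathbbm 1-J)A$ essentially $(1-\widetilde\z_{rT})p_T$, and $(1-\widetilde\z_{rT})\z_{rT}p_T$---as multi-scale symbols \eqref{scales:eq} with an appropriate scale $\tau$ and amplitude $v$ that capture the temperature dependence. Because $h(\bxi)/T$ and the cut-off $\z(\bxi(rT)^{-1/2m})$ both vary on the length scale $\tau(\bxi)\asymp (rT)^{1/2m}\lu\bxi(rT)^{-1/2m}\ru$, each differentiation in $\bxi$ costs a factor $(rT)^{-1/2m}$, so the natural choice is a scale $\tau$ of order $(rT)^{1/2m}$ (at least on the support of the relevant amplitude) with a bounded amplitude $v$ supported in the annulus $|\bxi|\lesssim (rT)^{1/2m}$. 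One then checks the Lipschitz bound \eqref{Lip:eq} (after possibly rescaling $\tau$ by a fixed constant $<1$) and the slow-variation bound \eqref{w:eq}, and confirms the lower bound \eqref{tau_low:eq}, which under the hypothesis $\a(rT)^{1/2m}\gtrsim1$ is exactly $\a\tau_{\mathrm{inf}}\gtrsim1$. Evaluating the resulting integral $\int v^q/\tau\,\mathrm d\bxi$ over the annulus of radius $\sim(rT)^{1/2m}$ in $\R^d$ produces a volume factor $\sim(rT)^{d/2m}$ divided by the scale $\sim(rT)^{1/2m}$, giving the advertised order $\a^{d-1}(rT)^{(d-1)/2m}$.

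The only term outside Proposition~\ref{prop:cross_smooth} is the trace-norm (i.e.\ $\GS_1$) commutator $\|[J,P]\|_1=\|[\op_\a(\widetilde\z_{rT}),\chi_\L]\|_1$. This I expect to be the \emph{main obstacle}, since Proposition~\ref{prop:cross_smooth} only furnishes $\GS_q$ bounds with $q\le 1$, and a bare $\GS_1$ estimate for a smooth-symbol commutator against a Lipschitz boundary is more delicate. My plan is to obtain it from the same multi-scale machinery: either the cited source \cite[Lemma 3.4]{LeSpSo3} supplies the $q=1$ case of \eqref{eq:a_multi} directly (the statement permits $q\in(0,1]$), in which case $\|[J,P]\|_1\lesssim\a^{d-1}\int v(\bxi)/\tau(\bxi)\,\mathrm d\bxi\lesssim \a^{d-1}(rT)^{(d-1)/2m}$ by the same integral computation, or, if only $q<1$ is available, I would interpolate or use the boundedness of $\widetilde\z_{rT}$ together with a direct Wiener--Hopf commutator estimate. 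In all four terms the dominant contribution is $\a^{d-1}(rT)^{(d-1)/2m}$ (the $\GS_q$ terms with $q<1$ yield the same order because the integrand and the annulus volume conspire identically), and summing them gives \eqref{eq:compare}, with the implicit constant independent of $\a,T,r$ as required because every bound above is uniform in these parameters.
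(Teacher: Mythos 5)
Your overall architecture is exactly the paper's: verify the hypotheses \eqref{eq:acomm} for $A=\op_\a(p_T)$, $B=\op_\a(\z_{rT}\,p_T)$, $P=\chi_\L$, $J=\op_\a(\widetilde\z_{rT})$, invoke Corollary \ref{cor:onfull}, and estimate the resulting commutators via Proposition \ref{prop:cross_smooth} together with the integral computation over the region $|\bxi|\lesssim (rT)^{1/2m}$. Your worry about the $\|[J,P]\|_1$ term is unfounded: \eqref{eq:a_multi} is stated for $q\in(0,1]$, and the paper simply applies it with $q=1$ for that term, exactly as in your first branch.

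The one step that would fail as written is the choice of scale for the symbols containing $p_T$. The claim that ``each differentiation in $\bxi$ costs a factor $(rT)^{-1/2m}$'', leading to a scale $\tau\asymp(rT)^{1/2m}$ on the support of the amplitude, is correct for the pure cut-off $1-\widetilde\z_{rT}$ (and the paper does use the constant scale $\tau=(rT)^{1/2m}$ to bound $\|[J,P]\|_q$), but it is not correct for $a=(1-\widetilde\z_{rT})p_T$. The available bound \eqref{eq:h} controls the derivatives of $\mathrm{e}^{h(\bxi)/T}$ on $|\bxi|\le 4(rT)^{1/2m}$ only by $\lu\bxi\ru^{-|k|}\mathrm{e}^{h(\bxi)/T}$: for instance the term $\p_\bxi^k h/T$ in the Fa\`a di Bruno expansion is bounded by $\lu\bxi\ru^{2m-|k|}/T$, which for $|\bxi|=O(1)$ and $|k|>2m$ is of order $T^{-1}$ and exceeds $(rT)^{-|k|/2m}$ once $T$ is large and $r$ is fixed. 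Hence \eqref{scales:eq} fails with your constant scale, and the verification step you defer (``one then checks \eqref{Lip:eq} \dots and \eqref{w:eq}'') cannot be completed. The repair is the paper's choice $\tau(\bxi)=\frac12\lu\bxi\ru$ and $v(\bxi)=\lu\bxi(rT)^{-1/2m}\ru^{-\b}$ for these two symbols, which does satisfy \eqref{Lip:eq}, \eqref{w:eq}, and \eqref{tau_low:eq}, and still produces $\int v^q/\tau\,\mathrm{d}\bxi\lesssim\int|\bxi|^{-1}\lu\bxi(rT)^{-1/2m}\ru^{-\b q}\,\mathrm{d}\bxi\asymp(rT)^{(d-1)/2m}$, so the advertised bound \eqref{eq:compare} is unaffected. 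Apart from this point your plan is sound and matches the paper's proof.
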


\begin{proof}
Let us estimate the right-hand side of \eqref{eq:onfull} and start with a bound for $\|[J, P]\|_q$, $q\le 1$.  Since $[J, P] = -[\mathbbm 1-J, P]$, we use Proposition \ref{prop:cross_smooth} with $a = 1-\widetilde\z_{rT}$. This symbol satisfies \eqref{scales:eq} with scale and amplitude functions
\begin{align*}
	\tau(\bxi) = (rT)^{\frac{1}{2m}}\,,\quad v(\bxi) = \lu \bxi (rT)^{-\frac{1}{2m}}\ru^{-\b}\,,\quad \bxi\in\R^d\,,
\end{align*}
with an arbitrary $\b>0$. Now we assume that $\b q >d$.  The conditions \eqref{Lip:eq} and \eqref{w:eq} are obviously satisfied, and hence Proposition \ref{prop:cross_smooth} is applicable. We estimate the integral on the right-hand side of \eqref{eq:a_multi}as follows:
\begin{align*}
	\int_{\R^d} \frac{v(\bxi)^q}{\tau(\bxi)} \,
	\mathrm{d}\bxi = (rT)^{-\frac{1}{2m}}\int_{\R^d} \lu \bxi (r T)^{-\frac{1}{2m}}\ru^{-\b q}\,\mathrm{d}\bxi
	\lesssim (rT)^{\frac{d-1}{2m}}\,.  
\end{align*}
Thus, under our assumptions on $\a, T$, and $r$ the condition \eqref{tau_low:eq} is satisfied, and hence, by \eqref{eq:a_multi}, we have
\begin{align*}
	\|[J, P]\|_q^q  = \|[\op_\a(a), \chi_\L]\|_q^q\lesssim \a^{d-1}(rT)^{\frac{d-1}{2m}}\,.  
\end{align*}
Estimating $\|[(\mathbbm 1-J)A, P]\|_q$ and $\|[(\mathbbm 1-J)B, P]\|_q$ is somewhat trickier. For the first commutator we are going to use Proposition \ref{prop:cross_smooth} with the symbol  $a = (1-\widetilde\z_{rT})p_T$. At first we estimate the derivatives of ${\mathrm e}^{h(\bxi)/T}$ for $|\bxi|\le 4(rT)^{\frac{1}{2m}}$ using \eqref{eq:h}:
\begin{align*}
	\big|\p_{\bxi}^k \,{\mathrm e}^{\frac{h(\bxi)}{T}}\big|\lesssim 
	\lu\bxi\ru^{-|k|}{\mathrm e}^{\frac{h(\bxi)}{T}}
	\lesssim \lu\bxi\ru^{-|k|}\,,\quad k\in\N_0^d\,.
\end{align*} 
\newpage
\noindent Furthermore,
\begin{align*}
	\big|\p_{\bxi}^k \big(1-\widetilde\z_{rT}(\bxi)\big)\big|\lesssim (rT)^{-\frac{|k|}{2m}}
	\chi_{\{|\bxi|\le 4(rT)^{\frac{1}{2m}}\}}(\bxi)\lesssim  \lu\xi\ru^{-|k|}\,
	\chi_{\{|\bxi|\le 4(rT)^{\frac{1}{2m}}\}}(\bxi)\,.
\end{align*}
Therefore, we obtain from \eqref{eq:pt} that 
\begin{align*}
	\big|\p_{\bxi}^k \,a(\bxi)\big|\lesssim \lu \bxi\ru^{-|k|}\lu \bxi (rT)^{-\frac{1}{2m}}\ru^{-\b}\,.
\end{align*}
with an arbitrary $\b >d/q$. Consequently, the symbol $a$ satisfies  \eqref{scales:eq} with the scale and amplitude 
\begin{align*}
	\tau(\bxi) = \frac{1}{2}\lu\bxi\ru\,,\quad 
	v(\bxi) = \lu \bxi (rT)^{-\frac{1}{2m}}\ru^{-\b}\,,\quad \bxi\in\R^d\,.
\end{align*} 
Again the conditions \eqref{Lip:eq}, \eqref{w:eq}, and \eqref{tau_low:eq} are satisfied, and we can use Proposition \ref{prop:cross_smooth} to produce the bound
\begin{align*}
	\int_{\R^d} \frac{v(\bxi)^q}{\tau(\bxi)}\, \mathrm{d}\bxi\le 2\int_{\R^d} |\bxi|^{-1}
	\lu \bxi (rT)^{-\frac{1}{2m}}\ru^{-\b q} \, \mathrm{d}\bxi
	\lesssim (rT)^{\frac{d-1}{2m}}\,.
\end{align*}  
Thus by \eqref{eq:a_multi}, 
\begin{align*}
	\|[(\mathbbm 1-J)A, P]\|_q^q = \|[\op_\a(a), \chi_\L]\|_q^q\lesssim \a^{d-1} (rT)^{\frac{d-1}{2m}}\,.  
\end{align*}
The bound for the commutator $[(\mathbbm 1-J) B, P]$ is proved in the same way.  Substituting the above bounds into the statement \eqref{eq:onfull} of Corollary \ref{cor:onfull}, we get the claimed estimate \eqref{eq:compare}.
\end{proof}

\noindent A useful consequence of this fact is the following continuity statement: 

\begin{cor}\label{cor:contbn} With the function $\z_r$ defined in \eqref{zeta function} and the symbol $p_\infty$ defined in \eqref{limit_symbol} we have
\begin{align}\label{eq:zr}
	\lim_{r\to 0} \CB(\z_r p_\infty; f) = \CB(p_\infty; f)\,.
\end{align}
\end{cor}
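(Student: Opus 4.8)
The plan is to deduce \eqref{eq:zr} not from the symbol‑continuity of Proposition \ref{prop:contb} — which does \emph{not} apply here, since although $\z_r p_\infty\to p_\infty$ pointwise as $r\downarrow0$, the derivatives $\p_\bxi^k\z_r$ are of size $r^{-|k|/2m}$ on the annulus $|\bxi|\asymp r^{1/2m}$, so the family $\{\z_r p_\infty\}$ fails to satisfy \eqref{abounds:eq} uniformly in $r$ — but rather from the fixed‑scale asymptotics of Proposition \ref{fixedT:prop} combined with a trace‑norm comparison of exactly the kind already proved in Lemma \ref{lem:compare}. A preliminary observation that makes the second ingredient available is that $h_\infty$, being smooth and homogeneous of the \emph{integer} degree $2m$, is a homogeneous polynomial; consequently $p_\infty$ is the symbol of the form \eqref{eq:pt} associated, at $T=1$ and with the constant weights $\phi\equiv\phi_\infty$, $\om\equiv\om_\infty$, to the Hamiltonian $h_\infty$, and $h_\infty\in\plainC\infty(\R^d)$ satisfies \eqref{eq:h}, satisfies \eqref{eq:asympt} trivially, and satisfies \eqref{eq:nu} by hypothesis.

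First I would record the two limit representations. For each \emph{fixed} $r\in(0,1]$ both $p_\infty$ and $\z_r p_\infty$ are Schwartz functions (because $h_\infty$ is a polynomial with $h_\infty(\bxi)\ge 2\nu|\bxi|^{2m}$ by \eqref{eq:nu}, so $\mathrm e^{h_\infty}$ grows super‑polynomially, and $\z_r p_\infty$ agrees with $p_\infty$ for $|\bxi|\ge 2r^{1/2m}$). Hence both satisfy \eqref{abounds:eq} with any $\b>d\max\{1,\d^{-1}\}$, with $r$‑dependent constants, and Proposition \ref{fixedT:prop} gives
\begin{equation*}
  \CB(p_\infty; f)=\lim_{\a\to\infty}\a^{1-d}\tr D_\a(p_\infty; f),\qquad
  \CB(\z_r p_\infty; f)=\lim_{\a\to\infty}\a^{1-d}\tr D_\a(\z_r p_\infty; f).
\end{equation*}
The crucial step is then the uniform comparison bound
\begin{equation}\label{eq:pinfcompare}
  \|D_\a(p_\infty; f)-D_\a(\z_r p_\infty; f)\|_1\lesssim \a^{d-1}\,r^{\frac{d-1}{2m}},\qquad \a\, r^{\frac{1}{2m}}\gtrsim 1,
\end{equation}
with implicit constant independent of $\a$ and $r$. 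I would obtain \eqref{eq:pinfcompare} by re‑running the proof of Lemma \ref{lem:compare} verbatim, with $p_T$ replaced by $p_\infty$ and $\z_{rT}$ replaced by $\z_r$ (i.e.\ taking $T=1$ and Hamiltonian $h_\infty$): the only properties of the Hamiltonian used there are the bounds \eqref{eq:h}, which $h_\infty$ inherits, so in particular the estimate $|\p_\bxi^k\mathrm e^{h_\infty(\bxi)}|\lesssim \lu\bxi\ru^{-|k|}\mathrm e^{h_\infty(\bxi)}$ on $|\bxi|\le 4r^{1/2m}$ holds there since $\lu\bxi\ru\lesssim1$, and the multi‑scale commutator estimates of Proposition \ref{prop:cross_smooth} then yield \eqref{eq:pinfcompare} exactly as they yielded \eqref{eq:compare}.

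Finally I would combine the two ingredients. Taking the difference of the two limits above and using $|\tr X|\le\|X\|_1$ together with \eqref{eq:pinfcompare},
\begin{equation*}
  |\CB(p_\infty; f)-\CB(\z_r p_\infty; f)|=\lim_{\a\to\infty}\a^{1-d}\bigl|\tr\bigl(D_\a(p_\infty; f)-D_\a(\z_r p_\infty; f)\bigr)\bigr|\lesssim r^{\frac{d-1}{2m}}.
\end{equation*}
Since $d\ge 2$ and $m\ge 1$ force $(d-1)/2m>0$, letting $r\downarrow0$ proves \eqref{eq:zr}. The main obstacle is the middle step: one must confirm that the comparison estimate of Lemma \ref{lem:compare}, established there for the temperature‑dependent symbol $p_T$, transfers to the limiting symbol $p_\infty$. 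This is exactly where the observation that $h_\infty$ is a polynomial satisfying \eqref{eq:h} is used — it guarantees that fixing $T=1$ and replacing $h$ by $h_\infty$ leaves every derivative estimate in that proof intact, so that the trace‑class machinery applies unchanged.
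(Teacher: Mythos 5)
Your proposal is correct and follows essentially the same route as the paper: the paper's proof likewise applies Lemma \ref{lem:compare} with $h=h_\infty$, $T=1$ and the constant weights $\phi\equiv\phi_\infty$, $\om\equiv\om_\infty$ to obtain the trace-norm comparison, and then invokes Proposition \ref{fixedT:prop} and lets $r\downarrow 0$. Your explicit verification that $h_\infty$ inherits the bounds \eqref{eq:h} (so that Lemma \ref{lem:compare} transfers) merely spells out a step the paper leaves implicit.
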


\begin{proof}
We apply Lemma \ref{lem:compare} with $h=h_\infty$, $T=1$, and the constant functions $\om \equiv \om_\infty$ and $\phi \equiv \phi_\infty$. Then, for $\a r\gtrsim 1$,
\begin{align*}%\label{eq:compareb}
	\|D_\a(p_\infty; f) - D_\a(\z_{r} p_\infty; f)\|_1\lesssim \a ^{d-1} r^{\frac{d-1}{2m}}\,.
\end{align*}
Therefore,
\begin{align*}
	\big|\a^{1-d} \tr D_\a(p_\infty; f) - \a^{1-d} \tr D_\a(\z_r p_\infty; f)\big|\le r^{\frac{d-1}{2m}}\,.
\end{align*}
Now, we use Proposition \ref{fixedT:prop} to obtain the estimate 
\begin{align*}
	\big|\CB(p_\infty; f) - \CB(\z_r p_\infty; f)\big|\lesssim r^{\frac{d-1}{2m}}\,.
\end{align*}
This leads to \eqref{eq:zr}, as claimed. 
\end{proof}

We already have a result on the continuity of the asymptotic coefficient, see Proposition \ref{prop:contb}. However, this proposition is not applicable to the truncated symbol $\z_r\,p_T$, since its derivatives are not bounded uniformly in $r>0$. This explains why we need Corollary \ref{cor:contbn}.

The next lemma provides the same asymptotics as in Theorems \ref{thm:nonsm} and \ref{thm:sm}, but this time for $ \z_{rT}\, p_T $ instead of $p_T$. We recall that $\l_T>0$ obeys $\l_T\to 0$ as $T\to\infty$.

\begin{lem}
Assume that $r \in {]0, 1]}$ is fixed and that $\a T^{\frac{1}{2m}}\to\infty, T\to\infty$. Then  
\begin{align} \label{eq:cutoff}
	\lim \big(\a T^{\frac{1}{2m}}\big)^{1-d}\tr D_\a(\z_{rT}\, p_T; f) =  \CB(\z_r\, p_\infty; f)\,.
\end{align}
If $f$ satisfies \eqref{almosthom:eq}, then
\begin{align}\label{eq:cutoffd}
	\lim \big(\a T^{\frac{1}{2m}}\big)^{1-d}\l_T^{-\d} \tr D_\a(\l_T \z_{rT}\, p_T; f) =  \CB(\z_r\, p_\infty; f_0)\,.
\end{align}
If $f$ satisfies \eqref{homoeta:eq}, then 
\begin{align}\label{eq:cutofflog}
	\lim \big(\a T^{\frac{1}{2m}}\big)^{1-d}\l_T^{-1} \tr D_\a(\l_T \z_{rT}\, p_T; f) =  \CB(\z_r\, p_\infty; \eta)\,.
\end{align}
\end{lem}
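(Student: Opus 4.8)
The plan is to reduce the truncated symbol $\z_{rT}\,p_T$ to a rescaled limiting symbol to which the propositions of Subsection \ref{subsect:asym} apply directly. The key observation is that for fixed $r$, the cut-off $\z_{rT}(\bxi) = \z(\bxi(rT)^{-1/2m})$ localizes the symbol to the region $|\bxi|\gtrsim (rT)^{1/2m}$, where $h(\bxi)/T$ is large and comparable to $h_\infty(\bxi)/T$ by the homogeneity \eqref{eq:asympt} and the non-degeneracy \eqref{eq:nu}. First I would perform the change of variables $\bxi = T^{1/2m}\boldsymbol{\eta}$, so that $\z_{rT}(\bxi) = \z_r(\boldsymbol{\eta})$ and, using homogeneity $h_\infty(T^{1/2m}\boldsymbol{\eta}) = T\,h_\infty(\boldsymbol{\eta})$, the symbol $\z_{rT}(\bxi)\,p_T(\bxi)$ becomes, in the new variable, close to $\z_r(\boldsymbol{\eta})\,p_\infty(\boldsymbol{\eta})$. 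The scaling parameter in the rescaled operator is $\a T^{1/2m}$, which explains the prefactor $(\a T^{1/2m})^{1-d}$ in \eqref{eq:cutoff}.

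Once this reduction is set up, the natural strategy is to introduce the one-parameter family of symbols $a_\l \ceq \z_r(\boldsymbol{\eta})\,p_{T}(T^{1/2m}\boldsymbol{\eta})$ (suitably reparametrized by $\l = 1/T$, say) together with the limit symbol $a_0 \ceq \z_r\,p_\infty$, and to verify that they satisfy \eqref{abounds:eq} uniformly in $\l$ with a common $\b$, and that $a_\l \to a_0$ pointwise as $T\to\infty$. The uniform bound \eqref{abounds:eq} for the rescaled symbols follows from \eqref{eq:h}: on the support of $\z_r$ the exponential $\exp(h(\bxi)/T)$ and its derivatives are controlled, and crucially $\z_r$ forces $|\boldsymbol{\eta}|\gtrsim 1$, so the decay of $p_\infty$ is genuine exponential decay (from \eqref{eq:nu}) rather than merely polynomial. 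Pointwise convergence $p_T(T^{1/2m}\boldsymbol{\eta}) \to p_\infty(\boldsymbol{\eta})$ uses \eqref{eq:phiom} for $\phi_T,\om_T$ and \eqref{eq:asympt} for $h$. With these two inputs in place, Corollary \ref{prop:uni} yields \eqref{eq:cutoff} immediately, since Corollary \ref{prop:uni} is precisely the joint limit $\a\to\infty$, $\l\to0$ of $\a^{1-d}\tr D_\a(a_\l, \L; f)\to\CB(a_0;f)$.

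For the two ``small-symbol'' statements \eqref{eq:cutoffd} and \eqref{eq:cutofflog}, the structure is identical, but I would invoke Propositions \ref{prop:homo} and \ref{prop:homolog} respectively in place of Corollary \ref{prop:uni}. After the rescaling, the operator $D_\a(\l_T\z_{rT}p_T;f)$ becomes (up to the scaling parameter $\a T^{1/2m}$) of the form $D_{\a T^{1/2m}}(\l_T a_\l, \L; f)$, which is exactly the $\l a_\l$ structure appearing in those propositions with $\l = \l_T$. The homogeneity assumptions \eqref{almosthom:eq} and \eqml{homoeta:eq} on $f$ are imposed as hypotheses, and the conclusions produce the limiting coefficients $\CB(\z_r p_\infty; f_0)$ and $\CB(\z_r p_\infty;\eta)$ with the correct normalizing factors $\l_T^{-\d}$ and $\l_T^{-1}$. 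The main obstacle, and the place requiring the most care, is verifying the uniform symbol bound \eqref{abounds:eq} (equivalently \eqref{scales:eq} with $\tau\equiv1$) for the rescaled family with constants independent of $T$ and $\l_T$: one must check that differentiating the composite $\z_r\cdot(\phi_T + \om_T\exp(h(T^{1/2m}\cdot)/T))^{-1}$ does not generate growing factors of $T$, which relies on the clean estimate $|\p^k_{\boldsymbol{\eta}}\exp(h(T^{1/2m}\boldsymbol{\eta})/T)|\lesssim \exp(h(T^{1/2m}\boldsymbol{\eta})/T)$ on the support of $\z_r$ — essentially the computation already carried out inside the proof of Lemma \ref{lem:compare}. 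Everything else is a routine application of the asymptotic machinery assembled in Subsection \ref{subsect:asym}.
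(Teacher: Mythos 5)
Your proposal is correct and follows essentially the same route as the paper: the rescaling $\bxi\mapsto T^{1/2m}\bxi$ turning $D_\a(\z_{rT}p_T;f)$ into $D_L(b_T;f)$ with $L=\a T^{1/2m}$ and $b_T(\bxi)=\z_r(\bxi)p_T(T^{1/2m}\bxi)$, the pointwise convergence $b_T\to\z_r p_\infty$ via \eqref{eq:asympt} and \eqref{eq:phiom}, the uniform (in $T$) symbol bounds derived from \eqref{eq:h} and \eqref{eq:nu} on the support of $\z_r$, and the application of Corollary \ref{prop:uni} for \eqref{eq:cutoff} and of Propositions \ref{prop:homo} and \ref{prop:homolog} for \eqref{eq:cutoffd} and \eqref{eq:cutofflog}. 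No gaps.
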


\begin{proof}
By a straightforward change of variables in the definition \eqref{eq:opa}, we obtain
\begin{align*}
	\op_\a(\z_{rT}\,p_T) = \op_L(b_T)\quad \textup{and} \quad D_\a(\z_{rT}\, p_T; f) = 	D_L(b_T; f)\,, 
\end{align*}
where
\begin{align*}
L \ceq \a T^{\frac{1}{2m}}\quad \textup{and}\quad b_T(\bxi) \ceq \z_r(\bxi) p_T(T^{\frac{1}{2m}} \bxi)\,,\quad\bxi\in\R^d\,.
\end{align*}
Thanks to the condition \eqref{eq:asympt}, for all $\bxi\not = \bf{0}$, we have as $T\to\infty$:
\begin{align*}%\label{atmuconv:eq}
	T^{-1}h(T^{\frac{1}{2m}}\bxi) \to h_\infty(\bxi)\quad \textup{and hence by 		\eqref{eq:phiom}} \quad   
	b_{T}(\bxi) \to \z_r(\bxi) p_\infty(\bxi)\,.
\end{align*}
Assuming that $T\gtrsim 1$, an elementary calculation using \eqref{eq:h} leads to the bounds 
\begin{align}\label{atmu:eq}
	\big|\p_{\bxi}^n b_{T}(\bxi)\big| + \big|\p_{\bxi}^n \big(\z_r(\bxi)p_\infty(\bxi)\big)\big| 
	\lesssim \mathrm{e}^{-\nu |\bxi|^{2m}}\,, \ n\in\N_0^d\,,\,\bxi\in\R^d\,, 
\end{align}
with $\nu>0$ from \eqref{eq:nu} and implicit constants depending on the number $r \in {]0, 1]}$. Since $b_T$ satisfies \eqref{atmu:eq} uniformly in $T\gtrsim 1$, we obtain by Corollary \ref{prop:uni} that 
\begin{align*}
	\lim\limits_{L\to\infty, T\to\infty}L^{1-d} D_L(b_T; f) = \CB(\z_r p_\infty; f)\,. 
\end{align*}
By the above change of variables, this leads to \eqref{eq:cutoff}. Formulas \eqref{eq:cutoffd} and \eqref{eq:cutofflog} follow along the same lines from Propositions \ref{prop:homo} and \ref{prop:homolog}. 
\end{proof}
\newpage
\begin{proof}[of Theorem \ref{thm:nonsm}] 
We begin by estimating as follows:
\begin{align*}
	\big|\a^{-(d-1)} &T^{- \frac{d-1}{2m}} \tr D_\a(p_T; f) - \CB(p_\infty; f)\big|
	\\\le & \ \big(\a T^{\frac{1}{2m}}\big)^{1-d}\|D_\a(p_T; f) - D_\a(\z_{rT}\, p_T; f)\|_1\\ 
	& + \big|\big(\a T^{\frac{1}{2m}}\big)^{1-d}\tr D_\a(\z_{rT}\, p_T; f) - \CB(\z_r\, p_\infty; f)\big|\\
	& + |\CB(\z_r\, p_\infty; f) - \CB(p_\infty; f)\big|\,.
\end{align*}
By \eqref{eq:compare} and \eqref{eq:cutoff} we then obtain 
\begin{align*}
	\limsup\big|\a^{-(d-1)} T^{- \frac{d-1}{2m}} &\ \tr D_\a(p_T; f) - \CB(p_\infty; f)\big|\\
	&\le r^{\frac{d-1}{2m}} + |\CB(\z_r\, p_\infty; f) - \CB(p_\infty; f)\big|\,,
\end{align*}
where the upper limit is taken as $ \a T^{\frac{1}{2m}}\to\infty$, $T\to\infty$.  Taking $r\to 0$ and using \eqref{eq:zr} we arrive at \eqref{eq:nonsm}.
\end{proof}

\begin{proof}[of Theorem \ref{thm:sm}] 
We recall that the only singular point of the function $f$ is $t = 0$. We assume that $f$ satisfies \eqref{almosthom:eq}, so that for all $t\not = 0$,
\begin{align}\label{eq:sing}
|f^{(k)}(t)|\lesssim |t|^{\d - k}, \quad k\in\{ 0, 1, 2 \}\,.
\end{align}
Consequently, the function $\tilde{f}_T(t) \ceq \l_T^{-\d}\, f(\l_T\, t)$, $t\in\R$, satisfies the same inequalities with the same constants. Now we can apply the argument in the previous proof to the operator  
\begin{align*}
	D_\a(p_T; \tilde{f}_T) = \l_T^{-\d} D_\a(\l_T p_T; f)\,.
\end{align*}
More precisely, we estimate as follows
\begin{align}\label{eq:chain}
	\big|\a^{-(d-1)} &T^{- \frac{d-1}{2m}} \l_T^{-\d} \ \tr D_\a(\l_T\, p_T; f) - \CB(p_\infty; f_0)\big|\notag\\
	\le &\ \big(\a T^{\frac{1}{2m}}\big)^{1-d}
	\big\|D_\a(p_T; \tilde{f}_T) - D_\a(\z_{rT}\, p_T; \tilde{f}_T)\big\|_1\notag\\ 
	&\  + \big|\big(\a T^{\frac{1}{2m}}\big)^{1-d} \l_T^{-\d}
	\tr D_\a(\l_T \z_{rT}\, p_T; f) - \CB(\z_r\, p_\infty; f_0)\big|\notag\\
	&\  + |\CB(\z_r\, p_\infty; f_0) - \CB(p_\infty; f_0)\big|\,.
\end{align}
By \eqref{eq:compare} and \eqref{eq:cutoffd} we then obtain 
\begin{align*}
	\limsup\big|\a^{-(d-1)} T^{- \frac{d-1}{2m}} \l_T^{-\d} 
	&\ \tr D_\a(\l_T p_T; f) - \CB(p_\infty; f_0)\big|\\
	\le &\ r^{\frac{d-1}{2m}} + |\CB(\z_r\, p_\infty; f_0) - \CB(p_\infty; f_0)\big|\,,
\end{align*}
where the upper limit is again taken as $ \a T^{\frac{1}{2m}}\to\infty$, $T\to\infty$. Taking $r\to 0$ and using \eqref{eq:zr} we obtain \eqref{eq:smhom}.

Now we assume that $f$ satisfies \eqref{homoeta:eq}. We use \eqref{eq:chain} with $\d = 1$ and $f_0$ replaced by $\eta$. Then
\begin{align*}%\label{eq:chain}
	\big|\a^{-(d-1)} T^{- \frac{d-1}{2m}} \l_T^{-1} &\ 
	\tr D_\a(\l_T\, p_T; f) - \CB(p_\infty; \eta)\big|\notag\\
	\le &\ \big(\a T^{\frac{1}{2m}}\big)^{1-d}
	\big\|D_\a(p_T; \tilde{f}_T) - D_\a(\z_{rT}\, p_T; \tilde{f}_T)\big\|_1\notag\\ 
	&\ \quad + \big|\big(\a T^{\frac{1}{2m}}\big)^{1-d} \l_T^{-1}
	\tr D_\a(\l_T\, \z_{rT}\, p_T; f) - \CB(\z_r\, p_\infty; \eta)\big|\notag\\
	&\ \quad + |\CB(\z_r\, p_\infty; \eta) - \CB(p_\infty; \eta)\big|\,.
\end{align*}
As before, the last term on the right-hand side tends to zero due to \eqref{eq:zr}. The second term vanishes as $ \a T^{\frac{1}{2m}}\to\infty$, $T\to\infty$ due to \eqref{eq:cutofflog}. To estimate the first term we observe that $g\ceq f-\eta $ satisfies \eqref{eq:sing} with $\d = 1$. Therefore,
\begin{align*}
	D_\a(p_T; \tilde{f}_T) - D_\a(\z_{rT}\, p_T; \tilde{f}_T) = &\ \big[D_\a(p_T; \tilde\eta_T) - D_\a(\z_{rT}\, p_T; \tilde\eta_T)\big]\\
	&\ \quad + \big[D_\a(p_T; \tilde g_T) - D_\a(\z_{rT}\, p_T; \tilde g_T)\big]\,,
\end{align*}
where $\tilde{\eta}_T(t)\ceq \l_T^{-1} \eta(\l_T\, t)$ and $\tilde{g}_T(t)\ceq \l_T^{-1} g(\l_T\, t)$. As in the previous calculation, the second term is estimated with the help of \eqref{eq:compare} by $r^{\frac{d-1}{2m}}$. Since $\tilde\eta_T(t) = \eta(t) - t\ln(\l_T)$ and the operator difference \eqref{Dalpha:eq} vanishes for linear functions $f$, we have  
\begin{align*}
	D_\a(p_T; \tilde\eta_T) - D_\a(\z_{rT}\, p_T; \tilde\eta_T) = D_\a(p_T; \eta) - D_\a(\z_{rT}\, p_T; \eta)\,.
\end{align*}
The function $\eta$ from \eqref{eta:eq} satisfies \eqref{eq:sing} for all $\d <1$, and hence by \eqref{eq:compare} the above difference is again estimated by $r^{\frac{d-1}{2m}}$. This entails \eqref{eq:smhomlog} and the proof of Theorem \ref{thm:sm} is complete. 
\end{proof}

%%%%%%%%%%%%%%%%%%%%%%%%%%%%%%%%%%%%%%%%
\section{Main results on the high-temperature asymptotics}\label{sect:hta}
%%%%%%%%%%%%%%%%%%%%%%%%%%%%%%%%%%%%%%%%

In this section we adapt Theorems \ref{thm:nonsm} and \ref{thm:sm} to two different asymptotic regimes of the entanglement entropy \eqref{def:EE}, when the temperature becomes large. This is straightforward for the (first) regime of a fixed chemical potential $\mu$, since we work from the outset within the grand-canonical formalism \cite{Balian92,Bratteli97} for an indefinite number of particles. For the (second) regime of a fixed particle density $\rho$ it is slightly more involved, but physically often more interesting. Both results will be discussed in some detail in Subsection \ref{subsect:hta}. 

%%%%%%%%%%%%%%%%%%%%%%%%%%%%%%%%%%%%%%%%%%%%
\subsection{Case of a fixed chemical potential $\mu$} 
%%%%%%%%%%%%%%%%%%%%%%%%%%%%%%%%%%%%%%%%%%%%

Since the Fermi symbol $a_{T, \mu}$ of \eqref{positiveT:eq} equals $p_T$ of \eqref{eq:pt} with $\phi_T=1$ and $\om_T =\exp(-\mu/T)$, the limiting symbol in this case is obviously $p_\infty= 1/(1+\mathrm{e}^{h_\infty})$. For the function $f$ we take $\eta_\g$  which satisfies Condition \ref{cond:f} with $\CT= \{0, 1\}$ and an arbitrary $\d < \min\{1, \g\}$. Thus, by combining the definition \eqref{def:EE}, Remark \ref{rem:signb}(1), and Theorem \ref{thm:nonsm} we obtain: 

\begin{thm}\label{thm:fixedmu}
Let the truncating region $\L$ satisfy Condition \ref{domain:cond}. Then we have
\begin{align}\label{eq:fixedmuasymp}
	\lim\big(\a T^{\textstyle\frac{1}{2m}}\big)^{1-d}\,\mathrm{H}_{\g}(T,\mu;\a\L) = 2\, \CB\big((1+\mathrm{e}^{h_\infty})^{-1},\p\L; \eta_\g\big)
\end{align}
for any fixed $\mu\in\R$, as $\a T^{\frac{1}{2m}}\to\infty$ and $T\to\infty$.
\end{thm}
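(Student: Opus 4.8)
The plan is to recognize Theorem \ref{thm:fixedmu} as essentially a direct specialization of Theorem \ref{thm:nonsm} applied to the entanglement entropy formula \eqref{def:EE}, combined with the symmetry property in Remark \ref{rem:signb}(1). First I would note that for fixed $\mu\in\R$ the Fermi symbol $a_{T,\mu}$ of \eqref{positiveT:eq} is exactly the model symbol $p_T$ of \eqref{eq:pt} with the specific choice $\phi_T = 1$ and $\om_T = \exp(-\mu/T)$. I must verify that these satisfy the hypotheses \eqref{eq:phiom}: since $\mu$ is fixed and $T\to\infty$, we have $\om_T = \exp(-\mu/T)\to 1 =: \om_\infty > 0$ and trivially $\phi_T = 1 \to 1 =: \phi_\infty \ge 0$. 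Hence by \eqref{limit_symbol} the high-temperature limit symbol is $p_\infty = (1 + \mathrm{e}^{h_\infty})^{-1}$, where $h_\infty$ is the homogeneous leading part of the Hamiltonian guaranteed by \eqref{eq:asympt}.

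Next I would check that the test function $f = \eta_\g$ satisfies Condition \ref{cond:f}, which is needed to invoke Theorem \ref{thm:nonsm}. As already remarked in the excerpt (in the paragraph after Proposition \ref{fixedT:prop}), $\eta_\g$ satisfies Condition \ref{cond:f} with the singular set $\CT = \{0, 1\}$ and any $\d < \min\{1, \g\}$. With this in hand, Theorem \ref{thm:nonsm} applies directly and gives
\begin{align*}
	\lim \big(\a T^{\frac{1}{2m}}\big)^{1-d} \tr D_\a(a_{T,\mu}; \eta_\g) = \CB(p_\infty; \eta_\g)
\end{align*}
as $\a T^{1/2m}\to\infty$ and $T\to\infty$, where (recalling the notational convention of Section \ref{sect:model}) the suppressed truncating region is $\L$.

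The final step assembles the two contributions in the definition \eqref{def:EE} of the entanglement entropy. Setting $\a = 1$ is no loss since the change of variables makes $D_1(a_{T,\mu}, \a\L; \eta_\g)$ unitarily equivalent to $D_\a(a_{T,\mu}, \L; \eta_\g)$, so $\mathrm{H}_\g(T,\mu;\a\L)$ is the sum of $\tr D_\a(a_{T,\mu},\L;\eta_\g)$ and $\tr D_\a(a_{T,\mu}, \R^d\setminus\L; \eta_\g)$. Applying the asymptotics above to each of the two regions $\L$ and $\R^d\setminus\L$ — both of which satisfy Condition \ref{domain:cond} simultaneously, as noted after that condition — yields the two limiting coefficients $\CB(p_\infty, \p\L; \eta_\g)$ and $\CB(p_\infty, \p(\R^d\setminus\L); \eta_\g)$. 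By Remark \ref{rem:signb}(1), the coefficient $\CB$ for $\L$ and for its complement coincide, since $\CA(a,\be;f) = \CA(a,-\be;f)$ and the two boundaries carry opposite-pointing normals. Thus the two terms are equal and their sum is $2\,\CB\big((1+\mathrm{e}^{h_\infty})^{-1}, \p\L; \eta_\g\big)$, which is precisely \eqref{eq:fixedmuasymp}.

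Since every hard analytic estimate is already packaged into Theorem \ref{thm:nonsm} (whose proof, via the comparison Lemma \ref{lem:compare} and Corollary \ref{cor:contbn}, is the real labor), the only genuinely delicate point here is the bookkeeping at the boundary in the complement case: one must confirm that $\R^d\setminus\L$ (more precisely $\R^d\setminus\overline{\L}$) indeed satisfies Condition \ref{domain:cond} and that Remark \ref{rem:signb}(1) legitimately identifies the two coefficients despite the orientation reversal of the normal. This is routine given the stated results, so I expect the proof to be short; the main conceptual obstacle was entirely absorbed into the proof of Theorem \ref{thm:nonsm}.
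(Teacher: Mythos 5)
Your proposal is correct and follows essentially the same route as the paper: the paper's own proof consists precisely of identifying $a_{T,\mu}$ with $p_T$ for $\phi_T=1$, $\om_T=\exp(-\mu/T)$, noting that $\eta_\g$ satisfies Condition \ref{cond:f}, and then combining definition \eqref{def:EE}, Remark \ref{rem:signb}(1), and Theorem \ref{thm:nonsm}. Your additional checks (the limits \eqref{eq:phiom}, the unitary equivalence reducing $\a\L$ to $\L$, and the fact that $\L$ and its complement satisfy Condition \ref{domain:cond} simultaneously) are exactly the bookkeeping the paper leaves implicit.
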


\vspace{-2em}
%%%%%%%%%%%%%%%%%%%%%%%%%%%%%%%%%%%%%%%%%%
\subsection{Case of a fixed particle density $\rho$}
%%%%%%%%%%%%%%%%%%%%%%%%%%%%%%%%%%%%%%%%%%
\vspace{-1em}
In this case we have to find a function $T\mapsto \mu_\rho(T)$ satisfying \eqref{rhofixed} for a fixed constant $\rho>0$. According to the Appendix we have for any such function
\begin{align}\label{rho}
	\exp\big(-\mu_\rho(T)/T\big)= \lambda_T^{-1}\big(1+o(1)\big) \quad \text{ as } \quad T\to\infty\,,
\end{align}
with the function $T\mapsto \lambda_T$ given explicitly by
\begin{align}\label{lambda}
	\lambda_T\ceq\rho T^{-\frac{d}{2m}}/\varkappa \quad\text{where}\quad\varkappa\ceq(2\pi)^{-d}\int_{\R^d} \mathrm{\exp}(-h_\infty(\bxi))\, \mathrm{d}\bxi\,.
\end{align}
This implies for the Fermi symbol at fixed $\rho$ the formula
\begin{align}\label{a1}
	a_{T, \mu_\rho(T)}=  \lambda_T p_T\,, 
\end{align}
where $p_T$ is given by \eqref{eq:pt} with
\begin{align}\label{a2}
	\phi_T = \l_T\quad \text{ and }\quad  \om_T = \l_T  \exp\big(-\mu_\rho(T)/T\big)\,.
\end{align}
By \eqref{rho} and \eqref{lambda} we clearly have $\phi_T\to0$ and $\om_T\to 1$ as $T\to\infty$. Hence the limiting symbol is the classical ``Boltzmann factor'' corresponding to $h_\infty$: 
\begin{align}\label{eq:F0}
	p_\infty(\bxi) = \mathrm{e}^{-h_\infty(\bxi)}\,,\quad \bxi\in\R^d\,.
\end{align} 
To study the symbol $a_{T, \mu_\rho(T)}$ we use Theorem \ref{thm:sm} with $f = \eta_\g$. For a start we need to understand the behavior of $\eta_\g(t)$ for small $t$. Since this behavior depends on $\g$, we have to distinguish five different regimes for its value. To state the result in a unified way for all values, we define a parameter $\d_\g>0$ and a pair of functions \{$f_\g$, $\eta_\g^{\mathrm{eff}}$\} on the interval ${]0,1[}$ in Table \ref{table:asymp}. 

The next lemma shows that $\eta_\g^{\mathrm{eff}}$ describes the effective asymptotic contribution of $f_\g$ as $t\downarrow 0$.
\begin{lem} 
Let the parameter $\d_\g$ and the functions $f_\g$, $\eta_\g^{\mathrm{eff}}$ on the interval ${]0, 1[}$ be as defined in Table \ref{table:asymp}. Then the following three asymptotic relations hold:
\begin{align}\label{eq:homgam}
	\lim\limits_{t\to 0} t^{k-\d_\g}\frac{\mathrm{d}^k}{\mathrm{d}t^k} 
	\big( f_\g(t) - \eta_\g^{\mathrm{eff}}(t)\big) = 0\,,\quad k\in \{0, 1, 2\}\,.
\end{align}   
\end{lem}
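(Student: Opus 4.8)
The plan is to verify \eqref{eq:homgam} separately in each of the five regimes recorded in Table~\ref{table:asymp}, in every case by producing a short asymptotic expansion of $\eta_\g$ at $t=0$ and comparing it with the tabulated $\eta_\g^{\mathrm{eff}}$.

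First I would isolate the single source of non-smoothness. For $t\in(0,1)$ and $\g\neq1$ set $g_\g(t)\ceq t^\g+(1-t)^\g$, so that $\eta_\g=(1-\g)^{-1}\ln g_\g$. Since $(1-t)^\g$ is real-analytic on $(-1,1)$ with $(1-t)^\g=1-\g t+\frac{\g(\g-1)}{2}t^2+O(t^3)$, one has
\[
  g_\g(t)=1-\g t+t^\g+\frac{\g(\g-1)}{2}t^2+O(t^3),
\]
a smooth part plus the one genuinely non-smooth power $t^\g$. Substituting this into $\ln(1+x)=x-\frac{x^2}{2}+\cdots$ yields an expansion of $\eta_\g$ whose terms are products of integer powers $t^{j}$ and the power $t^\g$; the interleaving of these exponents is precisely what separates the regimes $0<\g<1$, $1<\g<2$ and $\g>2$, and it lets me read off both $\d_\g$ and the leading coefficient in $\eta_\g^{\mathrm{eff}}$. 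In the two ranges with $\g>1$ the gradient $\eta_\g'(0)=\g/(\g-1)$ is finite, and the corresponding linear term is stripped off in the definition of $f_\g$; this is harmless for the eventual application because the operator difference \eqref{Dalpha:eq} vanishes on linear functions, but it is essential here, since without it the $k=0$ limit in \eqref{eq:homgam} would fail.

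With the expansion in hand, the three limits are checked by a single power-counting mechanism. In each regime $f_\g-\eta_\g^{\mathrm{eff}}$ is a finite combination of powers $t^{a}$ (plus an analytic tail coming from the tail of the $\ln(1+x)$ series), every exponent $a$ being strictly larger than $\d_\g$; since each differentiation lowers an exponent by one while the weight $t^{k-\d_\g}$ raises it by one, every surviving term is $O(t^{a-\d_\g})$ with $a-\d_\g>0$, so the limit vanishes for $k=0,1,2$. I would carry this out explicitly: for $0<\g<1$ one takes $f_\g=\eta_\g$, $\eta_\g^{\mathrm{eff}}=(1-\g)^{-1}t^\g$ and $\d_\g=\g$, the competing remainder powers being $t$ and $t^{2\g}$, both of order larger than $t^\g$; for $1<\g<2$ one takes $\eta_\g^{\mathrm{eff}}=(1-\g)^{-1}t^\g$ again, now with remainder $O(t^2)$; and for $\g>2$ the power $t^\g$ drops below the quadratic term, giving $\eta_\g^{\mathrm{eff}}=\frac{\g}{2(\g-1)}t^2$, $\d_\g=2$, with remainder $O(t^{\min\{3,\g\}})$.

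The hard part will be the two borderline indices, where the generic expansion degenerates. At $\g=1$ I would bypass $g_\g$ and expand $\eta_1(t)=-t\ln t-(1-t)\ln(1-t)$ directly, obtaining $\eta_1(t)-\eta(t)=-(1-t)\ln(1-t)=t-\frac12t^2+\cdots$; removing the linear term in $f_1$ leaves $f_1-\eta=O(t^2)$, which gives \eqref{eq:homgam} with $\d_1=1$ and $\eta_1^{\mathrm{eff}}=\eta$, i.e.\ the logarithmic case governed by \eqref{homoeta:eq}. At $\g=2$ the argument $1-2t+2t^2$ of the logarithm is positive on all of $\R$, so $\eta_2(t)=-\ln(1-2t+2t^2)$ is in fact analytic; the key subtlety is that its quadratic coefficient cancels, $\eta_2''(0)=0$, so after removing the linear term the leading behaviour is cubic, $f_2(t)=-\frac{4}{3}t^3+O(t^4)$. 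This forces $\d_2=3$ and $\eta_2^{\mathrm{eff}}=-\frac43t^3$, rather than the value $\d=2$ that continuity from $\g>2$ might suggest. Pinning down these two cancellations---the logarithmic one at $\g=1$ and the vanishing quadratic one at $\g=2$---is the only genuinely delicate point; elsewhere the verification reduces to the routine power count described above.
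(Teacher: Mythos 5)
Your proposal is correct and follows essentially the same route as the paper's proof: a Taylor expansion of $(1-\g)\eta_\g(t)=\ln\big[t^\g+(1-t)^\g\big]$ at $t=0$ (yielding $t^\g-\g t-\frac{\g}{2}t^2$ plus higher-order remainders whose derivatives are likewise controlled), followed by power counting against the weight $t^{k-\d_\g}$, with the same separate treatment of the two degenerate cases $\g=1$ (logarithmic leading term) and $\g=2$ (cancellation of the quadratic term, forcing $\d_2=3$).
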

\begin{proof}

For $\g\not = 1$ we expand $\eta_\g$ at $t=0$ to obtain
\begin{align}\label{eq:logtay}
	(1-\g)\eta_\g(t) = t^\g -\g t - \frac{\g}{2} t^2 + O(t^3) + O(t^{2\g}) + O(t^{1+\g})\,. 
\end{align}
\setlength{\tabcolsep}{10pt} % Default value: 6pt
\renewcommand{\arraystretch}{1.5}
\begin{table}[h!]
\centering
\begin{tabular}{|c||c|c|c|} 
 \hline
   & $\d_\g$ & $f_\g(t)$ & $\eta_\g^{\mathrm{eff}}(t)$\\ [0.2cm]
 \hline\hline
 $0<\g<1$ & $\g$ & $\eta_\g(t)$ & $\frac{1}{1-\g} t^\g $ \\[0.2cm] 
 \hline
 $\g=1$ & $1$ & $\eta_1(t) - t$ & $-t\ln(t)$\\[0.2cm]
 \hline
 $1<\g<2$ & $\g$ & $\eta_\g(t)-\frac{\g}{\g-1}t$ & $\frac{1}{1-\g} t^\g $\\[0.2cm]
 \hline
 $\g=2$  & $3$ & $\eta_2(t)-2t$ & $-\frac{4}{3}t^3$ \\[0.2cm]
 \hline
 $\g >2$ & $2$ & $\eta_\g(t)-\frac{\g}{\g-1}t$ & $\frac{\g}{2(\g-1)}t^2$ \\ [0.2cm] 
 \hline  
\end{tabular}

\caption{The parameter $\d_\g$ and the functions \{$f_\g$, $\eta_\g^{\mathrm{eff}}$\} for different values of the R\'enyi index $\g$}\label{table:asymp}
\end{table}

The notation $g(t) = O(t^b)$ means that $|g^{(k)}(t)|\lesssim t^{b-k}$, for all $k\in\N_0$. This expansion leads to the claim \eqref{eq:homgam} for all $\g\notin\{1,2\}$. For $\g =2$ the expansion \eqref{eq:logtay} is insufficient since the terms with $t^2$ cancel out. By including the third-order term explicitly we find $\eta_2(t) = 2t - \frac{4}{3} t^3 + O(t^4)$ and obtain \eqref{eq:homgam} for $\g =2$. Finally, for $\g=1$ we have $\eta_1(t) =-t\ln(|t|) + t + O(t^2)$ which again leads to \eqref{eq:homgam}. 
\end{proof}

Now we are in a position to state and prove our second main result for the scaling of the entanglement entropy. 

\begin{thm}\label{thm:fixedrho} 
Let the truncating region $\L$ satisfy Condition \ref{domain:cond}. For a number $\rho>0$ let $T\mapsto\mu_\rho(T)$ be a function satisfying \eqref{rhofixed} and $\varkappa$ be as defined in \eqref{lambda}. Finally, let $\d_\g$  and $\eta_\g^{\mathrm{eff}}$ be as defined in Table \ref{table:asymp}. Then we have the asymptotic relation
\begin{align}\label{eq:fixedrhoasymp}
\lim \big(\a T^{\textstyle\frac{1}{2m}}\big)^{1-d}\,\big(\varkappa \,T^{\textstyle\frac{d}{2m}}/\rho\big)^{\d_\g}\,\mathrm{H}_{\g}\big(T,\mu_\rho(T);\a\L\big) 
	= 2\, \CB\big( \mathrm{e}^{-h_\infty}, \p\L; \eta_\g^{\mathrm{eff}}\big)\,
\end{align}
for any fixed $\rho>0$, as\, $\a T^{\frac{1}{2m}}\to \infty\,\, \textup{and}\,\,T\to\infty$\,. 
\end{thm}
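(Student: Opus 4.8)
The plan is to recognise the fixed-density Fermi symbol as a ``small symbol'' of the type handled by Theorem \ref{thm:sm}, and then to read off the answer from the asymptotic-homogeneity data collected in Table \ref{table:asymp}.

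First I would unfold the definition. Using the unitary equivalence of $D_\a(a,\L;f)$ with $D_1(a,\a\L;f)$ recorded after \eqref{Dalpha:eq}, together with \eqref{def:EE}, I would write
\begin{align*}
	\mathrm H_\g\big(T,\mu_\rho(T);\a\L\big)
	= \tr D_\a\big(a_{T,\mu_\rho(T)},\L;\eta_\g\big)
	+ \tr D_\a\big(a_{T,\mu_\rho(T)},\R^d\setminus\L;\eta_\g\big)\,.
\end{align*}
By \eqref{a1}--\eqref{a2} the symbol factorises as $a_{T,\mu_\rho(T)}=\l_T\,p_T$, while \eqref{rho}, \eqref{lambda}, and \eqref{eq:F0} give $\l_T=\rho\,T^{-d/2m}/\varkappa\to 0$, $\phi_T\to0$, $\om_T\to1$, and hence $p_\infty=\mathrm e^{-h_\infty}$. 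This is exactly the scaled setting of Theorem \ref{thm:sm}, and its prefactor $\l_T^{-\d_\g}$ will become the density factor $(\varkappa\,T^{d/2m}/\rho)^{\d_\g}$ upon inserting \eqref{lambda}.

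Next I would trade $\eta_\g$ for the function $f_\g$ of Table \ref{table:asymp}. In each of the five cases these differ only by a homogeneous linear term, $\eta_\g(t)=f_\g(t)+c_\g t$ near $t=0$, with $c_\g=0,1,\frac{\g}{\g-1},2,\frac{\g}{\g-1}$ respectively. Since the operator difference \eqref{Dalpha:eq} vanishes for linear test functions and the range of $\l_T p_T$ contracts into an arbitrarily small neighbourhood of $0$ as $T\to\infty$ (so that $\eta_\g$ may be altered away from $0$, in particular its second singular point $t=1$ removed, without affecting either operator in \eqref{Dalpha:eq}), I get $\tr D_\a(\l_T p_T;\eta_\g)=\tr D_\a(\l_T p_T;f_\g)$ for large $T$, now with a test function whose only singularity is at $0$. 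By the Lemma establishing \eqref{eq:homgam}, this $f_\g$ satisfies hypothesis \eqref{almosthom:eq} with $f_0=\eta_\g^{\mathrm{eff}}$ and $\d=\d_\g$ when $\g\neq1$, and hypothesis \eqref{homoeta:eq} with $\eta_\g^{\mathrm{eff}}$ equal to the $\eta$ of \eqref{eta:eq} when $\g=1$.

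Then I would apply Theorem \ref{thm:sm}(1) (resp.\ part~(2) when $\g=1$) to each of the two regions, obtaining
\begin{align*}
	\lim\big(\a\,T^{\frac{1}{2m}}\big)^{1-d}\,\l_T^{-\d_\g}\,\tr D_\a(\l_T p_T;\eta_\g)
	=\CB\big(\mathrm e^{-h_\infty},\p\L;\eta_\g^{\mathrm{eff}}\big)\,,
\end{align*}
with the same limit for $\R^d\setminus\L$ by Remark \ref{rem:signb}(1). Substituting $\l_T^{-\d_\g}=(\varkappa\,T^{d/2m}/\rho)^{\d_\g}$ and adding the two identical boundary contributions produces the factor $2$ and hence \eqref{eq:fixedrhoasymp}. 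The one genuinely delicate step is the second: rigorously replacing the two-singularity function $\eta_\g$ (singular set $\{0,1\}$) by the single-singularity function $f_\g$ to which Theorem \ref{thm:sm} applies. This hinges on the spectral localisation of $W_\a(\l_T p_T,\L)$ and of its complement analogue into $[0,c\,\l_T]$ with $c\,\l_T\to0$, so that both operators in \eqref{Dalpha:eq} see only the values of the test function near $0$; the vanishing of the linear correction then closes the reduction, while everything else is a direct assembly of the quoted results.
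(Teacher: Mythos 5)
Your proposal is correct and follows essentially the same route as the paper: factor $a_{T,\mu_\rho(T)}=\l_T\, p_T$ via \eqref{a1}--\eqref{a2}, pass from $\eta_\g$ to $f_\g$ using the vanishing of \eqref{Dalpha:eq} on linear test functions, apply Theorem \ref{thm:sm} with the data of Table \ref{table:asymp}, and assemble via \eqref{def:EE} and Remark \ref{rem:signb}(1). Your explicit justification that the singularity of $\eta_\g$ (and of $f_\g$) at $t=1$ is harmless --- because the range of $\l_T p_T$, and hence the spectrum of $W_\a(\l_T p_T,\L)$, contracts to a neighbourhood of $\{0\}$ --- is a point the paper leaves implicit.
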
 
 
\begin{proof} 
We replace the symbol $a_{T, \mu_\rho(T)}$ with $\l_T\, p_T$ as specified in \eqref{a2}. Furthermore, since the operator difference \eqref{Dalpha:eq} vanishes for linear functions $f$, we have
\begin{align*}
	D_\a(\l_T\, p_T, \L; \eta_\g) = D_\a(\l_T\, p_T, \L; f_\g)\,. 
\end{align*}
As the function $f_\g$ satisfies the condition \eqref{eq:homgam}, we can use Theorem \ref{thm:sm} which gives 
\begin{align*}
	\lim \big(\a T^{\frac{1}{2m}}\big)^{1-d} \l_T^{-\d_\g} \tr D_\a(\l_T\, p_T, \L; f_\g) = \CB(p_\infty, \p\L; \eta_\g^{\mathrm{eff}})\,,
\end{align*}
as $\a T^{\frac{1}{2m}}\to \infty$ and $T\to\infty$, where $p_\infty$ is given by \eqref{eq:F0}. The same formula holds for the region $\R^d\setminus\overline\L$. The claimed formula \eqref{eq:fixedrhoasymp} now follows from the definition \eqref{def:EE}, Remark \ref{rem:signb}(1), \eqref{lambda}, and \eqref{a1}.
\end{proof}
  
%%%%%%%%%%%%%%%%%%%%%%%%%%%%%%%%%%%%%%%%%%
\subsection{Concluding remarks}\label{subsect:hta} 
%%%%%%%%%%%%%%%%%%%%%%%%%%%%%%%%%%%%%%%%%%

\begin{enumerate}
	\item We have proved the positivity of the EE for R\'enyi index $\g\le1$, see Theorem \ref{thm:EE positive}. For bounded $\L$ we have actually the stronger statement:
	\[
		0\le \mathrm{S}_{\g}(T,\mu; \L)-s_\g(T, \mu) |\L|\le \mathrm{H}_{\g}(T,\mu;\L)\,, 			\quad \g\le 1\,.{}
	\]
	Here the first inequality for the local entropy is due to Theorem \ref{eq:local_E} and holds even for $\g\le2$. The second one follows by combining \eqref{def:EE}, \eqref{eq:local_ED}, and Theorem \ref{thm:EE positive}. Although Theorem \ref{thm:EE positive} ensures the positivity of the EE only for $\g\le 1$, its asymptotic coefficient $\CB(a_{T, \mu},\p\L; \eta_\g)$ is positive for all $\g\le2$ (and all $\mu\in\R$, $T>0$), see Remark \ref{rem:signb}(2). It is an open question whether the positivity of the EE itself extends to all $\g\le 2$.   
	\item \label{item:negative}
	In case of a fixed $\rho$ (Theorem \ref{thm:fixedrho}) and $\g\le2$, the function $\eta_\g^{\mathrm{eff}}$ in Table \ref{table:asymp} is strictly concave so that, by $h_\infty\geq 0$ and Remark \ref{rem:signb}, the coefficient $\CB\big( \mathrm{e}^{-h_\infty}, \p\L; \eta_\g^{\mathrm{eff}}\big)$ is strictly positive. On the other hand, if $\g>2$, then $\eta_\g^{\mathrm{eff}}$ is strictly convex so that this coefficient is strictly \emph{negative}. This change of sign, unexpected by us, suggests that our definition \eqref{def:EE} of the EE is somewhat unphysical for $\g>2$. 
	In passing we note that $\CB\big( \mathrm{e}^{-h_\infty}, \p\L; \eta_\g^{\mathrm{eff}}\big)$ for $\g >2$ can be computed rather explicitly. Indeed, by Table \ref{table:asymp} the
	integral \eqref{U:eq} becomes simply:
	\[
		U\big(u,v;\eta_\gamma^{\mathrm{eff}}\big) = -\frac{\g}{2(\g-1)} (u-v)^2\,,\quad \g>2\,.{}
	\]
	This observation and the use of Parseval's identity, as in the proof of \cite[Proposition 1]{Widom1982}, enables us to determine the resulting quantities \eqref{ad:eq} and hence \eqref{cbd:eq} in terms of the Fourier transform of $\mathrm{e}^{-h_\infty} $. 
	In particular, for the quadratic ``asymptotic'' Hamiltonian $h_\infty(\bxi) = |\bxi|^2/2$ (which includes the Hamiltonian of the ideal Fermi gas) we obtain the formula 
	\[
		\CB\big( \mathrm{e}^{-h_\infty}, \p\L; \eta_\g^{\mathrm{eff}}\big) = -  \frac{1}{4}(4\pi)^{-(d+1)/2}\frac{\g}{\g-1}|\p\L|\,,\quad \g>2\,.
	\]

	\item For $\g<2$ the function $\eta_\g^{\mathrm{eff}}$ is exactly the classical entropy function of the Maxwell--Boltzmann gas. This confirms our expectations stated in the Introduction. This conclusion does not hold for $\g\ge 2$. 
 
	\item In case of a fixed $\mu$ (Theorem \ref{thm:fixedmu}) and $\g\le2$ we know that the coefficient $\CB_\g: = \mathcal B(p_\infty,\p\L;\eta_\g)$, with 
	$p_\infty = 1/(1+\mathrm{e}^{ h_\infty})$, on the right-hand side of \eqref{eq:fixedmuasymp} is positive by the strict concavity of $\eta_\g$. Here we want to indicate that $\CB_\g$ should become negative for large $\g$. To this end, we consider the point-wise limit:
	\[
		\eta_\infty(t)\coloneqq \lim_{\g\to\infty}\eta_\g(t) = \min\big\{-\ln(1-t),-\ln(t)\big\}\,,\quad t\in [0,1]\,.
	\]
	This function is (strictly) convex on ${[0,1/2]}$ and also on ${[1/2,1]}$, but not on the whole interval ${[0,1]}$. Fortunately however, since $p_\infty\le 1/2$  by our assumptions in Subsection \ref{subsect:h}, the integral $U \big(u,v;\eta_\infty\big)$ from \eqref{U:eq} underlying $\mathcal B_\infty$ enters only for $u,v\in {[0,1/2]}$ so that	
	\[
		\mathcal B_\infty = \mathcal B\big(p_\infty, \p\L; -\ln(1-\cdot)\big)<0\,.
	\]
To get more specific about the value of $\mathcal B_\infty$ we observe that for $u,v \in [0,1/2]$:
	\begin{align*} 
		U \big(u,v;\eta_\infty\big) &= - U\big(u,v;\ln(1-\cdot)\big) = - U\big(1-u,1-v;\ln(\cdot)\big) \\
		&= -\frac{1}{2} \big(\ln(1-u) - \ln(1-v)\big)^2\,,
	\end{align*}
	where the last equality is an elementary calculation, see again \cite{Widom1982}. It reconfirms the negativity of $U$. Introducing the symbol $q_\infty \coloneqq \ln(1+\mathrm{e}^{-h_\infty})$ we get
	\begin{equation*} 
		U\big(p_\infty(\bxi), p_\infty(\bxi + t\mathbf{e}); \eta_\infty\big) = -\frac{1}{2}\big(q_\infty(\bxi) - q_\infty(\bxi+t\mathbf{e})\big)^2\,.
	\end{equation*}
	Similarly to Remark \ref{item:negative}, the coefficient $\CB_\infty$ can now be found in terms of the Fourier transform of (the Taylor series of) $q_\infty$. In particular, for $h_\infty(\bxi) = |\bxi|^2/2$ the coefficient $\CB_\infty$ can be computed explicitly:
	\[ 
		\mathcal B\big((1+\mathrm{e}^ { h_\infty})^{-1},\p\L;\eta_\infty\big) = - \frac{1}{2}(2\pi)^{-(d+1)/2} \Sigma(d) |\p\L|\,,
	\]
	where $\Sigma(d)\coloneqq \sum_{n,m\ge1} (-1)^{n+m} (nm)^{-1/2} (n+m)^{-(d+1)/2}$. We omit the details. [Numerically we find e.g. $\Sigma(2) \approx 0.19798$ and $\Sigma(3) \approx 0.15419$.]
Thus, assuming continuity of $\mathcal B_\g$ as a function of $\g>0$, we can claim the existence of a finite value $\g_0>2$ such that $\mathcal B_{\g_0} = 0$ and $\mathcal B_\g < 0$ for certain finite $\g>\g_0$.

	\item\label{item:afixed} In both high-temperature regimes (fixed $\mu$ or fixed $\rho$) the scaling parameter $\a$ may be fixed, to $\a=1$, say. Then the truncating set $\a\L$ is fixed and only the temperature $T$ tends to infinity.

	\item Using the relation \eqref{eq:local_ED} and Theorem \ref{thm:nonsm} one can also derive appropriate asymptotic formulas for the local entropy. For example, assuming that $\mu$ is fixed, as in Theorem \ref{thm:fixedmu}, we easily obtain the asymptotic relation
	\begin{align*}
		\lim\big(\a T^{\textstyle\frac{1}{2m}}\big)^{1-d}\,\big[\mathrm{S}_{\g}(T,\mu;\a\L) - \a^d  s_\g(T, \mu)\,|\L|\big]
	= \CB\big((1+\mathrm{e}^{h_\infty})^{-1}, \p\L; \eta_\g\big)\,,
	\end{align*}
	as $\a T^{\frac{1}{2m}}\,\to \infty$ and $T\to\infty$. However, in order to obtain from this formula a proper asymptotic expansion for the local entropy, one needs to find an expansion for the entropy density $s_\g(T, \mu)$ as $T\to\infty$. A convenient starting point for its derivation could be, for example, the representation (10.8) in \cite{LeSpSo3}. 

	An analogous formula can be written down for the case of a fixed $\rho$. The inequality $\CB\big(\mathrm{e}^{-h_\infty}, \p\L;\eta_\g^{\mathrm{eff}}\big)<0$ for $\g >2$ would then imply that the local entropy obeys for large $T$ the bound
	${\mathrm S}_\g\big(T, \mu_\rho(T); \L\big) < s_\g\big(T, \mu_\rho(T)\big) |\L|$, instead of \eqref{eq:local_E} which is valid for all $\g \le 2$ and all $T>0$.
\end{enumerate}

%\renewcommand{\thesection}{A} 
%\setcounter{equation}{0}

%%%%%%%%%%%%%%%%%%%%%%%%%%%%%%%%%%%%%%%%%% 
\section{Appendix: The Fermi symbol for fixed $\rho$ and large $T$}
%%%%%%%%%%%%%%%%%%%%%%%%%%%%%%%%%%%%%%%%%%
Our main aim is to derive formula \eqref{rho} for a Hamiltonian $h$ as specified in Subsection \ref{subsect:h}. After a change of variables and replacing $\mu$ with $\mu_\rho(T)$ formula \eqref{eq:density} for the mean particle density takes the form
\begin{align*}%\label{z rho T}
	\varrho\big(T, \mu_\rho(T)\big) = \frac{T^{\frac{d}{2m}}}{(2\pi)^d}\int_{\R^d} \frac{\mathrm{d}\bxi}{1+\exp\big(-\mu_\rho(T)/T\big)\exp \big(h(T^{\frac{1}{2m}}\bxi)/T\big)}\, \,.
\end{align*}
Since $h(T^{\frac{1}{2m}}\bxi)/T\to h_\infty(\bxi)$ as $T\to\infty$, for each $\bxi\not = \bf{0}$, the condition \eqref{rhofixed} requires that $\exp\big(-\mu_\rho(T)/T\big)\to\infty$. Consequently,
\begin{align*}%\label{eq:varkappa}
	\varrho\big(T, \mu_{\rho}(T)\big) = \varkappa\,T^{\frac{d}{2m}}  \exp\big(\mu_\rho(T)/T\big)\big(1+o(1)\big)\,, 
\end{align*}
where $\varkappa$ is defined in \eqref{lambda}. Using \eqref{rhofixed} again leads to \eqref{rho}.

As explained in the Introduction, the high-temperature limit under the condition \eqref{rhofixed} corresponds to the Maxwell--Boltzmann gas limit. This fact can be conveniently restated in terms of the so-called fugacity $z_\rho(T)\ceq\exp\big(\mu_\rho(T)/T\big)$ as follows. By \eqref{eq:asympt} the integrated density of states $\CN(T)$ of \eqref{eq:dos}, satisfies for large $T$ the relation $\CN(T)\asymp T^{\frac{d}{2m}}$. So \eqref{rho} implies $z_\rho(T)\asymp \rho\ / \CN(T)\to 0$. 

%%%%%%%%%%%%%%%%%%%%%%%%%%%%%%%%%%%%%%%%%%%%

\bibliographystyle{beststyle}

\begin{thebibliography}{10}
\providecommand{\url}[1]{\texttt{#1}}
\providecommand{\urlprefix}{URL }
\providecommand{\eprint}[2][]{\url{#2}}

\bibitem{Amico08}
L.~Amico, R.~Fazio, A.~Osterloh, and V.~Vedral: \emph{Entanglement in many-body systems}. Rev. Mod. Phys.
  \textbf{80}, 517--576 (2008)

\bibitem{And}
T.~Ando: \emph{Concavity of certain maps on positive definite matrices and
  applications to Hadamard products}. Linear Algebra Appl.
  \textbf{26}, 203--241 (1979)

\bibitem{Balian92}
R.~Balian: \emph{From Microphysics to Macrophysics II}. Springer, Berlin, 1992

\bibitem{BeSh}
J.~Bendat and S.~Sherman: \emph{Monotone and convex operator functions}. Trans.
  Amer. Math. Soc. \textbf{79}, 58--71 (1955)

\bibitem{Ber72}
F.~A. Berezin: \emph{Convex operator functions}. Math. USSR Sbornik \textbf{17},
  269--277 (1972). [Russian original: Mat. Sbornik (N.S.) \textbf{88}, 268-276 (1972)]
  
\bibitem{BS}
M.~S. Birman and M.~Z. Solomjak: \emph{Spectral Theory of Self-Adjoint Operators
  in Hilbert Space}. Reidel, Dordrecht, 1987. 
  [Translated from the 1980 Russian original by S. Khrushch{\"e}v
  and V. Peller]

\bibitem{Bratteli97}
O.~Bratteli and D.~W. Robinson: \emph{Operator Algebras and Quantum Statistical
  Mechanics 2}. 2nd edn., Springer, Berlin, 1997

\bibitem{BuBu91}
A.~M. Budylin and V.~S. Buslaev: \emph{On the asymptotic behaviour of the spectral
  characteristics of an integral operator with a difference kernel on expanding
  domains}, pp. 16--60 in: \emph {Differential Equations, Spectral Theory, Wave Propagation}, Probl. Mat. Fiz. vol. 13. Leningrad Univ., Leningrad, 1991.
[in Russian]

\bibitem{Dav}
C.~Davis: \emph{A Schwarz inequality for convex operator functions}. Proc.
  Amer. Math. Soc. \textbf{8}, 42--44 (1957)

\bibitem{Fermi26}
E.~Fermi: \emph{Zur Quantelung des idealen einatomigen Gases}. Z. Physik
  \textbf{36}, 902--912 (1926). [in German]

\bibitem{G06}
D.~Gioev: \emph{Szeg{\H o} limit theorem for operators with discontinuous
  symbols and applications to entanglement entropy}. Int. Math. Res. Not.
  \textbf{2006}, 95181, 23 pp. (2006)

\bibitem{GK06}
D.~Gioev and I.~Klich: \emph{Entanglement entropy of fermions in any dimension
and the Widom conjecture}. Phys. Rev. Lett. \textbf{96}, 100503, 4 pp. (2006)
 
\bibitem{HP03}  
F.~Hansen, G.~K. Pedersen: \emph{Jensen's operator inequality}. 
Bull. London Math. Soc. \textbf{35}, 553--564 (2003)
 
\bibitem{HHHH09}
R.~Horodecki, P.~Horodecki, M.~Horodecki, and K. Horodecki: \emph{Quantum entanglement}. Rev. Mod. Phys. \textbf{81},
  865--942 (2009)

\bibitem{LapSaf1996}
A.~Laptev and Y.~Safarov: \emph{Szeg\H{o} type limit theorems}. J. Funct. Anal.
  \textbf{138}, 544--559 (1996)

\bibitem{LSS14}
H.~Leschke, A.~V. Sobolev, and W.~Spitzer: \emph{Scaling of R\'enyi
  entanglement entropies of the free Fermi-gas ground state: a rigorous
  proof}. Phys. Rev. Lett. \textbf{112}, 160403, 5 pp. (2014)

\bibitem{LeSpSo2}
H.~Leschke, A.~V. Sobolev, and W.~Spitzer: \emph{Large-scale behaviour of local
and entanglement entropy of the free Fermi gas at any temperature}. J. Phys. A:   Math.~Theor. \textbf{49}, 30LT04, 9 pp. (2016); Corrigendum: \textbf{50}, 129501, 1 p. (2017)

\bibitem{LeSpSo3}
H.~Leschke, A.~V. Sobolev, and W.~Spitzer: \emph{Trace formulas for
  Wiener--Hopf operators with applications to entropies of free fermionic
  equilibrium states}. J. Funct. Anal. \textbf{273}, 1049--1094 (2017)
  
\bibitem{Rocca84}
R.~Roccaforte: \emph{Asymptotic expansions of traces for certain convolution operators}.
 Trans. Amer. Math. Soc. \textbf{285}, 581--602 (1984)  

\bibitem{Simon2019}
B.~Simon: \emph{Loewner's Theorem on Monotone Matrix Functions}. Springer Nature, Cham, 2019

\bibitem{Sobolev2013}
A.~V. Sobolev: \emph{Pseudo-Differential Operators with Discontinuous Symbols:
  Widom's Conjecture}. \emph{Mem. Amer. Math. Soc.} vol. 222 (1043), vi+104 pp., AMS, Providence, Rhode Island, 2013

\bibitem{Sobolev2014}
A.~V. Sobolev: \emph{On the Schatten--von Neumann properties of some
  pseudo-differential operators}. J. Funct. Anal. \textbf{266},
  5886--5911 (2014)

\bibitem{S15}
A.~V. Sobolev: \emph{Wiener--Hopf operators in higher dimensions: the
  Widom conjecture for piece-wise smooth domains}. Integr. Equ. Oper. Theory
  \textbf{81}, 435--449 (2015)

\bibitem{Sobolev2016}
A.~V. Sobolev: \emph{Functions of self-adjoint operators in ideals of compact
  operators}. J. London Math. Soc. (2) \textbf{95}, 157--176 (2017)

\bibitem{S17}
A.~V. Sobolev: \emph{Quasi-classical asymptotics for functions of
  Wiener--Hopf operators: smooth versus non-smooth symbols}. Geom. Funct.
  Anal. \textbf{27}, 676--725 (2017)

\bibitem{Sobolev2019}
A.~V. Sobolev: \emph{On Szeg\H{o} formulas for truncated Wiener--Hopf
  operators}. Integr. Equ. Oper. Theory \textbf{91}, 1, 28 pp. (2019)

\bibitem{W60}
H.~Widom: \emph{A theorem on translation kernels in n dimensions}.
  Trans. Amer. Math. Soc. \textbf{94}, 170--180 (1960)

\bibitem{W82}
H.~Widom: \emph{On a class of integral operators with discontinuous symbol},
pp. 477--500 in: \emph{Toeplitz Centennial}, ed. by I. Gohberg, 
\emph{Operator Theory: Advances and Applications} vol. 4, Birkh\"auser, Basel, 1982

\bibitem{Widom1982}
H.~Widom: \emph{A trace formula for Wiener--Hopf operators}. J. Oper. Theory
  \textbf{8}, 279--298 (1982)

\bibitem{Widom1985}
H.~Widom: \emph{Asymptotic Expansions for Pseudodifferential Operators on
  Bounded Domains}. \emph{Lecture Notes in Mathematics} vol. 1152,
  Springer, Berlin, 1985

\bibitem{W90}
H.~Widom: \emph{On a class of integral operators on a half-space with
  discontinuous symbol}. J. Funct. Anal. \textbf{88}, 166--193 (1990)

\end{thebibliography}

\end{document}